\newcommand{\lv}[1]{#1}
\newcommand{\sv}[1]{}
\newtheorem{THE}{Theorem}
\newtheorem{LEM}{Lemma}
\newtheorem{ourfact}{Fact}
\newcommand{\tw}{\text{\normalfont{\bfseries tw}}}
\newcommand{\fvs}{\text{\normalfont{\bfseries fvs}}}
\newcommand{\und}[1]{\overline{#1}}
\newcommand{\PP}{\mathcal{P}}
\newcommand{\QQ}{\mathcal{Q}}
\newcommand{\cF}{\mathcal{F}}
\newcommand{\SSS}{\mathcal{S}}
\newcommand{\Card}[1]{|#1|}
\newcommand{\Nat}{\mathbb{N}}
\newcommand{\bigoh}{\mathcal{O}}
\newcommand{\cc}[1]{{\mbox{\textnormal{\textsf{#1}}}}\xspace}
\newcommand{\NP}{\cc{NP}}
\newcommand{\FPT}{\cc{FPT}}
\newcommand{\XP}{\cc{XP}}
\newcommand{\Weft}{{\cc{W}}}
\newcommand{\W}[1]{{\Weft}{\normalfont{[#1]}}}
\newcommand{\hy}{\hbox{-}\nobreak\hskip0pt}
\newcommand{\concat}{\circ}
\newcommand{\bigunion}[1]{\widetilde{#1}}
\newcommand{\III}{\mathcal{I}}
\newcommand{\SB}{\{\,}%
\newcommand{\SM}{\;{|}\;}%
\newcommand{\SE}{\,\}}%
\newcommand{\pbDef}[3]{%
\noindent
\begin{center}
\begin{boxedminipage}{0.98 \columnwidth}
#1\\[5pt]
\begin{tabular}{l p{0.70 \columnwidth}}
Input: & #2\\
Question: & #3
\end{tabular}
\end{boxedminipage}
\end{center}
}
\newcommand{\pbDefP}[4]{%
\noindent
\begin{center}
\begin{boxedminipage}{0.98 \columnwidth}
#1\\[5pt]
\begin{tabular}{l p{0.70 \columnwidth}}
Input: & #2\\
Parameter: & #3\\
Question: & #4
\end{tabular}
\end{boxedminipage}
\end{center}
}
\begin{document}

\title{On Structural Parameterizations of the Edge Disjoint Paths Problem}

\author{Robert Ganian}
\author{Sebastian Ordyniak}
\author{M. S. Ramanujan}

\affil{Algorithms and Complexity group, TU Wien, Vienna, Austria \texttt{\{ganian,ordyniak,ramanujan\}@ac.tuwien.ac.at}}

\authorrunning{R. Ganian, S. Ordyniak, and M.S. Ramanujan}
\titlerunning{\sv{On Structural Paramerizations of the Edge Disjoint
    Paths Problem}\lv{On Structural Paramerizations of EDP}}
\subjclass{F.2 Analysis of Algorithms and Problem Complexity, G.2.1 Combinatorics}
\keywords{edge disjoint path problem, feedback vertex set, treewidth,
  fracture number, parameterized complexity}

\maketitle
\date{ }

\begin{abstract}
  In this paper we revisit the classical Edge Disjoint Paths (EDP) problem,
  where one is given an undirected graph $G$ and a set of terminal
  pairs $P$ and asks whether $G$ contains a set of pairwise
  edge-disjoint paths connecting every terminal pair in $P$. Our focus
  lies on structural parameterizations for the problem that allow for
  efficient (polynomial-time or fpt) algorithms. As our
  first result, we answer an open question stated in Fleszar, Mnich,
  and Spoerhase (2016), by showing that the problem can be solved in
  polynomial time if the input graph has a feedback vertex set of size
  one. We also show that EDP parameterized by the treewidth and the
  maximum degree of the input graph is fixed-parameter tractable.

  Having developed two novel algorithms for EDP using 
  structural restrictions on the input graph, we then turn our
  attention towards the augmented graph, i.e., the graph obtained
  from the input graph after adding one edge between every terminal
  pair. In constrast to the input graph, where EDP is known to remain
  NP-hard even for treewidth two, a result by Zhou et al. (2000) shows
  that EDP can be solved in non-uniform polynomial time if the augmented graph has
  constant treewidth; we note that the possible improvement of this result to an 
  fpt-algorithm has remained open since then. We show that this is 
  highly unlikely by establishing the \W{1}-hardness of the problem 
  parameterized by the treewidth (and even feedback vertex set) of the augmented graph. Finally, 
  we develop an fpt-algorithm for EDP by exploiting a novel structural 
  parameter of the augmented graph. 

\end{abstract}

\section{Introduction}
\label{sec:intro}

The {\sc Edge Disjoint Paths} ({\sc EDP}) and {\sc Node Disjoint Paths} ({\sc NDP}) are fundamental routing graph problems. In the {\sc EDP} ({\sc NDP}) problem the input is a graph $G$, and a set $P$ containing $k$ pairs of vertices and the objective is to decide whether there is a set of $k$ pairwise edge disjoint (respectively vertex disjoint) paths connecting each pair in $P$.
These problems and their optimization versions -- {\sc MaxEDP} and
{\sc MaxNDP} -- have been at the center of numerous results in structural
graph theory, approximation algorithms, and parameterized
algorithms~\cite{RobertsonS95b,KawarabayashiKK14,ChekuriKS06,KolliopoulosS04,EneMPR16,ZhouTN00,NishizekiVZ01,GargVY97,FleszarMS16}.

When $k$ is a part of the
input, both EDP and NDP are known to be NP-complete~\cite{Karp75}. Robertson and Seymour's seminal work in the Graph Minors project~\cite{RobertsonS95b} provides an $\bigoh(n^3)$ time algorithm for both problems for every fixed value of $k$. In the realm of  {\em Parameterized Complexity}, their result can be interpreted as {\em fixed-parameter} algorithms for EDP and NDP   parameterized by $k$.
Here, one considers problems associated with a certain numerical parameter $k$ and the central question is whether the problem can be solved in time $f(k)\cdot n^{\bigoh(1)}$ where $f$ is a computable function and $n$ the input size; algorithms with running time of this form are called fpt-algorithms~\cite{FlumGrohe06,DowneyFellows13,CyganFKLMPPS15}.


While the aforementioned research considered the number of paths to be the parameter, 
another line of research investigates the effect of \emph{structural parameters} of the input graphs on the complexity of these problems. 
Fleszar, Mnich, and Spoerhase~\cite{FleszarMS16} initiated the study of NDP and EDP parameterized by the feedback vertex set number (the size of the smallest feedback vertex set) of the input graph and showed that 
EDP remains NP-hard even on graphs with  feedback vertex set number two. 
Since EDP is known to be polynomial time solvable on forests~\cite{GargVY97}, this left only the case of feedback vertex set number one open, which they conjectured to be polynomial time solvable. Our first result is a positive resolution of their conjecture. 

\begin{THE}\label{thm:main_theorem}
  {\sc EDP} can be solved in time 
  $\bigoh(|P||V(G)|^{\frac{5}{2}})$ on graphs with feedback vertex set number one.
\end{THE}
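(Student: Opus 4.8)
The plan is to exploit that $G$ consists of a forest $F := G - v$ together with a single extra vertex $v$, and that \textsc{EDP} on a forest is trivial: two terminals either lie in the same tree of $F$ (and are then joined by a unique path) or in different trees (and cannot be joined), so one merely checks pairwise edge-disjointness of the unique terminal paths. Hence all difficulty is concentrated at $v$. Call the edges incident to $v$ the \emph{ports}. First I would record the structural observations pinning down the shape of a solution: every path either (a) stays inside one tree of $F$ and is then the unique tree-path between its endpoints, or (b) passes through $v$, in which case it has the form $s - \cdots - a - v - b - \cdots - t$ where $va,vb$ are ports and the two flanking arcs are unique paths in $F$ (with the natural degenerate cases when a terminal equals $v$ or coincides with a port endpoint). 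A pair whose terminals lie in distinct trees of $F$, or that has $v$ as a terminal, is forced into case (b); a pair with both terminals in the same tree may use either.

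Next I would show that the instance \emph{decomposes over the trees of $F$}: different trees interact only through $v$, and this interaction costs nothing to arrange. In a tree $T$ the relevant data are (i) a set of \emph{half-demands}, namely the terminals in $T$ belonging to pairs forced into case (b), each of which must be linked by an edge-disjoint path inside $T$ to the $T$-endpoint of some port, with distinct half-demands using distinct ports; and (ii) the \emph{same-tree pairs}, both of whose terminals lie in $T$, each satisfied either by reserving its unique internal path in $T$, or by turning into two half-demands. One checks that once every tree's subproblem is solved, the pieces glue together at $v$ into a global solution, since distinct ports in each tree translate into distinct port-edges at $v$. The half-demand part of one tree's subproblem, ignoring same-tree pairs, is a plain single-commodity flow: give the edges of $T$ unit capacity, add a super-source feeding each half-demand's terminal and a super-sink drained through each port, and test whether all half-demands can be simultaneously routed.

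The main obstacle is the binary routing choice for the same-tree pairs: routing such a pair internally consumes a specific path of $T$, while routing it through $v$ consumes two ports and two arcs towards them, and flipping one pair's choice can force flips elsewhere, so neither brute force over the $2^{|P|}$ possibilities nor an obvious greedy rule suffices. The heart of the proof is to show these choices can still be made in polynomial time, which I would do by encoding a whole tree's subproblem -- the forced half-demands together with a choice gadget for each same-tree pair -- as one bounded-capacity flow (or $b$-matching) instance on an auxiliary graph built from $T$ and its ports, after first simplifying $T$ (e.g.\ suppressing degree-two vertices incident to no terminal or port), and then proving by an exchange/augmenting-path argument that feasibility of this auxiliary instance is equivalent to routability of the original subproblem.

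Finally, the running time: the algorithm reduces to solving a bounded number of maximum-flow (equivalently, unit-capacity matching-type) instances, essentially one per tree, on graphs with $\bigoh(|V(G)|)$ vertices and $\bigoh(|V(G)|^{2})$ edges, while iterating over the at most $|P|$ terminal pairs to build and verify them; a Hopcroft--Karp-style $\bigoh(E\sqrt{V})$ bound per instance yields the claimed $\bigoh(|P|\,|V(G)|^{5/2})$.
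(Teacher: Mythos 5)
Your decomposition over the trees of $F=G-v$ and the observation that the pieces glue freely at $v$ (distinct port edges suffice) is sound and is exactly the paper's first step (its Lemma~\ref{lem:dp-sol-cri}: the instance is feasible iff every tree can connect each of its terminals either to its partner inside the tree or to $v$ by edge-disjoint paths). You also correctly identify the crux: the per-tree subproblem with the binary choice for same-tree pairs. But that crux is precisely where your proposal has a genuine gap. You assert that the whole per-tree subproblem can be encoded as ``one bounded-capacity flow (or $b$-matching) instance'' with a choice gadget per pair, justified by an unspecified exchange/augmenting-path argument, yet no gadget is constructed and the natural encodings are unsound. Reserving the unique internal path of a pair is not a flow-conservation constraint; and if you instead model internal routing by making one terminal of each pair a source and the other a sink of a single commodity, the flow may realize \emph{crossed} pairings that cannot be uncrossed in a tree. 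Concretely, take the path $v_1v_2v_3v_4$ with pendant terminals $s_1$ at $v_1$, $t_2$ at $v_2$, $s_2$ at $v_3$, $t_1$ at $v_4$ and no ports: the crossed routing ($s_1$--$t_2$ and $s_2$--$t_1$) is edge-disjoint, so the flow relaxation is feasible, but the actual pairs $s_1t_1$, $s_2t_2$ cannot be routed edge-disjointly. So the exchange argument you lean on fails, and the claimed equivalence is exactly the hard content of the theorem, left unproven.

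For comparison, the paper does not reduce a tree's subproblem to one static network. It roots each tree and runs a bottom-up dynamic program that records, for every subtree $T_t$, which of three connectivity states hold ($\gamma_\emptyset$-connected, $\gamma_X$-connected, or $p$-connected for a terminal pair $p$), and the transition at an internal node is itself a combinatorial subproblem: an auxiliary graph on the children is built and a maximum matching maximizing coverage of the ``deficient'' children is computed (Lemmas~\ref{lem:charinnervertex}--\ref{lem:comp-labels-inner}), solved via maximum weight matching in $\mathcal{O}(\sqrt{|V|}\,|E|)$ time; summing over nodes gives the $\mathcal{O}(|P|\,|V(G)|^{5/2})$ bound. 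Your running-time paragraph is reverse-engineered to the target bound, but without a concrete and provably correct per-tree algorithm the proof does not go through; to salvage your route you would need to exhibit the gadgeted auxiliary instance explicitly and prove both directions of the equivalence, which is essentially as hard as (and likely amounts to rediscovering) the paper's DP-plus-matching argument.
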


A key observation behind the polynomial-time algorithm is that an EDP instance
with a feedback vertex set $\{x\}$ is a yes-instance if and only if, for every tree $T$ of $G-\{x\}$, it is possible to connect all terminal pairs in $T$ either to each other or to $x$ through pairwise edge disjoint paths in $T$.The main ingredient of the algorithm is then a dynamic programming
procedure that determines whether such a set $S_T$ exists for a tree
$T$ of $G-\{x\}$.

Continuing to explore structural parameterizations for the input graph
of an EDP instance, we then show that even though EDP is NP-complete when the
\emph{input graph} has treewidth two, it becomes fixed-parameter tractable if
we additionally parameterize by the maximum degree.
\begin{THE}\label{thm:treewidth_theorem}
  \textsc{EDP} is fixed-parameter tractable parameterized by the
  treewidth and the maximum degree of the input graph. 
\end{THE}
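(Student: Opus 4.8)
The plan is to solve \textsc{EDP} by dynamic programming along a tree decomposition of the input graph $G$. We first compute, in time $f(\tw(G))\cdot|V(G)|^{\mathcal{O}(1)}$, a nice tree decomposition $(T,\{X_t\}_{t\in V(T)})$ of $G$ of width $w=\tw(G)$, in the standard refinement with leaf, introduce-vertex, introduce-edge, forget, and join nodes in which every edge of $G$ is introduced exactly once. For a node $t$ let $G_t$ be the subgraph of $G$ formed by the vertices and edges introduced at descendants of $t$; recall that $X_t$ separates $V(G_t)\setminus X_t$ from the rest of $G$.

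The bound on the maximum degree $\Delta$ is used only to keep the table small, via the following observation. Fix any solution---a family of pairwise edge-disjoint paths, one per pair of $P$---and restrict it to the edges of $G_t$; each path then breaks into \emph{sub-paths} whose endpoints lie in $X_t$ or are terminals contained in $G_t$ (a non-terminal vertex of $G_t-X_t$ cannot be a sub-path endpoint, since by the time it is forgotten all of its incident edges already lie in $G_t$). Call a sub-path \emph{pending} if it is not already a complete terminal-to-terminal path. A pending sub-path is either a single bag vertex (one per not-yet-routed terminal in $X_t$) or uses an edge incident to $X_t$; since there are at most $(w+1)\Delta$ such edges and each is used at most once, the number of pending sub-paths at $t$ is $\mathcal{O}(w\Delta)$.

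This motivates the following table. A \emph{state} at $t$ records (i) the number of pairs of $P$ already completely routed inside $G_t$; (ii) for each vertex of $X_t$, how many of its incident edges the partial solution already uses; and (iii) the combinatorial pattern of the $\mathcal{O}(w\Delta)$ pending sub-paths, i.e.\ for each of them its two endpoints, where an endpoint is either a concrete vertex of $X_t$ or---if it is an already-forgotten terminal---an abstract label from a fixed $\mathcal{O}(w\Delta)$-element set recording which pair of $P$ it belongs to. The number of states is bounded by a function of $w$ and $\Delta$ times a polynomial in $|P|$, and we call a state \emph{feasible} if it is realised by some family of pairwise edge-disjoint partial paths in $G_t$. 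Computing the feasible states bottom-up is routine at leaf, introduce-vertex, and forget nodes (at a forget node we reject every state in which the forgotten vertex is a non-terminal endpoint of a pending sub-path, and for a forgotten terminal still serving as such an endpoint we replace it there by a fresh abstract label), and at introduce-edge nodes (we branch on whether the new edge is unused, starts a new pending sub-path, extends one, or merges two of them---in the last case possibly completing a pair, which we verify via the abstract labels, while forbidding the creation of a closed walk). The delicate case, and the main obstacle, is the \emph{join} node: from a feasible state of each child we must produce the feasible states of the parent by gluing the two partial solutions along the common bag---matching up which pending sub-path endpoints of the two sides meet at a common bag vertex, guessing (over boundedly many options) which abstract labels of the two children denote the same pair, detecting the pairs that thereby become complete, and rejecting any combination that closes a walk prematurely or would identify terminals of different pairs. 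Here edge-disjointness and the degree bounds are preserved automatically, since the two children's edge sets are disjoint; the work lies in getting this bookkeeping right. Finally, since the root bag is empty, $(G,P)$ is a yes-instance if and only if the root admits a feasible state with no pending sub-path in which all $|P|$ pairs are complete; correctness follows by induction on $T$ and the total running time is $f(w,\Delta)\cdot n^{\mathcal{O}(1)}$.

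As an alternative one can sidestep the hand-crafted table entirely: the classical reduction from \textsc{EDP} to \textsc{Node Disjoint Paths}---subdivide every edge and replace each vertex $v$ by a clique on $\deg(v)$ vertices, one per incident edge---turns $G$ into a graph $G'$ on $\mathcal{O}(\Delta\,|V(G)|)$ vertices with $\tw(G')=\mathcal{O}(w\Delta+w^2)$ and the same answer, after which the \textsc{Node Disjoint Paths} instance is solved by the analogous tree-decomposition dynamic program for vertex-disjoint paths on bounded-treewidth graphs---which is fixed-parameter tractable, the bounded-interface observation now being that at most $w+1$ vertex-disjoint paths can meet any bag.
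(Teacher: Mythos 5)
Your overall architecture (dynamic programming over a nice tree decomposition, with the degree bound guaranteeing at most $(w+1)\Delta$ pending sub-paths per bag) is the same as the paper's, but the step you yourself call delicate is exactly where the proposal breaks down: how a pending sub-path whose free end is an \emph{already-forgotten terminal} is represented. As written the state definition is inconsistent --- a label drawn from a fixed $\mathcal{O}(w\Delta)$-element set cannot ``record which pair of $P$ it belongs to'', because the set of pairs that currently have forgotten pending terminals is not determined by the node and ranges over all of $P$. If the label genuinely encodes the pair index, then a state contains up to $\Theta(w\Delta)$ pair indices and the table has $|P|^{\Theta(w\Delta)}=n^{\Theta(w\Delta)}$ entries, i.e.\ an XP bound, contradicting your claim that the number of states is $f(w,\Delta)\cdot\mathrm{poly}(|P|)$. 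If instead the labels are anonymous tokens, correctness fails precisely where you invoke them: at an introduce-edge node there is nothing to ``verify via the abstract labels'' when two pending sub-paths are merged, and at a join node ``guessing which abstract labels of the two children denote the same pair'' is unsound, since a wrong guess cannot be detected from the states and produces false positives --- gluing the sub-path of terminal $a$ (of pair $ab$) to that of terminal $d$ (of pair $cd$) is not interchangeable with the intended routing, because edge-disjoint $a$--$d$ and $c$--$b$ paths do not in general yield edge-disjoint $a$--$b$ and $c$--$d$ paths.

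The missing idea is the case split that the paper builds into its records $(\textsf{used},\textsf{give},\textsf{single})$. For a forgotten terminal whose partner lies \emph{outside} $Y_t$ the identity must be kept, but this causes no blow-up: the set of such terminals is determined by the node $t$ itself (and if it exceeds $(w+1)\Delta$ one rejects outright), so only the bag vertex its sub-path attaches to is a degree of freedom --- this is \textsf{single}. For a pair \emph{both} of whose terminals already lie in $Y_t$ and are routed through the boundary, the identity can and must be discarded: the rest of the graph only needs to know that some two bag vertices $x_1,x_2$ still have to be linked edge-disjointly, a demand symmetric in the pair, so one stores just the multiset of such unordered bag-vertex pairs --- this is \textsf{used} (with \textsf{give} counting boundary-to-boundary paths supplied). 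With this split the updates at forget and join nodes are deterministic, no guessing is needed, and the table size is $2^{\mathcal{O}(\Delta w^2)}$ as in the paper. Your fallback reduction to \textsc{Node Disjoint Paths} does not rescue the argument as written: the reduction and the treewidth bound $\mathcal{O}(w\Delta)$ are fine, but it silently relies on NDP with an unbounded number of terminal pairs being fixed-parameter tractable in treewidth; that is a true but nontrivial known result whose proof requires exactly the same forgotten-terminal bookkeeping, so asserting ``the analogous dynamic program'' with only the bounded-interface observation reproduces the gap rather than closing it.
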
 

Having explored the algorithmic applications of structural restrictions on the input
graph for EDP, we then turn our attention towards similar restrictions
on the \emph{augmented graph} of an EDP instance $(G,P)$, i.e., the graph obtained
from $G$ after adding an edge between every pair of terminals in
$P$. Whereas EDP is NP-complete even if the input
graph has treewidth at most two~\cite{NishizekiVZ01}, it can be solved
in non-uniform polynomial time if the treewidth of the augmented graph is
bounded~\cite{ZhouTN00}. It has remained open whether
EDP is fixed-parameter tractable parameterized by the treewidth of the
augmented graph; interestingly, this has turned out to be the case for
the strongly related multicut problems~\cite{GottlobL07}. Surprisingly,
we show that this is not the case for EDP, by establishing the \W{1}-hardness 
of the problem parameterized by not only the treewidth but also 
by the feedback vertex set number of the augmented graph.
%
\begin{THE}\label{the:hard-edp-fvs}
  EDP is \W{1}-hard parameterized by the feedback vertex set number
  of the augmented graph.
\end{THE}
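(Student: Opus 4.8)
The plan is to give a parameterized reduction from the \W{1}-hard problem \textsc{Grid Tiling with $\le$}, parameterized by the grid size $k$~\cite{CyganFKLMPPS15}. From such an instance --- a $k\times k$ array of sets $S_{i,j}\subseteq[N]\times[N]$, asking for a choice $(a_{i,j},b_{i,j})\in S_{i,j}$ for every cell with $a_{i,j}\le a_{i,j+1}$ along each row and $b_{i,j}\le b_{i+1,j}$ along each column --- I would build in polynomial time an EDP instance $(G,P)$ together with a set $X\subseteq V(G)$ of size $\bigoh(k^2)$ such that $G^+-X$ is a forest; this yields $\fvs(G^+)\le|X|=\bigoh(k^2)$, and since bounded feedback vertex set implies bounded treewidth it simultaneously gives the treewidth version. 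Two observations fix the terms of the problem. First, every feedback vertex set of $G^+$ is a feedback vertex set of $G$, so $\fvs(G)\le\fvs(G^+)$; in particular \paraNP-hardness is out of reach here, since EDP on instances whose augmented graph has bounded treewidth (hence bounded $\fvs$) is in \XP~\cite{ZhouTN00} --- which is exactly why $|X|$ must be allowed to grow with $k$. Second, $|P|$ must be \emph{polynomial in $N$}: were it bounded by a function of $k$, the algorithm of Robertson and Seymour~\cite{RobertsonS95b} would decide the instance in fpt time and the theorem would fail.

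Because of the second observation, the value chosen in a cell cannot be carried by a single terminal pair. Instead I would attach to each cell $(i,j)$ a \emph{bundle} of $\Theta(N)$ terminal pairs and let the ``breakpoint'' in the routing of this bundle --- how many of its pairs take a designated long route rather than a short one --- encode $(a_{i,j},b_{i,j})$. The inequality $a_{i,j}\le a_{i,j+1}$ would be enforced by a \emph{meter gadget}: a long path shared by the two adjacent bundles, into which cell $(i,j)$ reserves a \emph{prefix} of edges of length $a_{i,j}$ while cell $(i,j+1)$ reserves a \emph{suffix} of edges of length $N-a_{i,j+1}$; these two reservations are edge-disjoint exactly when $a_{i,j}\le a_{i,j+1}$, and --- crucially --- each is a \emph{contiguous} sub-path, hence realizable by a route that enters and leaves the meter only once. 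Column inequalities would be handled symmetrically, and membership in $S_{i,j}$ would be wired into a small gadget coupling the two breakpoints of cell $(i,j)$. Every meter and every cell gadget is a caterpillar, so it contains no cycle; the only vertices through which cycles of $G$, and of $G^+$, can pass are the $\bigoh(k^2)$ \emph{junction vertices} at which the $\bigoh(k^2)$ meters meet the cell gadgets, and these constitute $X$. Finally I would place the $\Theta(k^2N)$ terminal pairs so that the two endpoints of each pair lie in distinct trees of the forest $G-X$ and so that the demand edges, after contracting these trees, again form a forest; this is what makes $G^+-X$ acyclic.

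I would then establish the equivalence. For the forward direction, from a valid tiling I route each bundle at the breakpoint dictated by $(a_{i,j},b_{i,j})$: the row and column inequalities say precisely that on every meter the prefix and the suffix reservation are disjoint, so the routes are pairwise edge-disjoint, and the short routes of the unused bundle pairs are disjoint by construction. For the converse, I argue that any feasible routing induces, for every cell, a well-defined breakpoint and hence candidate values $(a_{i,j},b_{i,j})$; that these values lie in $S_{i,j}$ because of the coupling gadget; and that edge-disjointness on each meter forces the associated inequality --- so the extracted values form a valid tiling. A routine check that $X$ meets every cycle of $G^+$ completes the reduction.

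The step I expect to be the main obstacle is keeping $\fvs(G^+)=\bigoh(k^2)$ while still forcing $N$-way choices and verifying relations among them. The usual devices for this --- a bundle of $N$ internally disjoint alternative routes, a two-dimensional verification gadget, or a capacity realized by $N$ parallel paths --- each create $\Omega(N)$ independent cycles and therefore a feedback vertex set of size $\Omega(N)$, which would be fatal. Reducing from the \emph{monotone} problem \textsc{Grid Tiling with $\le$} is what makes it possible to avoid this: a comparison turns into two \emph{contiguous} reservations on a path, every gadget is a caterpillar, and all cycles are funnelled through the $\bigoh(k^2)$ junctions. The remaining delicate part is to organize the polynomially many terminal pairs of the bundles --- whose demand edges would naively reintroduce many cycles --- inside a forest of $\Theta(k^2N^2)$ small trees whose contracted demand graph is acyclic, and to check that this is consistent with the routing semantics of the meter and cell gadgets; that bookkeeping is where most of the technical effort goes.
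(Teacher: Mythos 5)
Your plan hinges on two gadgets that are never constructed, and at least one of them is exactly the kind of device you yourself argue cannot exist in this setting. Switching to \textsc{Grid Tiling with $\le$} makes the \emph{inter-cell} constraints monotone, but the sets $S_{i,j}\subseteq[N]\times[N]$ remain arbitrary, so inside every cell you still need an $N^2$-way choice together with a membership test coupling the row- and column-breakpoints; this is precisely the ``two-dimensional verification gadget'' that, as you note, naturally generates $\Omega(N)$ vertex-disjoint cycles and hence a feedback vertex set of size $\Omega(N)$. Calling it a ``small gadget'' does not resolve this tension, and no caterpillar implementation is offered. The second unproved mechanism is the coupling between a bundle's breakpoint and the claimed \emph{contiguous prefix} reservation on a meter: in EDP nothing a priori forces the long-routed pairs of a bundle to occupy a prefix (or even a contiguous segment) of the meter rather than an arbitrary subset of its edges, so the reverse direction's ``any feasible routing induces a well-defined breakpoint'' is an assertion, not an argument. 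Finally, the placement of the $\Theta(k^2N)$ demand edges so that $G^P-X$ remains a forest is deferred as bookkeeping, yet with $\Theta(N)$ pairs per cell this is exactly where unbounded collections of cycles tend to reappear unless all attachments are funnelled through hub vertices of $X$. As it stands the proposal is a research plan with the hardest steps left open, not a proof.

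For contrast, the paper's proof never tries to encode an $N$-way choice geometrically inside a bounded-feedback-vertex-set region. It reduces \textsc{Multicolored Clique} to \textsc{Multidimensional Subset Sum} with unary weights (Sidon sequences make vertex-pair sums unique), relaxes this to \textsc{MRSS}, and then encodes \textsc{MRSS} into a \emph{multiplicity} version of directed EDP in which selecting an item corresponds to deciding whether an entire path gadget is saturated by hub-to-hub paths; verification is purely arithmetic (counting, per dimension, how many $s_i$-$t_i$ paths exist), so no per-cell membership gadget is needed. Directedness is removed via Vygen's Eulerian augmentation, and only in the last step are multiplicities expanded into polynomially many terminal pairs, all of them pendant on a bounded set of hub vertices which are simply added to the feedback vertex set. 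If you want to salvage the grid-tiling route, the missing ingredient is an $S_{i,j}$-membership gadget all of whose cycles pass through $\bigoh(1)$ vertices of $X$; until you can exhibit one, the reduction does not go through.
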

Motivated by this strong negative result, our next aim was to find
natural structural parameterizations for the augmented graph of an EDP
instance for which the problem becomes fixed-parameter tractable.
Towards this aim, we
introduce the \emph{fracture
  number}, which informally corresponds to the size of a minimum
vertex set $S$ such that the size of every component in the graph
minus $S$ is small (has size at most $|S|$). We show that EDP is
fixed-parameter tractable parameterized by this new parameter.

\begin{THE}\label{thm:fracture_theorem}
  \textsc{EDP} is fixed-parameter tractable parameterized by the fracture number of the augmented graph. 
\end{THE}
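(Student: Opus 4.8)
The plan is to design a two-phase fpt-algorithm: first guess the "global" behaviour of a solution on the small separator $S$ (whose size is the fracture number $k$), and then handle each small component independently via brute force. Let $S$ be a fracture modulator of the augmented graph $G^+$, so $|S|\le k$ and every component of $G^+-S$ has at most $k$ vertices; note $S$ can be computed in fpt time by a standard branching argument (whenever a component is too large, branch on which of its $\le k+1$ "excess" vertices enters $S$), and that $S$ is also a small vertex set in $G$ whose removal leaves components of bounded size, except that terminal-pair edges may join several such components into one block of $G^+-S$. The key structural observation is that any path in a solution either stays inside the closed neighbourhood of a single component-block of $G^+-S$, or it repeatedly crosses $S$; in the latter case it is naturally decomposed into $\bigl.O(k)\bigr.$ subpaths, each of which either lies within one block or is a single edge incident to $S$.

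\newcommand{\Gplus}{G^{+}}

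The main steps are as follows. First I would enumerate all \emph{interface patterns}: a pattern specifies, for the bounded-size set $S$, how the solution paths interact with $S$ — concretely, an ordered partition of a multiset of "path fragments" where each fragment records its two endpoints in $S\cup\{\text{terminal}\}$ and the block it is routed through, together with how fragments concatenate at vertices of $S$. Since $|S|\le k$ and each of the $|P|$ paths crosses $S$ at most $O(k)$ times before it is forced to repeat (or else we can shortcut it), the number of fragments per path is bounded by a function of $k$; however $|P|$ is not bounded, so the pattern cannot record each path individually. Instead the pattern records only the \emph{number} of fragments of each "fragment type", where a type is determined by its endpoints in $S$ and which block it uses — there are at most $f(k)$ types — and this count is an integer in $\{0,\dots,|E(G)|\}$, giving $|E(G)|^{f(k)}$ patterns. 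Second, for each pattern I would verify consistency at $S$ (a fragment-multiset can be glued into the required terminal-to-terminal paths) by a bounded matching/flow computation on the $\le k$ vertices of $S$. Third, for each block $B$ of $G^+-S$ (which has $O(k)$ vertices, hence $O(k^2)$ edges), the pattern dictates a bounded list of routing demands internal to $B$ together with its attachment edges to $S$; since $B$ is of bounded size, whether these demands are simultaneously realizable by edge-disjoint paths inside $N[B]$ can be decided by brute force in time depending only on $k$. Finally, blocks with the same isomorphism-plus-attachment type and the same demanded routing can be treated uniformly, so it suffices to check feasibility once per type and confirm the pattern's fragment counts can be distributed over the available blocks — again a bounded bookkeeping step. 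Accepting iff some pattern passes all checks yields the algorithm, with running time $|E(G)|^{f(k)}\cdot n^{O(1)}$, which is fpt.

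The main obstacle I anticipate is getting the interface pattern \emph{rich enough to be sufficient yet coarse enough to be fpt-many}. Because $|P|$ is unbounded, a naive pattern that tracks individual paths explodes; the fix sketched above — tracking only counts of $O(k)$-many fragment types — requires an exchange/normalization argument showing that any solution can be rerouted so that (i) each path crosses $S$ a bounded number of times, and (ii) two fragments of the same type are freely interchangeable between paths, so that only the total count per type matters for gluability at $S$. Proving (i) is the classical "a path visiting the same vertex twice can be shortened" idea applied to $S$; proving (ii) needs care because swapping fragments between two paths must preserve edge-disjointness and reconnect the right terminals, which is where the ordered-partition/gluing structure of the pattern does the work. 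Once this normalization is in place, the component-wise brute force and the final counting argument are routine.
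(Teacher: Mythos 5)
Your overall architecture (small modulator $X$, components of size at most $k$, record only how path fragments cross $X$, normalize so each path visits $X$ at most $k$ times, treat same-type fragments as interchangeable) is essentially the same decomposition the paper uses. However, there is a genuine gap at the very last step: you propose to \emph{enumerate} the interface patterns, where a pattern assigns to each of the $f(k)$ fragment types a multiplicity in $\{0,\dots,|E(G)|\}$, and you claim the resulting running time $|E(G)|^{f(k)}\cdot n^{\bigoh(1)}$ is fpt. It is not: since the multiplicities genuinely range up to $\Omega(n)$ (the number of terminal pairs, and hence of fragments of a given type, is unbounded in $k$), $|E(G)|^{f(k)}$ is an XP bound, not an fpt one. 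So as written the argument proves membership in XP (which was already known via Zhou et al.\ for bounded treewidth of the augmented graph) but not the theorem.

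The missing idea is to avoid enumerating the unbounded counts and instead decide their existence by integer linear programming with boundedly many variables. The paper does exactly this: for each component it computes a \emph{signature} (the set of admissible configurations, where a configuration consists of the traces of the component's own terminal paths through $X$ plus a supply function counting $a$--$b$ path segments the component can offer, for $a,b\in X$); it then introduces one ILP variable $z^{\eta}_{(\alpha,\beta)}$ counting how many components of signature $\eta$ are assigned configuration $(\alpha,\beta)$, with constraints that these counts partition the components of each signature and that, for every pair $\{a,b\}\subseteq X$, the aggregate demand (consecutive occurrences of $a,b$ in traces) does not exceed the aggregate supply. The number of variables is bounded by a function of $k$ only, so Lenstra's theorem gives an fpt decision procedure even though the variable values are unbounded. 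Note also that this formulation sidesteps your separate ``distribute fragment counts over available blocks'' bookkeeping step, because the per-component assignment of configurations is what the ILP variables encode directly; and your fragment types should reference vertex pairs of $X$ (and component \emph{types}), not individual blocks, since the number of blocks is unbounded. With the ILP replacement (and the correctness lemma that a valid configuration selector exists iff the instance is a yes-instance, which is the exchange argument you sketch), your proof plan matches the paper's; without it, the claimed fpt bound does not follow.
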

We note that the reduction
in~\cite[Theorem 6]{FleszarMS16} excludes the applicability of the
fracture number of the \emph{input graph} by showing that EDP is
NP-complete even for instances with fracture number at most three.
Finally, we complement Theorem~\ref{thm:fracture_theorem} by showing that bounding the number of
terminal pairs in each component instead of the its size is not
sufficient to obtain fixed-parameter tractability. \lv{Indeed, we show
that EDP is NP-hard even on instances $(G,P)$ where the augmented
graph $G^P$ has a deletion set $D$ of size $6$ such that every
component of $G^P \setminus D$ contains at most $1$ terminal pair.
We note that a parameter similar to the fracture number has recently
been used to obtain fpt-algorithms for Integer Linear
Programming~\cite{DEGetalIJCAI17}.}

\sv{\smallskip \noindent {\emph{Statements whose proofs are located in the appendix are marked with $\star$.}}}

\section{Preliminaries}
\label{sec:pre}

\subsection{Basic Notation} 
We use standard terminology for graph theory, see for
instance~\cite{Diestel10}.
Given a graph $G$, we let $V(G)$ denote its vertex set, $E(G)$ its
edge set and by $V(E')$ the set of vertices incident with the edges in
$E'$, where $E' \subseteq E(G)$.
The (open) neighborhood of a vertex $x \in V(G)$ is the set $\{y\in V(G):xy\in E(G)\}$ and is denoted by $N_G(x)$. For a vertex subset $X$, the neighborhood of $X$ is defined as $\bigcup_{x\in X} N_G(x) \setminus X$ and denoted by $N_G(X)$.
For a vertex set $A$, we use $G-A$ to denote the graph obtained from
$G$ by deleting all vertices in $A$, and we use $G[A]$ to denote the
\emph{subgraph induced on} $A$, i.e., $G- (V(G)\setminus A)$. A
\emph{forest} is a graph without cycles, and a vertex set $X$ is a
\emph{feedback vertex set} (\emph{FVS}) if $G-X$ is a forest. We use $[i]$ to
denote the set $\{0,1,\dots,i\}$. The \emph{feedback vertex set
  number} of a graph $G$, denoted by $\fvs(G)$, is the smallest
integer $k$ such that $G$ has a
feedback vertex set of size $k$.

\subsection{Edge Disjoint Path Problem}\label{ssec:edp}

\sv{In the \textsc{Edge Disjoint Paths (EDP)} problem, one is given an
undirected graph $G$ a set $P$ of terminal pairs (i.e., subsets of
$V(G)$ of size two) and the question is whether there is there a set
of pairwise edge disjoint paths connecting every set of terminal pairs 
in $P$.}
\lv{
  Throughout the paper we consider the following problem.
\pbDef{\textsc{Edge Disjoint Paths (EDP)}}
{A graph $G$ a set $P$ of terminal pairs, i.e., a set of subsets of $V(G)$ of size two.}
{Is there a set of pairwise edge disjoint paths connecting every set
  of terminal pairs in $P$?}
}
Let $(G,P)$ be an instance of EDP; for brevity, we will sometimes
denote a terminal pair $\{s,t\}\in P$ simply as $st$.
For a subgraph $H$ of $G$, we denote by $P(H)$ the subset of $P$
containing all sets that have a non-empty intersection with $V(H)$ and
for $P'\subseteq P$, we denote by $\bigunion{P'}$ the set $\bigcup_{p
  \in P'}p$.
 We will assume that, w.l.o.g., each vertex $v\in V(G)$ occurs in at
 most one terminal pair, each vertex in a terminal pair has degree $1$
 in $G$, and each terminal pair is not adjacent to each other; indeed,
 for any instance without these properties, we can add a new leaf
 vertex for terminal, attach it to the original terminal, and replace
 the original terminal with
 the leaf vertex~\cite{ZhouTN00}. 
\begin{definition}[\cite{ZhouTN00}]
  The \emph{augmented graph} of $(G,P)$ is the graph $G^P$ obtained
  from $G$ by adding edges between each terminal pair, i.e.,
  $G^P=(V(G), E(G)\cup P)$.
\end{definition}

\subsection{Parameterized Complexity}

\lv{
A \emph{parameterized problem} $\PP$ is a subset of $\Sigma^* \times \Nat$ for some finite alphabet $\Sigma$. Let $L\subseteq \Sigma^*$ be a classical decision problem for a finite alphabet, and let $p$ be a non-negative integer-valued function defined on $\Sigma^*$. Then $L$ \emph{parameterized by} $p$ denotes the parameterized problem $\SB(x,p(x)) \SM x\in L \SE$ where $x\in \Sigma^*$.  For a problem instance $(x,k) \in \Sigma^* \times \Nat$ we call $x$ the main part and $k$ the parameter.  
A parameterized problem $\PP$ is \emph{fixed-parameter   tractable} (FPT in short) if a given instance $(x, k)$ can be solved in time  $\bigoh(f(k) \cdot p(|x|))$ where $f$ is an arbitrary computable function of $k$ and $p$ is a polynomial function; we call algorithms running in this time \emph{fpt-algorithms}.

  Parameterized complexity classes are defined with respect to {\em fpt-reducibility}. A parameterized problem $P$ is {\em fpt-reducible} to $Q$ if in time $f(k)\cdot |x|^{O(1)}$, one can transform an instance $(x,k)$ of $\PP$ into an instance $(x',k')$ of $\QQ$ such that $(x,k)\in \PP$ if and only if $(x',k')\in \QQ$, and $k'\leq g(k)$, where $f$ and $g$ are computable functions depending only on $k$. 
 Owing to the definition, if $\PP$ fpt-reduces to $\QQ$
 and $\QQ$ is fixed-parameter tractable then $P$ is fixed-parameter
 tractable as well. 
 Central to parameterized complexity is the following hierarchy of complexity classes, defined by the closure of canonical problems under fpt-reductions:
 \[\FPT \subseteq \W{1} \subseteq \W{2} \subseteq \cdots \subseteq \XP.\] All inclusions are believed to be strict. In particular, $\FPT\neq \W{1}$ under the Exponential Time Hypothesis. 

 The class $\W{1}$ is the analog of $\NP$ in parameterized complexity. A major goal in parameterized complexity is to distinguish between parameterized problems which are in $\FPT$ and those which are $\W{1}$-hard, i.e., those to which every problem in $\W{1}$ is fpt-reducible. There are many problems shown to be complete for $\W{1}$, or equivalently $\W{1}$-complete, including the {\sc Multi-Colored Clique (MCC)} problem~\cite{DowneyFellows13}. We refer the reader to the respective monographs~\cite{FlumGrohe06,DowneyFellows13,CyganFKLMPPS15} for an in-depth
introduction to parameterized complexity.
 }
  \sv{
The parameterized complexity paradigm allows a finer analysis of the complexity of problems by associating each problem instance $L$ with a numerical parameter $k$; the pair $(L,k)$ is then an instance of a parameterized problem. A parameterized problem is \emph{fixed-parameter   tractable} (FPT in short) if a given instance $(L, k)$ can be solved in time  $\bigoh(f(k) \cdot |L|^{\bigoh(1)}$ where $f$ is an arbitrary computable function; we call algorithms running in this time \emph{fpt-algorithms}.  The complexity class $\W{1}$ is the analog of $\NP$ for parameterized complexity; under established complexity assumptions, problems that are hard for $\W{1}$ do not admit fpt-algorithms.

We refer the reader to the respective monographs~\cite{FlumGrohe06,DowneyFellows13,CyganFKLMPPS15} for an in-depth
introduction to parameterized complexity.
  }

\lv{\subsection{Integer Linear Programming}
Our algorithms use an Integer Linear Programming (ILP) subroutine. ILP is a well-known framework for formulating problems and a powerful tool for the development of {\FPT}-algorithms for optimization problems. 

\begin{definition}[$p$-Variable Integer Linear Programming Optimization] Let $A\in \mathbb{Z}^{q\times p}, b\in \mathbb{Z}^{q\times 1}$ and $c\in \mathbb{Z}^{1\times p}$. The task is to find a vector $x\in \mathbb{Z}^{p\times 1}$ which minimizes the objective function $c\times \bar x$ and satisfies all $q$ inequalities given by $A$ and $b$, specifically satisfies $A\cdot \bar x\geq b$. The number of variables $p$ is the parameter.
\end{definition}

Lenstra \cite{Lenstra83} showed that \textsc{$p$-ILP}, together with its optimization variant \textsc{$p$-OPT-ILP} (defined above), are in {\FPT}. His running time was subsequently improved by Kannan \cite{Kannan87} and Frank and Tardos \cite{FrankTardos87} (see also \cite{FellowsLokshtanovMisraRS08}).

\begin{theorem}[\cite{Lenstra83,Kannan87,FrankTardos87,FellowsLokshtanovMisraRS08}]
\label{thm:pilp}
\textsc{$p$-OPT-ILP} can be solved using $\bigoh(p^{2.5p+o(p)}\cdot L)$ arithmetic operations in space polynomial in $L$, $L$ being the number of bits in the input.
\end{theorem}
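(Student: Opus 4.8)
The plan is to establish the result through Lenstra's geometric algorithm for integer programming in fixed dimension, sharpened by Kannan's enumeration analysis and the Frank--Tardos preprocessing that together yield the stated bound. I would first reduce the optimization problem \textsc{$p$-OPT-ILP} to the pure feasibility question of deciding whether a rational polyhedron $K=\{x\in\mathbb{R}^p : A\bar x\ge b\}$ contains an integer point: the optimum of $c\cdot\bar x$ over $K\cap\mathbb{Z}^p$ lies in a range whose endpoints have bit-length polynomial in $L$, so a binary search that repeatedly augments $K$ with a constraint $c\cdot\bar x\ge\gamma$ and tests feasibility reduces optimization to $\bigoh(L)$ feasibility calls. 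To keep the dependence on $L$ linear as claimed, I would apply the Frank--Tardos preprocessing, which replaces the rows of $A$ and the vector $b$ by integer vectors of bit-length bounded by a polynomial in $p$ alone while preserving the sign of every relevant linear form; thereafter the geometry depends only on $p$, and the $\bigoh(L)$ factor is incurred solely by the outer search.

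For the feasibility core, the key is the dichotomy supplied by Khinchine's flatness theorem: for every convex body $K$, either $K$ contains a point of $\mathbb{Z}^p$, or there is a nonzero integer vector $d$ (a direction in the dual lattice) in which $K$ is \emph{flat}, meaning its width $\max_{x\in K} d\cdot x-\min_{x\in K} d\cdot x$ is at most a flatness constant $\omega(p)$ depending only on $p$. To make this dichotomy algorithmic I would compute an inscribed and circumscribed ellipsoid pair for $K$ via the shallow-cut ellipsoid method, apply the linear map that turns the ellipsoid into a ball so that $\mathbb{Z}^p$ becomes a general lattice $\Lambda$, and then run LLL basis reduction on $\Lambda$. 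A short vector of the reduced dual basis either certifies that $K$ is ``fat'' in every integer direction, in which case it provably contains a lattice point that can be located by rounding, or yields an explicit flat direction $d$.

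Given a flat direction $d$, the width bound guarantees that every integer point of $K$ lies on one of at most $\omega(p)+1$ parallel hyperplanes $\{x : d\cdot x=t\}$ with $t$ integral. Intersecting $K$ with each such hyperplane gives a $(p-1)$-dimensional feasibility instance, on which I would recurse. The recursion tree has depth $p$ and branching factor at most $\omega(p)+1$, so the number of leaves is $\prod_{i=1}^{p}(\omega(i)+1)$; with Kannan's sharpened bound on the flatness constant and on the enumeration within each slab this product is $p^{2.5p+o(p)}$. Since every node performs only an LLL reduction and an ellipsoid computation, each polynomial in $p$ and, after preprocessing, in $L$, the total cost is $\bigoh(p^{2.5p+o(p)}\cdot L)$ arithmetic operations in space polynomial in $L$.

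The main obstacle is the flatness theorem itself and its effective, quantitatively tight algorithmic realization. Merely invoking flatness is not enough: the exponent $2.5p$ hinges on controlling the flatness constant $\omega(p)$ and on enumerating lattice points within each flat slab efficiently, which is exactly where Kannan's improved shortest-vector and covering analysis over the LLL-reduced basis enters. Ensuring that the bit-complexity of the intermediate polyhedra does not blow up through $p$ levels of slicing, handled by the Frank--Tardos rounding, is the other delicate point, and is what separates the clean $\bigoh(\cdot\,L)$ bound from a weaker polynomial dependence on $L$.
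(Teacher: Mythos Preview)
The paper does not prove this theorem at all: it is stated as a black-box citation to Lenstra, Kannan, Frank--Tardos, and Fellows et al., and is invoked only as a subroutine in Lemma~\ref{lem:validsel}. There is therefore no ``paper's own proof'' to compare your proposal against.

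As a standalone sketch of the cited works, your outline is broadly faithful---the ellipsoid rounding, LLL/flatness dichotomy, slicing recursion, and Frank--Tardos coefficient reduction are indeed the ingredients. One point is blurred, however: the exponent $2.5p$ does not come primarily from a tighter bound on the flatness constant $\omega(p)$ plugged into the Lenstra recursion $\prod_i(\omega(i)+1)$. Kannan's contribution is a different recursive structure based on Korkine--Zolotarev (HKZ) reduced bases and a more careful lattice-point enumeration within each slab; it is this refined recursion, not merely a sharper $\omega(p)$, that yields $p^{2.5p+o(p)}$. Your description of Frank--Tardos is also slightly optimistic: their rounding bounds the coefficient bit-length by a polynomial in $p$ (and $\log p$), which is what keeps each arithmetic operation cheap, but the overall $\bigoh(L)$ factor arises from the binary search together with the initial reading of the input, not ``solely'' from the outer search. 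These are refinements rather than gaps; for the purposes of this paper the theorem is simply imported.
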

}

\subsection{Treewidth}\label{sec:tw}

A \emph{tree-decomposition} of a graph $G=(V,E)$ is a pair 
$(T,\{B_t : t\in V(T)\})$
where $B_t \subseteq V$ for every $t \in V(T)$ and $T$ is a tree such that:
\begin{enumerate}
  \item for each edge $(u,v)\in E$, there is a $t\in V(T)$ such that $\{u,v\} 
\subseteq B_t$, and \label{twone}
\item for each vertex $v\in V$, $T[\SB t\in V(T) \SM v\in B_t \SE]$ is a non-empty (connected) tree. \label{twtwo}
\end{enumerate}
The \emph{width} of a tree-decomposition is $\max_{t \in V(T)} |B_t|-1$.
The \emph{treewidth}~\cite{Kloks94} of $G$
is the minimum width taken over all tree-decompositions
of $G$ and it is denoted by $\tw(G)$. We call the elements of $V(T)$ \emph{nodes} and $B_t$ \emph{bags}.

While it is possible to compute the treewidth exactly using an fpt-algorithm~\cite{Bodlaender96}, the asymptotically best running time is achieved by using the recent state-of-the-art $5$-approximation algorithm of Bodlaender et al.~\cite{BodlaenderDDFLP16}. 

\begin{ourfact}[\cite{BodlaenderDDFLP16}]
\label{fact:findtw}
There exists an algorithm which, given an $n$-vertex graph $G$ and an integer~$k$, in time $2^{\bigoh(k)}\cdot n$ either outputs a tree-decomposition of $G$ of width at most $5k+4$ and $\bigoh(n)$ nodes, or correctly determines that $\tw(G)>k$.
\end{ourfact}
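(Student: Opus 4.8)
The plan is to prove the statement via the recursive, flow-based separator scheme underlying constant-factor treewidth approximation. The algorithm maintains the following recursive invariant: it is called on an induced subgraph $H$ of $G$ together with a distinguished boundary set $W \subseteq V(H)$ with $|W| \le 4k+4$, and must either return a tree-decomposition of $H$ of width at most $5k+4$ whose root bag contains $W$, or certify $\tw(G) > k$. The top-level call takes $H = G$ and $W = \emptyset$, so a decomposition returned there is exactly what is claimed.

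The combinatorial engine is the classical balanced-separator property: if $\tw(H) \le k$, then for every $W \subseteq V(H)$ there is a separator $X$ with $|X| \le k+1$ such that each connected component of $H - X$ contains at most half of the vertices of $W$. Given such an $X$, the recursive step makes $W \cup X$ the root bag and recurses on each component $C$ of $H - X$ with new boundary $W_C := (W \cap V(C)) \cup X$. This preserves both invariants: the root bag satisfies $|W \cup X| \le (4k+4) + (k+1) = 5k+5$, hence has width at most $5k+4$; and balancedness gives $|W_C| \le \tfrac{1}{2}(4k+4) + (k+1) = 3k+3 \le 4k+4$, keeping the boundary within budget all the way down. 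Validity of the output follows because every edge of $H$ is contained in $G[V(C) \cup X]$ for a single component $C$, and every vertex's bags form a connected subtree since $X$ is shared between the parent bag and all children. Choosing $X$ to contain at least one non-boundary vertex whenever one exists makes the number of non-boundary vertices strictly decrease in each child, which guarantees termination.

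The algorithmic heart is to \emph{find} such an $X$ or to conclude that none exists (which certifies $\tw(G) > k$). Any candidate separator $X$ splits $W$ into the part $W \cap X$ together with the groups of $W$ landing in distinct components of $H - X$. I would therefore enumerate the $2^{\bigoh(k)}$ relevant bipartitions $(A,B)$ of $W$ and, for each, test by a flow computation whether $A$ and $B$ admit a balanced vertex separator of size at most $k+1$; by Menger's theorem this reduces the search to $2^{\bigoh(k)}$ minimum vertex-cut queries. Since we only look for cuts of size at most $k+1$, each query is answered by at most $k+1$ augmenting-path rounds in time $\bigoh(k \cdot |E(H)|)$, and because any graph with more than $kn$ edges already has $\tw(G) > k$, a linear-time edge count lets us assume $|E(H)| = \bigoh(kn)$.

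The main obstacle is obtaining the \emph{linear} dependence on $n$, rather than the $2^{\bigoh(k)} n \log n$ (or worse) bound that the scheme above yields directly: with vertex-count balancing the recursion depth can be $\Omega(\log n)$, and otherwise it can reach $\Omega(n)$, while naive flow computations re-examine shared boundary vertices repeatedly. The resolution, which is the genuine technical content of the cited work, is a careful amortization in which the cost of each separator search is charged against vertices that leave the active part of the graph, balancedness forces the controlling measure to shrink geometrically, and the flow subroutine is localized so that its running time is proportional to the size of the current piece rather than of all of $G$. Showing that every vertex and edge is thereby touched only $2^{\bigoh(k)}$ times in total is the delicate step; it yields both the $2^{\bigoh(k)} \cdot n$ running time and the $\bigoh(n)$ bound on the number of nodes in the returned decomposition.
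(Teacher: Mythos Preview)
The paper does not prove this statement at all: it is stated as a \emph{Fact} with a citation to Bodlaender et al.\ and is used purely as a black box (to obtain a tree-decomposition in the proof of Theorem~2). There is therefore no ``paper's own proof'' to compare against.

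Your sketch is a faithful outline of the classical recursive balanced-separator scheme (in the style of Robertson--Seymour and Reed), and the arithmetic for the invariant $|W|\le 4k+4$ and bag size $\le 5k+5$ is correct. You are also right that this scheme, done naively, gives only $2^{\bigoh(k)} n\log n$ or worse, and that obtaining the stated $2^{\bigoh(k)}\cdot n$ bound is the genuine contribution of the cited paper. However, your final paragraph does not actually carry out that amortization; it gestures at ``charging against vertices that leave the active part'' and ``localized flow'' without giving the data structures or the potential-function argument that make the charges add up to linear. In other words, you have reproduced exactly the part that is standard and deferred exactly the part that is the theorem. As a proof of the fact this is incomplete; as an explanation of why the fact is plausible and what its cited proof must accomplish, it is accurate. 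For the purposes of the present paper, simply citing the result (as the authors do) is the appropriate course.
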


\lv{It is well known that, for every clique over
$Z\subseteq V(G)$ in $G$, it holds that every tree-decomposition of
$G$ contains an element $B_t$ such that $Z\subseteq B_t$
\cite{Kloks94}. 
Furthermore, if $t'$ separates a node~$t$ from another node $t''$ in $T$, then $B_{t'}$ separates $B_t\setminus B_{t'}$ from $B_{t''}\setminus B_{t'}$ in $G$~\cite{Kloks94}; this \emph{inseparability property} will be useful in some of our later proofs.
}
A tree-decomposition $(T,{B_t : t \in V(T)})$ of a graph~$G$ is
\emph{nice} if the following conditions hold:
\sv{(1) $T$ is rooted at a node $r$ such that $|B_r|=\emptyset$, (2) every
  node of $T$ has at most two children, if a node $t$ of $T$ has two children $t_1$ and $t_2$, then
  $B_t=B_{t_1}=B_{t_2}$; in that case we call $t$ a \emph{join
    node}, (3) if a node $t$ of $T$ has exactly one child $t'$, then exactly
  one of the following holds: (3A) $B_t=B_{t'}\cup \{v\}$, in which case
  we call $t$ an \emph{introduce node} or (3B) $B_{t}=B_{t'}\setminus
  \{v\}$ in which case we call $t$ a \emph{forget node}, and (4)
  if a node $t$ of $T$ is a leaf, then $|B_t|=1$; we call these \emph{leaf nodes}.
}

\lv{\begin{enumerate}
\item $T$ is rooted at a node $r$ such that $|B_r|=\emptyset$.
\item Every node of $T$ has at most two children.
\item If a node $t$ of $T$ has two children $t_1$ and $t_2$, then
  $B_t=B_{t_1}=B_{t_2}$; in that case we call $t$ a \emph{join
    node}.
\item If a node $t$ of $T$ has exactly one child $t'$, then exactly
  one of the following holds:
  \begin{enumerate}
  \item $\Card{B_t}=\Card{B_{t'}}+1$ and $B_{t'}\subset
    B_t$; in that case we call $t$ an \emph{introduce node}.
  \item $\Card{B_t}=\Card{B_{t'}}-1$ and $B_t\subset
    B_{t'}$; in that case we call $t$ a \emph{forget node}.
  \end{enumerate}
\item If a node $t$ of $T$ is a leaf, then $|B_t|=1$; we call these \emph{leaf nodes}.
\end{enumerate}}

The main advantage of nice tree-decompositions is that they allow the design of much more transparent dynamic programming algorithms, since one only needs to deal with four specific types of nodes. It is well known (and easy to see) that for every fixed~$k$, given a
tree-decomposition of a graph $G=(V,E)$ of width at most $k$ and with $\mathcal{O}(\Card{V})$
nodes, one can construct in linear time a nice
tree-decomposition of $G$ with $\mathcal{O}(\Card{V})$ nodes and width at
most~$k$~\cite{BodlaenderKloks96}. Given a node $t$ in $T$, we let $Y_t$ be the set of all vertices contained in the bags of the subtree rooted at $t$, i.e., $Y_t=B_t\cup \bigcup_{p \text{ is separated from the root by $t$}}B_p$.

\section{Closing the Gap on Graphs of Feedback Vertex Number One}

In this section we develop a polynomial-time algorithm for  EDP
restricted to graphs with feedback vertex set number one. We refer to
this particular variant as \textsc{Simple Edge Disjoint Paths (SEDP)}: 
given an EDP instance $(G,P)$ and a FVS $X=\{x\}$, solve $(G,P)$.
\lv{\pbDef{\textsc{Simple Edge Disjoint Paths (SEDP)}}
{A graph $G$, a minimal FVS $X=\{x\}\subseteq V(G)$ for $G$
  and a set $P$ of terminal pairs, i.e., a set of subsets of $V(G)$ of size two.}{Is there a
  set of pairwise edge disjoint paths connecting every set of two
  terminals in $P$?}
}Additionally to our standard assumptions about EDP (given in
Subsection~\ref{ssec:edp}), we will assume that: (1)
every neighbor of $x$ in $G$ is a leaf in $G - X$, (2)
$x$ is not a terminal, i.e., $x \notin
\bigunion{P}$, and (3) every tree $T$ in $G-X$ is rooted in a vertex
$r$ that is not a terminal. Property (1) can be ensured by an
additional leaf vertex $l$ to any non-leaf neighbor $n$ of $x$,
removing the edge $\{n,x\}$ and adding the edge $\{l,x\}$ to
$G$. Property (2) can be ensured by adding an additional leaf vertex $l$
to $x$ and replacing $x$ with $l$ in $P$ and finally (3) can be
ensured by adding a leaf vertex $l$ to any non-terminal vertex $r$ in
$T$ and replacing $r$ with $l$ in $P$.

A key observation behind our algorithm for SEDP is that whether or not
an instance $\III=(G,P,X)$ has a solution merely depends on the existence
of certain sets of pairwise edge disjoint paths in the trees $T$ in $G
- X$. In particular, as we will show in
Lemma~\ref{lem:dp-sol-cri} later on, $\III$ has a solution if and only if
every tree $T$ in $G -X$ is $\emptyset$-connected (see
Definition~\ref{def:connectedness}). The main ingredient of the
algorithm is then a bottom-up dynamic programming algorithm that
determines whether a tree $T$ in $G-X$ is $\emptyset$-connected. We now define the various connectivity states of subtrees of $T$ that
we need to keep track of in the dynamic programming table.
\begin{definition}\label{def:connectedness}
  Let $T$ be a tree in $G - X$ rooted at $r$ (recall that we can
  assume that $r$ is not in $\bigunion{P}$), $t \in V(T)$, and let $S$ be a set of
  pairwise edge disjoint paths in $G[T_t\cup X]$ and $P' \subseteq
  P(T_t)$, where $T_t$ is the subtree of $T$ rooted at $t$.

  We say that the set $S$ $\gamma_\emptyset$-\emph{connects} $P'$ in 
$G[T_t\cup X]$ if for every $a \in \bigunion{P'}\cap T_t$, the set $S$ either contains
an $a$-$x$ path disjoint from $b$, or it contains an $a$-$b$ 
path disjoint from $x$, where $\{a,b\}
\in P'$. Moreover, for $\ell \in \{\gamma_X\} \cup P(T_t)$,
we say that the set $S$ \emph{$\ell$-connects} $T_t$ if $S$ $\gamma_
\emptyset$-connects $P(T_t)
\setminus \{\ell\}$ and additionally the following conditions hold. 
\begin{itemize}
\item If $\ell=\gamma_X$ then $S$ also contains a path from
  $t$ to $x$.
\item If $\ell=p$ for some $p \in P(T_t)$ then:
  \begin{itemize}
  \item If $p\cap T_t=\{a\}$ then $S$ contains a $t$-$a$ path disjoint from  $x$.
  \item  If $p\cap T_t=\{a,b\}$ then $S$ contains a $t$-$a$ path disjoint from  $x$ and a $b$-$x$ path disjoint from $a$ or $S$ contains a $t$-$b$ path disjoint from  $x$ and an $a$-$x$ path disjoint from $b$. 
  \end{itemize}
\end{itemize}
For $\ell \in
\{\gamma_\emptyset,\gamma_X\} \cup P(T_t)$, we say that $T_t$ is $\ell$-connected if there is a set $S$ which $\ell$-connects $P(T_t)$ in $G[T_t\cup X]$.
\end{definition}
Informally, a tree $T_t$ is:
\sv{(1) $\gamma_\emptyset$-connected if all its terminal
  pairs can be connected in $G[T_t \cup X]$ either to themselves or to
  $x$, (2) $\gamma_X$-connected if it is
  $\gamma_\emptyset$-connected and additionally there is a path from its
  root to $x$ (which can later be used to connect some terminal not in
  $T_t$ to $x$ via the root of $T_t$), and (3)
  $\gamma_p$-connected if all but one of its terminals, i.e., one
  of the terminals in $p$, can be
  connected in $G[T_t \cup X]$ either to themselves or to
  $x$, and additionally one terminal in $p$ can be connected to the
  root of $T_t$ (from which it can later be connected to $x$ or the
  other terminal in $p$).}
\lv{\begin{itemize}
\item $\gamma_\emptyset$-connected if all its terminal
  pairs can be connected in $G[T_t \cup X]$ either to themselves or to
  $x$,
\item $\gamma_X$-connected if it is
  $\gamma_\emptyset$-connected and additionally there is a path from its
  root to $x$ (which can later be used to connect some terminal not in
  $T_t$ to $x$ via the root of $T$),
\item $\gamma_p$-connected if all but one of its terminals, i.e., one
  of the terminals in $p$, can be
  connected in $G[T_t \cup X]$ either to themselves or to
  $x$, and additionally one terminal in $p$ can be connected to the
  root of $T_t$ (from which it can later be connected to $x$ or the
  other terminal in $p$).
\end{itemize}
}
\lv{\begin{LEM}}
\sv{\begin{LEM}[$\star$]}
\label{lem:dp-sol-cri}
  $(G,X,P)$ has a solution if and only if every tree $T$ in $G-X$ is
  $\gamma_\emptyset$-connected.
\end{LEM}

\lv{\begin{proof}
  Let $S$ be a solution for $(G,X,P)$, i.e., a set of pairwise
  edge disjoint paths between all terminal pairs in $P$. Consider the
  set $S'$ of pairwise edge disjoint paths obtained from $S$ after
  splitting all paths in $S$ between two terminals $a$ and $b$ that
  intersect $x$, into two paths, one from $a$ to $x$ and the other from
  $x$ to $b$. Then the restriction of $S'$ to any tree $T$ in $G
  - X$ shows that $T$ is $\gamma_\emptyset$-connected.

  In the converse direction, for
  every tree $T$ in $G- X$, let $S_T$ be a set of pairwise
  edge disjoint paths witnessing that $T$ is $\gamma_\emptyset$-connected. Consider a set $p=\{a,b\} \in P$ and let $T_a$ and
  $T_b$ be the tree containing $a$ and  $b$, respectively. If $T_a=T_b$
  then either $S_{T_a}$ contains an $a$-$b$ path or $S_{T_a}$
  contains an $a$-$x$ path and a $b$-$x$ path. In both cases
  we obtain an $a$-$b$ path in $G$. Similarly, if $T_a\neq
  T_b$ then $S_{T_a}$ contains an $a$-$x$  path and $S_{T_b}$
  contains a $b$-$x$ path, whose concatenation gives us an $a$-$b$ path in $G$. Since the union of all $S_T$ over all trees $T$
  in $G - X$ is a set of pairwise edge disjoint paths, this
  shows that $(G,X,P)$ has a solution. This completes the proof of the lemma. 
\end{proof}
}
Due to Lemma \ref{lem:dp-sol-cri}, our algorithm to solve EDP only has
to determine whether every tree in $G- X$ is
$\gamma_\emptyset$-connected. For a tree $T$ in $G-X$, our algorithm achieves this by computing
a set of labels $L(t)$, where $L(t)$ is the set of all labels $\ell \in
\{\gamma_\emptyset,\gamma_X\} \cup P(T_t)$ such that $T_t$ is
$\ell$-connected, via a bottom-up dynamic programming procedure.
We begin
by arguing that for a leaf vertex $l$, the value $L(l)$ can be
computed in constant time. 

\begin{LEM}\label{lem:comp-labels-leaf}
  The set $L(l)$ for a leaf vertex $l$ of $T$ can be computed in time $\bigoh(1)$.
\end{LEM}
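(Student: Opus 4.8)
The plan is to enumerate, by brute force, all possible configurations that a leaf vertex $l$ can be in, and for each determine which labels it receives. Since $l$ is a leaf of $T$, the subtree $T_l$ consists of the single vertex $l$, so $G[T_l \cup X] = G[\{l,x\}]$, which contains at most the single edge $lx$ (it is present precisely when $l \in N_G(x)$). Hence the only pairwise edge-disjoint path sets $S$ available in $G[T_l \cup X]$ are the empty set, the trivial one-vertex path at $l$, the trivial path at $x$, and (if $lx \in E(G)$) the single edge $lx$. Every relevant quantity — whether $l$ is adjacent to $x$, and whether $l$ is a terminal (and if so, whether it is matched to a terminal inside $T_l$, which here can only mean matched to $l$ itself, an impossibility, so its partner is always outside $T_l$) — is a constant-size piece of information readable in $\bigoh(1)$ time.

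First I would case on whether $l$ is a terminal. If $l \notin \bigunion{P}$, then $P(T_l) = \emptyset$, so $L(l) \subseteq \{\gamma_\emptyset, \gamma_X\}$. The condition for $\gamma_\emptyset$ is vacuous (there are no terminals in $T_l$ to connect), so $\gamma_\emptyset \in L(l)$ always. The label $\gamma_X$ additionally requires a path from $l$ to $x$ in $G[\{l,x\}]$, which exists iff $lx \in E(G)$; so $\gamma_X \in L(l)$ iff $l$ is adjacent to $x$. If instead $l \in \bigunion{P}$, say $\{l, b\} = p \in P$ with $b \notin T_l$, then $P(T_l) = \{p\}$ and $L(l) \subseteq \{\gamma_\emptyset, \gamma_X, p\}$. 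For $\gamma_\emptyset$ we need to connect $l$ either to $x$ (via a path disjoint from $b$) or to $b$ (via a path disjoint from $x$) inside $G[\{l,x\}]$; since $b \notin \{l,x\}$, the only possibility is the edge $lx$, so $\gamma_\emptyset \in L(l)$ iff $lx \in E(G)$. For $\gamma_X$, Definition~\ref{def:connectedness} asks that $S$ $\gamma_\emptyset$-connect $P(T_l) \setminus \{\gamma_X\} = \{p\}$ and contain an $l$-$x$ path — both again reduce to $lx \in E(G)$. For the label $p$, we need $S$ to $\gamma_\emptyset$-connect $P(T_l) \setminus \{p\} = \emptyset$ (vacuous) and, since $p \cap T_l = \{l\}$, to contain an $l$-$l$ path disjoint from $x$; the trivial one-vertex path at $l$ works unconditionally, so $p \in L(l)$ always.

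I would then just tabulate the outcome: if $l$ is not a terminal, $L(l) = \{\gamma_\emptyset\}$ if $l \notin N_G(x)$ and $L(l) = \{\gamma_\emptyset, \gamma_X\}$ if $l \in N_G(x)$; if $l$ is a terminal, $L(l) = \{p\}$ if $l \notin N_G(x)$ and $L(l) = \{\gamma_\emptyset, \gamma_X, p\}$ if $l \in N_G(x)$, where $p$ is the pair containing $l$. Each case is decided by reading at most two bits of information about $l$, so the whole computation is $\bigoh(1)$, establishing the lemma.

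There is no real obstacle here; the only thing to be careful about is matching each clause of Definition~\ref{def:connectedness} precisely against the degenerate subgraph $G[\{l,x\}]$ — in particular noticing that a terminal partner of $l$ never lies in $T_l$ (by our standing assumption that roots are non-terminals, and because $T_l = \{l\}$ forces the partner elsewhere), so the two-endpoints-inside sub-case of the $\ell = p$ condition never arises, and that the $\gamma_\emptyset$-connectivity requirement is vacuously true whenever $T_l$ contains no terminal. Once those degeneracies are pinned down the bookkeeping is immediate.
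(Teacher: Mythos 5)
Your overall approach---a direct $\bigoh(1)$ case analysis on the two bits ``is $l$ a terminal'' and ``is $l$ adjacent to $x$''---is exactly the paper's, and your table agrees with the paper's proof in every entry except one: the label $\gamma_X$ for a terminal leaf adjacent to $x$. The paper's proof asserts that $T_l$ is $\gamma_X$-connected if and only if $l \notin \bigunion{P}$ \emph{and} $lx\in E(G)$, i.e.\ a terminal leaf is never $\gamma_X$-connected, whereas you place $\gamma_X \in L(l)$ whenever $lx\in E(G)$, even for terminal $l$. The source of the disagreement is your reading of Definition~\ref{def:connectedness}: you let the single edge $lx$ serve simultaneously as the path that $\gamma_\emptyset$-connects the pair $p\ni l$ (an $l$-$x$ path disjoint from $b$) and as the required root-to-$x$ path. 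The intended semantics---and the way the label is consumed in Lemma~\ref{lem:charinnervertex}, where the unique root-to-$x$ path of a $\gamma_X$-labelled child is removed from its witness set and spliced into a path serving a terminal \emph{outside} that subtree---requires the root-to-$x$ path to be available in addition to, and edge-disjoint from, the paths serving $P(T_l)$. A single edge cannot play both roles, so a terminal leaf adjacent to $x$ must not receive $\gamma_X$; with your table, the parent computation could place such a child in $V_t^{\gamma_X}$ and reroute its only edge to carry a foreign terminal to $x$, leaving $l$'s own pair disconnected.

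That said, the offending case is vacuous under the paper's standing assumptions: terminals have degree $1$ in $G$, so a terminal leaf adjacent to $x$ would be isolated in $G-X$ and hence be the root of its single-vertex tree, contradicting the assumption that every tree of $G-X$ is rooted at a non-terminal. So your $\bigoh(1)$ bound and the other entries of your table stand; only the $\gamma_X$ entry for terminal leaves should be corrected to ``never'', matching the paper's characterization and the way the label is used at internal nodes.
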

\begin{proof}
  Since $l$ is a leaf vertex, we conclude  that $T_l$ is
  $\gamma_\emptyset$-connected if and only if either $l \notin \bigunion{P}$
  or $l\in \bigunion{P}$ and $(l,x) \in E(G)$. Similarly,
  $T_l$ is $\gamma_X$-connected if and only if $l \notin
  \bigunion{P}$ and $(l,x) \in E(G)$. Finally, $T_l$ is
  $\ell$-connected for some $\ell \in P(T_l)$ if and only if $l \in
  \bigunion{P}$. Since all these properties can be checked in constant
  time, the statement of the lemma follows.
\end{proof}
We will next show how to compute $L(t)$ for a non-leaf vertex $t \in V(T)$ with
children $t_1,\dotsc,t_l$.
\begin{definition}
\sv{We define the three sets
$V_t^{\neg \gamma_\emptyset}=\SB t_i \SM \gamma_\emptyset
\notin L(t_i)\SE$, $V_t^{\gamma_X}=\SB t_i \SM \gamma_X \in
L(t_i)\SE$, and $V_t=\{t_1,\dotsc,t_l\}\setminus (V_t^{\neg \gamma_\emptyset} \cup
V_t^{\gamma_X})$.
}
\lv{We define the following three sets.  $$V_t^{\neg \gamma_\emptyset}=\SB t_i \SM \gamma_\emptyset
\notin L(t_i)\SE$$ $$V_t^{\gamma_X}=\SB t_i \SM
\gamma_X \in L(t_i)\SE$$  $$V_t=\{t_1,\dotsc,t_l\}\setminus (V_t^{\neg \gamma_\emptyset} \cup
V_t^{\gamma_X})$$}
\end{definition}
That is, $V_t^{\neg \gamma_\emptyset}$ is the set of those children $t_i$ such that $T_i$ is {\em not} $\gamma_\emptyset$-connected, $V_t^{\gamma_X}$ is the set of those children $t_i$ such that $T_i$ is $\gamma_X$-connected and $V_t$ is the set comprising the remaining children. 
 Observe that $\{V_t,V_t^{\neg
  \gamma_\emptyset},V_t^{\gamma_X}\}$ forms a partition of
$\{t_1,\dotsc,t_l\}$ and moreover $\gamma_\emptyset \in L(t)$ and
$\gamma_X \notin L(t)$ for every $t \in V_t$.
Let
$H(t)$ be the graph with vertex set $V_t \cup V_t^{\neg \gamma_\emptyset}$ having an
edge between $t_i$ and $t_j$ (for $i\neq j$) if and only if $L(t_i)
\cap L(t_j)\neq \emptyset$ and not both $t_i$ and $t_j$ are in $V_t$.
The following lemma is crucial to our algorithm, because it provides
us with a simple characterization of $L(t)$ for a non-leaf vertex $t
\in V(T)$.
\sv{\begin{LEM}[$\star$]}
\lv{\begin{LEM}}\label{lem:charinnervertex}
  Let $t$ be a non-leaf vertex of $T$ with children $t_1,\dotsc,t_l$.
  Then $T_t$ is:
  \begin{itemize}
  \item $\gamma_\emptyset$-connected if and only if $L(t') \neq \emptyset$
    for every $t' \in \{t_1,\dotsc,t_l\}$ and
    $H(t)$ has a matching $M$
    such that $|V^{\neg \gamma_\emptyset} \setminus V(M)|\leq |V_t^{\gamma_X}|$,
  \item $\gamma_X$-connected if and only if
    $L(t') \neq \emptyset$
    for every $t' \in \{t_1,\dotsc,t_l\}$ and
    $H(t)$ has a matching $M$
    such that $|V^{\neg \gamma_\emptyset} \setminus V(M)|< |V_t^{\gamma_X}|$,
  \item $\ell$-connected (for $\ell \in P(T_t)$) if and only if
    $L(t') \neq \emptyset$
    for every $t' \in \{t_1,\dotsc,t_l\}$ and
    there
    is a $t_i$ with $\ell \in L(t_i)$ such that $H(t) -
    \{t_i\}$ has a matching $M$ with $|V^{\neg \gamma_\emptyset} \setminus V(M)|\leq |V_t^{\gamma_X}|$.
  \end{itemize}
\end{LEM}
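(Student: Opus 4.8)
The three characterizations are proven by a uniform argument establishing a correspondence between witness sets for $T_t$ and pairs consisting of (a) witness sets for the subtrees $T_{t_1},\dots,T_{t_l}$ and (b) a matching in $H(t)$.

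I would first put witness sets into a normal form. If $S$ is a set of pairwise edge-disjoint paths in $G[T_t\cup X]$ that $\ell$-connects $P(T_t)$ for some $\ell$, then splitting every path of $S$ that contains $x$ into two subpaths at $x$ (exactly as in the proof of Lemma~\ref{lem:dp-sol-cri}) preserves edge-disjointness and all conditions of Definition~\ref{def:connectedness}. Afterwards, since $G[T_t]$ is a tree, every path of $S$ is either the unique path of $T_t$ between two terminals, or the unique path of $T_t$ from a terminal (respectively from $t$) to a leaf adjacent to $x$, followed by the edge to $x$. The key consequence is that for every child $t_i$ the single edge $\{t_i,t\}$ carries at most one path of $S$; hence at most one path of $S$ ``leaves'' $T_{t_i}$ towards $t$, and restricting $S$ to $G[T_{t_i}\cup X]$ — keeping of that leaving path only the part lying in $G[T_{t_i}\cup X]$ — yields a witness for some label of $T_{t_i}$. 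In particular $L(t_i)\neq\emptyset$ for every child, which is a common prerequisite in all three statements.

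For the ``only if'' directions, fix such a normalized $S$ and classify every child $t_i$ by the way $\{t_i,t\}$ is used: (i) it is unused — then $\gamma_\emptyset\in L(t_i)$; (ii) the path through $\{t_i,t\}$ passes through $t$ and terminates at a terminal of another subtree $T_{t_j}$ — then the two endpoints of that path form a terminal pair $p$ with $p\in L(t_i)\cap L(t_j)$, so $t_it_j\in E(H(t))$ and we add it to $M$; (iii) the path through $\{t_i,t\}$ eventually reaches $x$ — then $t_i$'s ``leftover'' terminal is routed to $x$, and the subtree through which the path leaves to $x$ must simultaneously route all its own terminals and provide a root-to-$x$ path, hence is $\gamma_X$-connected; (iv) in the $\gamma_X$ case only, $\{t_i,t\}$ may instead carry the $t$--$x$ path of $S$, which again forces a $\gamma_X$-connected child; (v) in the $\ell$-connected case with $\ell=p$ only, $\{t_i,t\}$ may carry the $t$--$a$ path for a terminal $a\in p$, and then $p\in L(t_i)$ and $t_i$ is the child we delete. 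Because $S$ is edge-disjoint, the $\gamma_X$-children used in (iii)/(iv) are pairwise distinct, and every child of $V_t^{\neg \gamma_\emptyset}$ not covered by $M$ is of type (iii); counting gives $|V_t^{\neg \gamma_\emptyset}\setminus V(M)|\le|V_t^{\gamma_X}|$, with strict inequality in the $\gamma_X$ case (one further $\gamma_X$-child is spent in (iv)) and with the bound holding in $H(t)-\{t_i\}$ in the $\ell$-connected case. A short exchange argument — replacing the paths meeting a child that is actually $\gamma_\emptyset$-connected by that child's internal $\gamma_\emptyset$-witness — makes the classification consistent with the definition of $H(t)$ (no $V_t$--$V_t$ edges, no $\gamma_X$-child playing a type-(ii) role).

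Conversely, given a matching $M$ with the stated cardinality bound and arbitrary witnesses for the relevant child labels, I would assemble $S$ as follows. First note that any common label $p\in L(t_i)\cap L(t_j)$ of distinct children is necessarily a terminal pair with exactly one endpoint in each of $T_{t_i},T_{t_j}$; for each $t_it_j\in M$ we join the two root-stubs of the corresponding $p$-witnesses through the edges $\{t_i,t\}$ and $\{t,t_j\}$ into one path connecting that pair. For each child of $V_t^{\neg \gamma_\emptyset}$ not covered by $M$ we take a $q$-witness for some $q\in L(t_i)$ and extend its root-stub through $t$ and through the root-to-$x$ path of an injectively assigned $\gamma_X$-child, routing the leftover terminal to $x$. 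All remaining children contribute their internal $\gamma_\emptyset$-witnesses. In the $\gamma_X$ case we additionally route $t$ to $x$ along the root-to-$x$ path of a $\gamma_X$-child left unused (it exists by the strict inequality), and in the $\ell$-connected case we route the leftover terminal of $p$ in the deleted child $t_i$ up to $t$. Edge-disjointness is immediate since distinct children occupy vertex-disjoint subtrees and each edge incident to $t$ or $x$ is used at most once, after which one verifies all conditions of Definition~\ref{def:connectedness}.

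The step I expect to be the main obstacle is the $\ell$-connected case of both directions: establishing that removing exactly one child from $H(t)$ is the right operation, correctly handling the asymmetric ``$t$--$a$ path and $b$--$x$ path'' requirement, and, crucially, arguing in the backward direction that when $p=\{a,b\}$ has its two endpoints in different children of $t$, the child containing $b$ cannot be matched via $p$ (because the child containing $a$ has been deleted from $H(t)$, and $p$ meets no other child) and is therefore handled by a witness that routes $b$ to $x$ — exactly as Definition~\ref{def:connectedness} demands. A secondary obstacle is making the normalization/exchange argument watertight, so that the local witnesses of the children really do combine into a globally edge-disjoint family.
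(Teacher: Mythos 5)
Your proposal is correct and follows essentially the same route as the paper's proof: the forward direction classifies how each edge $\{t_i,t\}$ is used (terminal-to-terminal through $t$, routed to $x$ via a $\gamma_X$-child, the special $t$--$x$ or $t$--$a$ path), extracts the matching from paths passing through $t$ between two children with at least one in $V_t^{\neg\gamma_\emptyset}$, and charges uncovered such children injectively to $\gamma_X$-children; the backward direction reassembles child witnesses exactly as in the paper's steps (S1)--(S4), (S2') and (S0). Your added normalization (splitting paths at $x$) and the explicit treatment of the $\ell=p$ subtlety are harmless refinements of, not departures from, the paper's argument.
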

\lv{\begin{proof}
  Towards showing the forward direction let $S$ be a set of
  pairwise edge disjoint paths witnessing that $T_t$ is
  $\ell$-connected for some $\ell \in \{\gamma_\emptyset,\gamma_X\}\cup
  P(T_t)$. Then $S$ contains the following types of paths:
  \begin{description}
  \item[(T1)] A path between $a$ and $b$ that does not contain
    $x$, where $\{a,b\} \in P(T_t)$,
  \item[(T2)] A path between $a$ and $x$, which does
    not contain $b$, where $\{a,b\} \in P(T_t)$,
  \item[(T3)] A path between $x$ and $t$ (only if $\ell=\gamma_X$),
  \item[(T4)] A path between $a$ and $t$, which does
    not contain $x$, (only if $\ell=p$ for some $p \in P(T_t)$ and $a \in p$),   
  \end{description}
  For every $i$ with $1 \leq i \leq l$, let $S_i$ be the subset of $S$
  containing all paths that use at least one vertex in $T_{t_i}$ and
  let $S_i'$ be the restriction of all paths in $S_i$ to
  $G[T_{t_i} \cup X]$. Consider an $i$ with $1 \leq i \leq
  l$. Because the paths in $S$ are pairwise edge disjoint, we obtain
  that at most one path in $S$ contains the edge $(t_i,t)$.
  We start with the following observations.
  \begin{itemize}
  \item[(O1)] If $S_i$ contains no path that contains the edge $(t_i,t)$,
    then $S_i'$ shows that $T_{t_i}$ is $\gamma_\emptyset$-connected.
  \item[(O2)] If $S_i$ contains a path say $s_i$ that contains the edge $(t_i,t)$,
    then the following statements hold.
    \begin{itemize}
    \item[(O21)] If $s_i$ is a path of Type (T1) for some $p
      \in P(T_{t})$, then $S_i'$ shows that $T_{t_i}$ is
      $p$-connected,
    \item[(O22)] If $s_i$ is a path of Type (T2) for some $p
      \in P(T_t)$, then the following statements hold. 
      \begin{itemize}
      \item[(O221)] If the endpoint of $s_i$ in $G[T_{t_i}\cup X]$ is a
        terminal $t \in p$, then $S_i'$ shows that $T_{t_i}$ is
        $p$-connected.
      \item[(O222)] Otherwise, i.e., if the endpoint of $s_i$ in
        $G[T_{t_i}\cup X]$ is $x$, then $S_i'$ shows that $T_{t_i}$ is
        $\gamma_X$-connected.
      \end{itemize}
         \end{itemize}
    \item[(O3)] If $s_i$ is a path of Type (T3), then $S_i'$ shows that
      $T_{t_i}$ is $\gamma_X$-connected.
    \item[(O4)] If $s_i$ is a path of Type (T4), for some terminal-pair $p
      \in P(T_{t_i})$ then $S_i'$ shows that $T_{t_i}$ is $p$-connected.
 
  \end{itemize}
  Let $M$ be the set of all pairs $\{t_i,t_j\}\subseteq V_t \cup
  V_t^{\neg \gamma_\emptyset}$ such that $t_i \in V_t^{\neg \gamma_\emptyset}$ or
  $t_j \in V_t^{\neg \gamma_\emptyset}$ and $S$ contains a path that contains
  both edges $(t_i,t)$ and $(t,t_j)$. We claim that $M$ is a
  matching of $H(t)$. Since an edge $(t_i,t)$ can be used by at most
  one path in $S$, it follows that the pairs in $M$ are pairwise
  disjoint and it remains to show that $M \subseteq E(H(t))$, namely,
  $L(t_i) \cap L(t_j)\neq \emptyset$ for every $(t_i,t_j) \in M$.
  Let $s \in S$ be the path witnessing that $(t_i,t_j) \in M$.
  Because $s$ contains the edges $(t_i,t)$ and $(t,t_j)$, it
  cannot be of Type (T3) or (T4). Moreover, if $s$ is of Type (T2)
  then either $T_{t_i}$ or $T_{t_j}$ is $\gamma_X$-connected,
  contradicting our assumption that $t_i,t_j \in V_t \cup V_t^{\neg
    \gamma_\emptyset}$. Hence $s$ is of Type (T1) for some terminal-pair $p
  \in P(T_t)$, which implies that $T_{t_i}$ and $T_{t_j}$ are
  $p$-connected, as required.

  In the following let $t_i \in V^{\neg \gamma_\emptyset}$.
  Because of Observation (O1), we obtain that $S_i$ contains a path,
  say $s_i$, using the edge $(t_i,t)$ (otherwise $T_{t_i}$ is
  $\gamma_\emptyset$-connected). Moreover, together with Observations (O2)--(O4), we
  obtain that either:
  \begin{description}
  \item[(P1)] $s_i$ is a path of Type (T1) for some $p \in P(T_{t})$,
  \item[(P2)] $s_i$ is a path of Type (T2) for some $p \in P(T_t)$ and
    the endpoint of $s_i$ in $G[T_{t_i}\cup X]$ is a terminal $t \in
    p$,
  \item[(P3)] $s_i$ is a path of Type (T4) for some $p \in P(T_{t})$.
  \end{description}
  Because $t$ is not connected to $x$ and $t$ is not a terminal, we
  obtain that if $s_i$ satisfies (P1), then there is a $t_j \in
  \{t_1,\dotsc,t_j\}$ with $p \in L(t_j)$ such that $s_i$ contains the
  edge $(t_j,t)$. Similarly, if $s_i$ satisfies (P2) then there is
  a $t_j \in V_t^{\gamma_X}$ such that $s_i$ contains the edge
  $(t_j,t)$. Consequently, every $t_i \in V_t^{\neg \gamma_\emptyset}$ for
  which $s_i$ satisfies (P1) or (P2) is
  mapped to a unique $t_j \in \{t_1,\dotsc,t_l\}$ such that $s_i$
  contains the edge $(t_j,t)$.
  
  We now distinguish three cases depending on $\ell$.
  If $\ell=\emptyset$, then $S$ contains no path of type (T4) and
  hence for every $t_i \in V_t^{\neg \gamma_\emptyset}$ either (P1) or (P2)
  has to hold. In particular, for every $t_i \in V_t^{\neg
    \gamma_\emptyset}\setminus V(M)$
  there must be a $t_j \in V_t^{\gamma_X}$ such that $s_i$ contains
  the edge $(t_j,t)$. Consequently, if $|V_t^{\gamma_X}|<|V^{\neg
    \gamma_\emptyset}\setminus V(M)|$, this is not achievable
  contradicting our assumption that $S$ is a witness to the fact that
  $T_t$ is $\gamma_\emptyset$-connected.

  If $\ell=\gamma_X$, then $S$ contains a path of Type
  (T3), which due to Observation (O3) uses the edge $(t',t)$ for
  some $t' \in V_t^{\gamma_X}$. Since again (P1) or (P2) has to
  hold for every $t_i \in V_t^{\neg \gamma_\emptyset}$, we obtain that
  for every $t_i \in V_t^{\neg \gamma_\emptyset}\setminus V(M)$ there
  must be a $t_j \in V_t^{\gamma_X}\setminus \{t'\}$ such that $s_i$ contains the
  edge $(t_j,t)$. Consequently, if $|V_t^{\gamma_X}|\leq|V^{\neg
    \emptyset}\setminus V(M)$ then $|V_t^{\gamma_X}\setminus\{t_j\}|<|V^{\neg
    \emptyset}\setminus V(M)$ and this is not achievable
  contradicting our assumption that $S$ is a witness to the fact that
  $T_t$ is $\gamma_X$-connected.

  Finally, if $\ell=p$ for some $p \in P(T_t)$, then $S$ contains a
  path of Type (T4), which due to Observation (O4) uses the edge
  $(t',t)$ for some $t' \in \{t_1,\dotsc,t_l\}$ with $p \in
  L(t')$. Observe that $t'$ cannot be matched by $M$ and hence $M$ is
  also a matching of $H(t) - \{t'\}$. Moreover,
  Since (P1) or (P2) has to
  hold for every $t_i \in V_t^{\neg \gamma_\emptyset}\setminus \{t_j\}$, we obtain that
  for every $t_i \in V_t^{\neg \gamma_\emptyset}\setminus
  (V(M)\setminus \{t_j\}$ there
  must be a $t_j \in V_t^{\gamma_X}\setminus \{t'\}$ such that $s_i$ contains the
  edge $(t_j,t)$. Consequently, if
  $|V_t^{\gamma_X}\setminus \{t'\}|< |V^{\neg
    \emptyset}\setminus(V(M) \cup \{t'\})|$
  then this is not achievable
  contradicting our assumption that $S$ is a witness to the fact that
  $T_t$ is $p$-connected.
  
  Towards showing the reverse direction we will show how to construct
  a set $S$ of pairwise edge disjoint paths witnessing that $T_t$ is 
  $\ell$-connected. 

  If $\ell=\emptyset$, then let $M$ be a matching of
  $H(t)$ such that $|V^{\neg \gamma_\emptyset}\setminus V(M)|\leq
  |V_t^{\gamma_X}|$ and let $\alpha$ be a bijection between
  $V^{\neg \gamma_\emptyset}\setminus V(M)$ and $|V^{\neg
    \emptyset}\setminus V(M)|$ arbitrarily chosen elements in
  $V_t^{\gamma_X}$. Then the set $S$ of pairwise edge disjoint
  paths witnessing that $T_t$ is $\gamma_\emptyset$-connected is obtained as follows.
  \begin{description}
  \item[(S1)] For every $t_i \in V_t \setminus V(M)$, let $S'$ be a
    set of pairwise edge disjoint paths witnessing that $T_{t_i}$ is
    $\gamma_\emptyset$-connected. Add all paths in $S'$ to $S$.
  \item[(S2)] For every $t_i \in V_t^{\gamma_X} \setminus \SB \alpha(t')
    \SM t' \in V^{\neg \gamma_\emptyset}\setminus V(M)\SE$, let $S'$ be a
    set of pairwise edge disjoint paths witnessing that $T_{t_i}$ is
    $\gamma_\emptyset$-connected. Add all paths in $S'$ to $S$.
  \item[(S3)] For every $(t_i,t_j) \in M$, choose $p \in L(t_i) \cap
    P(T_j)$ arbitrarily and let $S_1$ and $S_2$ be sets of
    pairwise edge disjoint paths witnessing that $T_{t_i}$ and
    $T_{t_j}$ are $p$-connected, respectively.
    Moreover, let $s_1 \in S_1$ and $s_2 \in S_2$ be the unique paths
    connecting a terminal in $p$ with $t_i$ and $t_j$, respectively.
    Then add all path in $(S_1
    \setminus \{s_1\}) \cup (S_2\setminus \{s_2\})$ and additionally
    the path $s_1\concat (t_i,t,t_j)\concat s_2$ to $S$.
  \item[(S4)] For every $t_i \in V_t^{\neg \gamma_\emptyset} \setminus
    V(M)$, choose $p \in L(t_i)$ arbitrarily and let $S_1$ be a
    set of pairwise edge disjoint paths witnessing that $T_{t_i}$ is
    $p$-connected. Moreover, let $s_1$ be the unique path in $S_1$ connecting
    a terminal in $p$ with $t_i$. Furthermore, let $S_2$ be a set of
    pairwise edge disjoint paths witnessing that $T_{\alpha(t_i)}$ is
    $\gamma_X$-connected and let $s_2$ be the unique path in $S_2$
    connecting $x$ with $\alpha(t_i)$. Then add all path in $(S_1
    \setminus \{s_1\}) \cup (S_2\setminus \{s_2\})$ and additionally
    the path $s_1\concat (t_i,t,\alpha(t_i))\concat s_2$ to $S$.
  \end{description}
  It is straightforward to verify that $S$ witnesses that $T_t$ is
  $\gamma_\emptyset$-connected.

  If $\ell=\gamma_X$, then the set $S$ of pairwise edge disjoint
  paths witnessing that $T_t$ is $\gamma_X$-connected is defined
  analogously to the case where $\ell=\emptyset$ only that now step
  (S2) is replaced by:
  \begin{description}
  \item[(S2')] Chose one $t' \in V_t^{\gamma_X} \setminus \SB \alpha(t')
    \SM t' \in V^{\neg \gamma_\emptyset}\setminus V(M)\SE$
    arbitrarily; note that such a $t'$ must exist because $|V_t^{\neg
      \emptyset}\setminus V(M)|< |V_t^{\gamma_X}|$.
    Let $S'$ be a set of pairwise edge disjoint paths witnessing that
    $T_{t'}$ is $\gamma_X$-connected and let $s'$ be the unique
    path in $S'$ connecting $t'$ with $x$. Then add all paths in
    $S'\setminus s'$ to $S$ and additionally the path obtained from
    $s'$ after adding the edge $(t',t)$. Finally, for every child
    in $V_t^{\gamma_X} \setminus (\SB \alpha(t')
    \SM t' \in V^{\neg \gamma_\emptyset}\setminus V(M)\SE \cup
    \{t'\})$ proceed as described in Step (S2).
  \end{description}

  Finally, if $\ell=p$ for some $p \in P(T_t)$, then let $t_i$ be a
  child with $p \in L(t_i)$ and let $M$ be a matching of
  $H(t)- \{t_i\}$ such that $|V^{\neg \gamma_\emptyset}\setminus V(M)|\leq
  |V_t^{\gamma_X}|$. Moreover, and let $\alpha$ be a bijection between
  $V^{\neg \gamma_\emptyset}\setminus (V(M) \cup \{t_i\})$ and $|V^{\neg
    \emptyset}\setminus (V(M) \cup \{t_i\})|$ arbitrarily chosen elements in
  $V_t^{\gamma_X}\setminus \{t_i\}$. Then the set $S$ of pairwise edge disjoint
  paths witnessing that $T_t$ is $p$-connected is obtained analogously
  to the case where $\ell=\emptyset$ above only that we do not execute
  the steps (S1)--(S4) for $t_i$ but instead do the following:
  \begin{description}
  \item[(S0)] Let $S'$ be a set of pairwise edge disjoint paths
    witnessing that $T_{t_i}$ is $p$-connected and let $s'$ be the
    unique path in $S'$ connecting a terminal in $p$ to $t_i$. Then we
    add all path in $S' \setminus \{s'\}$ and additionally the path
    $s'\concat(t_i,t)$ to $S$.
  \end{description}
  This completes the proof of Lemma \ref{lem:charinnervertex}. 
\end{proof}
}
The following two lemmas show how the above characterization can be
employed to compute $L(t)$ for a non-leaf vertex $t$ of $T$. Since the
matching employed in Lemma~\ref{lem:charinnervertex} needs to maximize
the number of vertices covered in $V^{\neg \gamma_\emptyset}$, we
first show how such a matching can be computed efficiently.
\sv{\begin{LEM}[$\star$]}
\lv{\begin{LEM}}\label{lem:max-match}
  There is an algorithm that, given a graph
  $G$ and a subset $S$ of $V(G)$, computes a matching $M$ maximizing
  $|V(M) \cap S|$ in time $\bigoh(\sqrt{|V|}|E|)$.
\end{LEM}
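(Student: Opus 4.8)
The plan is to recognize this as a maximum-weight matching problem with tiny integer weights and invoke (a carefully instantiated version of) a standard matching algorithm. First I would observe that for any matching $M$ the sets $e\cap S$ with $e\in M$ are pairwise disjoint, so $|V(M)\cap S|=\sum_{e\in M}|e\cap S|$. Hence, assigning to each edge $e$ the weight $w(e):=|e\cap S|\in\{0,1,2\}$, the weight of a matching equals exactly $|V(M)\cap S|$, and a maximum-weight matching under $w$ is precisely a matching maximizing $|V(M)\cap S|$. Next I would delete all weight-$0$ edges, i.e.\ those with both endpoints outside $S$; this is harmless, since deleting a weight-$0$ edge from a matching never decreases its weight, so it changes neither the optimum nor the set of optimal matchings (up to these irrelevant edges). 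Call the resulting graph $G'$; every edge of $G'$ has weight $1$ or $2$.

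On $G'$ the remaining task is to compute a maximum-weight matching. It is worth noting the equivalent combinatorial phrasing: if $a$ is the number of edges of a matching $M$ with both endpoints in $S$, then in $G'$ we have $|M|=a+b$ (with $b$ the number of edges having exactly one endpoint in $S$) and $|V(M)\cap S|=2a+b=|M|+a$; thus we are asking for a \emph{maximum-cardinality matching of $G'$ that contains as many edges inside $S$ as possible}. I would then run a maximum-weight matching algorithm: since all weights lie in $\{1,2\}$, the scaling approach to weighted matching performs only $\bigoh(1)$ phases, each dominated by a maximum-cardinality matching computation, and using an $\bigoh(\sqrt{|V|}|E|)$-time cardinality-matching routine for those phases gives the claimed bound. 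Alternatively one can argue directly: compute $M_1$, a maximum matching of $G'[S]$ (the weight-$2$ subgraph), then obtain a maximum-cardinality matching of $G'$ and repair it along weight-improving alternating cycles/paths relative to $M_1$ so that it retains a maximum number of edges inside $S$; each step is again a single matching/augmentation computation, so the total time stays $\bigoh(\sqrt{|V|}|E|)$.

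The genuinely routine parts are the identity $|V(M)\cap S|=\sum_{e\in M}|e\cap S|$ and the harmlessness of deleting weight-$0$ edges. The main obstacle, and the place that needs care, is pinning down the exact running time: one must either cite a bounded-weight weighted-matching result whose complexity is $\bigoh(\sqrt{|V|}|E|)$ for weights in $\{1,2\}$ (being careful about stray logarithmic factors), or give the self-contained repair argument above and verify that the usual tie-breaking subtlety — "maximize weight \emph{among} maximum-cardinality matchings" cannot be done by a single naive cardinality matching, because a maximum-cardinality matching of $G'$ need not maximize the number of inside-$S$ edges — is resolved by the alternating-cycle fixups. Everything else (e.g.\ that augmenting a matching in $G'$ never decreases $|V(M)\cap S|$, since augmentation only adds vertices to $V(M)$, so the optimum is attained by some maximum-cardinality matching of $G'$) is a short verification.
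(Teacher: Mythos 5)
Your reduction is exactly the paper's: assign each edge $e$ the weight $|e\cap S|\in\{0,1,2\}$, observe that the weight of a matching then equals $|V(M)\cap S|$, and solve maximum-weight matching, for which the paper simply cites an $\bigoh(\sqrt{|V|}|E|)$ algorithm (Micali--Vazirani). Your extra care about whether that bound genuinely applies to the weighted case (the cited bound is for cardinality matching, so a bounded-weight scaling or repair argument is indeed needed to be fully rigorous) is a reasonable refinement, but the route is the same as the paper's.
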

\lv{\begin{proof}
  We reduce the problem to \textsc{Maximum Weighted Matching} problem,
  which can be solved in time
  $\bigoh(\sqrt{|V|}|E|)$~\cite{MicaliVazirani80}. The reduction simply
  assigns weight two to every edge that is completely contained in $S$,
  weight one to every edge between $S$ and $V(G) \setminus S$, and
  weight zero otherwise. The correctness of the reduction follows
  because the weight of any matching $M$ in the
  weighted graph equals the number of vertices in $S$ covered by $M$.
\end{proof}
}
\lv{\begin{LEM}}
\sv{\begin{LEM}[$\star$]}
\label{lem:comp-labels-inner}
  Let $t$ be a non-leaf vertex of $T$ with children
  $t_1,\dotsc,t_l$. Then $L(t)$ can be computed from
  $L(t_1),\dotsc,L(t_l)$ in time $\bigoh(|P(T_t)|l^2\sqrt{l})$.
\end{LEM}
\lv{\begin{proof}
  First we construct the graph $H(t)$ in time $\bigoh(l^2|P(T_t)|)$. We
  then check for every $\ell \in \{\gamma_\emptyset,\gamma_X\}\cup
  P(T_t)$, whether $T_t$ is $\ell$-connected with the help of
  Lemma~\ref{lem:charinnervertex} as follows. If $\ell \in
  \{\gamma_\emptyset,\gamma_X\}$ we compute a matching $M$ in $H(t)$ that
  maximizes $|V(M) \cap V_t^{\neg \gamma_\emptyset}|$, i.e., the
  number of matched vertices in $V_t^{\neg \gamma_\emptyset}$. This can be
  achieved according to Lemma~\ref{lem:max-match} in time
  $\bigoh(\sqrt{l}l^2)$. Then in
  accordance with Lemma~\ref{lem:charinnervertex}, we add $\emptyset$
  or $\gamma_X$ to $L(t)$ if
  $|V^{\neg \gamma_\emptyset} \setminus V(M)|\leq
  |V_t^{\gamma_X}|$ or $|V^{\neg \gamma_\emptyset} \setminus
  V(M)|\leq |V_t^{\gamma_X}|$, respectively.
  For any $\ell \in P(T_t)$, again in accordance with
  Lemma~\ref{lem:charinnervertex}, we compute for every child $t'$ of $t$
  with $\ell \in P(T_{t'})$, a matching $M$ in $H(t) - \{t'\}$
  that maximizes $|V(M) \cap V_t^{\neg \gamma_\emptyset}|$. Since $t$
  has at most two children $t'$ with $\ell \in P(T_{t'})$ and
  due to Lemma~\ref{lem:max-match} such a matching can be
  computed in time $\bigoh(\sqrt{l}l^2)$, this is also the total running time for this
  step of the algorithm. If one of the at most two such matchings $M$
  satisfies $|V^{\neg \gamma_\emptyset} \setminus V(M)|\leq
  |V_t^{\gamma_X}|$, we add $\ell$ to $L(t)$ and otherwise we do
  not.
  This completes the description of the algorithm whose total running
  time can be obtained as the time required to construct
  $H(t)$ ($\bigoh(l^2|P(T_t)|$) plus $|P(T_t)|+2$ times the time required
  to compute the required matching in $H(t)$ ($\bigoh(\sqrt{l}l^2)$), which
  results in a total running time of
  $\bigoh(l^2|P(T_t)|+|P(T_t)|\sqrt{l}l^2)=O(|P(T_t)|l^2\sqrt{l})$.
\end{proof}}
We are now ready to put everything together into an algorithm that
decides whether a tree $T$ is $\gamma_\emptyset$-connected.
\begin{LEM}\label{lem:dp-alg}
  Let $T$ be a tree in $G - X$. There is an algorithm that
  decides whether $T$ is $\gamma_\emptyset$-connected in time $\bigoh(|P(T)||V(T)|^{\frac{5}{2}})$.
\end{LEM}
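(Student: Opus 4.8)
The plan is to compute the label sets $L(t)$ for all $t\in V(T)$ by a single bottom-up pass over $T$, and then to report that $T$ is $\gamma_\emptyset$-connected if and only if $\gamma_\emptyset\in L(r)$, where $r$ is the root of $T$; this last equivalence is immediate from Definition~\ref{def:connectedness} since $T=T_r$.

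As a preprocessing step I would compute, for every $t\in V(T)$, the set $P(T_t)$ (or rather a handle to it together with its cardinality), which can be done in time $\bigoh(|P(T)|+|V(T)|)$ by propagating terminal pairs upward along $T$ — well within the claimed budget. Then I would process the vertices of $T$ in post-order. For a leaf $l$, Lemma~\ref{lem:comp-labels-leaf} computes $L(l)$ in time $\bigoh(1)$. For a non-leaf vertex $t$ with children $t_1,\dotsc,t_{c}$, once $L(t_1),\dotsc,L(t_{c})$ have been computed, Lemma~\ref{lem:comp-labels-inner} computes $L(t)$ in time $\bigoh(|P(T_t)|\, c^2\sqrt{c})$. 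An easy induction along the post-order, using Lemma~\ref{lem:charinnervertex} as the correctness guarantee for the inner step, shows that every $L(t)$ is computed correctly; in particular $L(r)$ is correct when the pass terminates, which establishes correctness of the algorithm.

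It remains to bound the total running time. Write $n=|V(T)|$ and $m=|P(T)|$, and for a vertex $t$ let $c_t$ denote its number of children, so that $c_t\le n$, $|P(T_t)|\le m$, and $\sum_{t\in V(T)}c_t=n-1$ (every vertex other than $r$ is a child of exactly one vertex). The work done at a non-leaf vertex $t$ is $\bigoh(m\, c_t^{5/2})$, and we bound $c_t^{5/2}=c_t\cdot c_t^{3/2}\le c_t\cdot n^{3/2}$. Summing over all vertices, the total is $\bigoh\!\big(m\, n^{3/2}\sum_{t}c_t\big)=\bigoh(m\, n^{3/2}\cdot n)=\bigoh(m\, n^{5/2})=\bigoh(|P(T)|\,|V(T)|^{5/2})$, and adding the $\bigoh(n)$ spent on leaves and the $\bigoh(m+n)$ preprocessing does not change this bound.

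The step I expect to be the main obstacle is not any individual subroutine — those are delivered by Lemmas~\ref{lem:comp-labels-leaf}--\ref{lem:comp-labels-inner} — but rather making the amortized running-time accounting tight: the naive estimate $\sum_t c_t^{5/2}\le n\cdot n^{5/2}$ is far too weak, and one must instead exploit that $\sum_t c_t$ is \emph{linear} in $n$ while extracting only an $n^{3/2}$ factor from the per-node cost $c_t^{5/2}$. Once this is done carefully, the lemma follows.
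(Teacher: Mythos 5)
Your proposal is correct and follows essentially the same route as the paper: a post-order (bottom-up) dynamic program that computes $L(t)$ via Lemma~\ref{lem:comp-labels-leaf} at leaves and Lemma~\ref{lem:comp-labels-inner} at inner nodes, with correctness resting on Lemma~\ref{lem:charinnervertex} and the observation that $T=T_r$ is $\gamma_\emptyset$-connected iff $\gamma_\emptyset\in L(r)$. Your explicit amortized accounting, bounding $\sum_t c_t^{5/2}\le n^{3/2}\sum_t c_t=\bigoh(n^{5/2})$ using $\sum_t c_t=n-1$, is in fact a welcome sharpening of the paper's terser statement that the total time is just the sum of the per-node costs.
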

\begin{proof}
  The algorithm computes the set of labels $L(t)$ for every vertex $t
  \in V(T)$ using a bottom-up dynamic programming approach. Starting from
  the leaves of $T$, for which the set of labels can be computed due to
  Lemma~\ref{lem:comp-labels-leaf} in constant time, it uses
  Lemma~\ref{lem:comp-labels-inner} to compute $L(t)$ for every inner
  node $t$ of $T$ in time $\bigoh(|P(T_t)|l^2\sqrt{l})$. The total running
  time of the algorithm is then the sum of the running time for any
  inner node of $T$ plus the number of leaves of $T$, i.e.,
  $\bigoh(|P(T)||V(T)|^{\frac{5}{2}})$.
\end{proof}
\setcounter{THE}{0}
\begin{THE}
  {\sc SEDP} can be solved in time 
  $\bigoh(|P||V(G)|^{\frac{5}{2}})$.
\end{THE}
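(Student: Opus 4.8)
The plan is to assemble the pieces already established in this section: the solution criterion of Lemma~\ref{lem:dp-sol-cri} together with the per-tree dynamic programming procedure of Lemma~\ref{lem:dp-alg}. First I would observe that the three additional structural assumptions imposed on a SEDP instance at the start of this section — that every neighbor of $x$ in $G$ is a leaf of $G-X$, that $x\notin\bigunion{P}$, and that every tree of $G-X$ can be rooted at a non-terminal vertex — can all be enforced by the gadget constructions described there. Each such construction only attaches new leaf vertices (at most $\bigoh(|V(G)|)$ of them), keeps $\{x\}$ a feedback vertex set, and is answer-preserving; hence it suffices to solve SEDP on instances satisfying these assumptions, at the cost of only a constant-factor blow-up in $|V(G)|$ and no change to $|P|$.

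Given such a preprocessed instance $(G,P,X)$ with $X=\{x\}$, the algorithm computes the forest $F=G-X$ in linear time, selects for every tree of $F$ a non-terminal root (which exists by assumption), and then, for each tree $T$ of $F$, invokes the procedure of Lemma~\ref{lem:dp-alg} to decide whether $T$ is $\gamma_\emptyset$-connected. If all trees pass this test the algorithm outputs that $(G,P,X)$ is a yes-instance, and otherwise outputs a no-instance. Correctness is immediate from Lemma~\ref{lem:dp-sol-cri}, which states precisely that $(G,P,X)$ has a solution if and only if every tree of $G-X$ is $\gamma_\emptyset$-connected.

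For the running time, Lemma~\ref{lem:dp-alg} bounds the cost of processing a single tree $T$ by $\bigoh(|P(T)|\,|V(T)|^{\frac{5}{2}})$. The trees of $F$ are vertex-disjoint, so $\sum_T |V(T)|\le |V(G)|$, and since $x\mapsto x^{5/2}$ is superadditive on the nonnegative reals we get $\sum_T |V(T)|^{\frac{5}{2}}\le\bigl(\sum_T |V(T)|\bigr)^{\frac{5}{2}}\le |V(G)|^{\frac{5}{2}}$; moreover $|P(T)|\le |P|$ for every $T$ as $P(T)\subseteq P$. Hence the total time over all trees is $\sum_T |P(T)|\,|V(T)|^{\frac{5}{2}}\le |P|\cdot|V(G)|^{\frac{5}{2}}$, which dominates the linear-time preprocessing and forest decomposition, giving the claimed bound $\bigoh(|P|\,|V(G)|^{\frac{5}{2}})$.

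The only genuinely delicate step is verifying that the preprocessing of the three structural assumptions is simultaneously answer-preserving and size-preserving up to a constant factor; the correctness of the dynamic programming itself is already encapsulated in Lemmas~\ref{lem:comp-labels-leaf}–\ref{lem:dp-alg}, and the running-time aggregation is just the superadditivity estimate above. I expect no further obstacles.
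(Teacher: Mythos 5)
Your proposal is correct and follows essentially the same route as the paper: apply the per-tree procedure of Lemma~\ref{lem:dp-alg} to every tree of $G-X$ and conclude via the criterion of Lemma~\ref{lem:dp-sol-cri}. The additional details you supply (the leaf-attaching preprocessing and the superadditivity estimate for aggregating the per-tree running times) are sound and simply make explicit what the paper leaves implicit.
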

\begin{proof}
  We first employ Lemma~\ref{lem:dp-alg} to determine whether every
  tree $T$ of $G - X$ is $\gamma_\emptyset$-connected. If so we
  output \textsc{Yes} and otherwise \textsc{No}. Correctness
  follows from Lemma~\ref{lem:dp-sol-cri}.
\end{proof}

\section{Treewidth and Maximum Degree}

The goal of this section is to obtain an fpt-algorithm for
EDP parameterized by the treewidth $\omega$ and maximum degree $\Delta$ of the input graph.
\setcounter{THE}{1}
\begin{THE}
  \textsc{EDP} can be solved in time $2^{\bigoh(\Delta\omega^2)}\cdot n$, where $\omega$, $\Delta$ and $n$ are the treewidth, maximum degree and number of vertices of the input graph $G$, respectively.
\end{THE}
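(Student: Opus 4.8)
The plan is to design a dynamic programming algorithm over a nice tree-decomposition of $G$ of width $\omega$, computed in time $2^{\bigoh(\omega)}\cdot n$ via Fact~\ref{fact:findtw}. The central idea is that a solution to an EDP instance, restricted to the subgraph $G[Y_t]$ induced by the vertices below a node $t$, interacts with the rest of the graph only through the edges incident to the bag $B_t$. Since every vertex has degree at most $\Delta$, the bag $B_t$ is incident to at most $\Delta\cdot(\omega+1)$ edges, and each such edge is either used by exactly one path of the solution or unused. Thus the ``interface'' of a partial solution at $t$ can be described by (i) a partition of the $\le \Delta(\omega+1)$ boundary edges into those carrying a path-fragment and those that do not, together with (ii) a pairing/labeling that records, for each boundary edge carrying a fragment, which ``loose end'' it connects to — i.e. a matching-like structure on the boundary edges augmented with, for each class, the identity of the terminal pair(s) whose path(s) the fragment belongs to, or the marker that the fragment is a path-segment whose both endpoints still lie outside $Y_t$. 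Because there are at most $\Delta(\omega+1)$ boundary edges, the number of such interface patterns is bounded by a function of $\Delta$ and $\omega$ alone; a crude bound is $2^{\bigoh(\Delta\omega\log(\Delta\omega))}$, but by being careful about what must be remembered (one only needs, per boundary edge, a colour from a palette of size $\bigoh(\Delta\omega)$ identifying which ``connection slot'' it feeds, since a terminal pair whose both terminals are outside $Y_t$ need not be named, only tracked as an anonymous through-route linking two slots) one gets the claimed $2^{\bigoh(\Delta\omega^2)}$.

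The DP table entry $\mathcal{D}_t$ is the set of \emph{realizable} interface patterns at $t$: a pattern is realizable if there is a set of pairwise edge-disjoint walks inside $G[Y_t]$ that (a) for every terminal pair $\{s,s'\}\in P$ with both endpoints in $Y_t$ contains an $s$–$s'$ path, (b) for every terminal pair with exactly one endpoint $s$ in $Y_t$ contains a path from $s$ to one boundary edge, consistently with the pattern, and (c) has all its remaining loose ends matched up by the pattern on the boundary edges, with the through-route slots consistently identified. The processing at the four node types is routine: at a \textbf{leaf node} only the empty pattern (or, if the single vertex is a terminal, patterns routing it out along its unique incident edge) is realizable; at an \textbf{introduce node} one extends each child pattern by deciding, for each newly-available edge incident to the introduced vertex, whether it is used and which slot it feeds, checking edge-disjointness and degree constraints at the new vertex; at a \textbf{forget node} one must verify that the forgotten vertex $v$ is no longer needed as an endpoint — i.e. every terminal incident to $v$ has been routed, and no loose end is ``stranded'' at an edge that has just left the boundary — and then project away the information about edges incident only to $v$, merging slots as through-routes where two fragments have just been joined at $v$; at a \textbf{join node} one takes pairs of child patterns that agree on the boundary edge set and combines their slot structures, declaring a combined pattern realizable only if gluing the two fragment-systems along $B_t$ produces no conflict — here one must take the union of the two ``slot graphs'' and check that it is a disjoint union of paths (no cycles, no vertex of degree $\ge 3$), so that each anonymous through-route in one child meets at most one in the other. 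The final answer is \textsc{Yes} iff the root table (with empty bag, hence only the empty boundary) contains the pattern asserting that all terminal pairs have been internally connected.

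The main obstacle, and where the bookkeeping must be done carefully, is the \textbf{join node}: one has to merge the slot structures of the two children without ``double counting'' an edge and without accidentally creating a closed walk or a branching that would violate edge-disjointness or the path requirement. The clean way to handle this is to insist that the slot structure of every realizable pattern is itself a partial matching/linear-forest on an abstract set of $\bigoh(\Delta\omega)$ terminals-and-slots, so that ``gluing'' is just taking the union of two linear forests on the same vertex set and testing that the result is again a linear forest with the right endpoints — a constant-time check per pattern pair. A secondary subtlety is bounding the palette size to get $\omega^2$ rather than $\omega\log\omega$ in the exponent: since a boundary edge's slot can be named simply by (the index of) the other boundary edge it is currently routed to, or by a $\bigoh(1)$-bit tag for ``connects to an internal terminal'' / ``free loose end'', the table size is at most $(\Delta(\omega+1)+\bigoh(1))^{\Delta(\omega+1)} = 2^{\bigoh(\Delta\omega\log(\Delta\omega))}$; absorbing the logarithmic factor into $\omega$ (using $\log(\Delta\omega)\le\omega+\log\Delta$ and re-parameterizing, or simply noting $\Delta\omega\log(\Delta\omega)=\bigoh(\Delta\omega^2)$ when $\Delta\le 2^\omega$, and handling $\Delta>2^\omega$ separately since then $\omega\le\log\Delta$) yields the stated $2^{\bigoh(\Delta\omega^2)}$. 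With $\bigoh(n)$ nodes in the nice tree-decomposition and $2^{\bigoh(\Delta\omega^2)}$ work per node, the total running time is $2^{\bigoh(\Delta\omega^2)}\cdot n$, as claimed.
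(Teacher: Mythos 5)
Your overall strategy is the same as the paper's: a leaves-to-root dynamic program over a nice tree-decomposition whose states describe how the edge-disjoint path fragments inside $G[Y_t]$ meet the bag, using the degree bound to limit the number of fragments crossing $B_t$ to $\bigoh(\Delta\omega)$ and hence the table size to $2^{\bigoh(\Delta\omega\log(\Delta\omega))}\subseteq 2^{\bigoh(\Delta\omega^2)}$. However, your definition of a realizable pattern has a genuine gap. Condition (a) insists that every terminal pair $\{s,s'\}$ with \emph{both} endpoints in $Y_t$ be connected by a path inside $G[Y_t]$. This is too strong: in a valid solution the $s$--$s'$ path may leave $Y_t$ through the separator and be completed using edges above $t$ (e.g.\ because all internal connections are needed by other pairs). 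The paper's records contain a component (\textsf{used}) precisely for this case: it records, as a multiset of pairs of bag vertices, that the two fragments emanating from $s$ and $s'$ still require an external connection between those two positions. Without such a component your DP is incomplete and can wrongly answer \textsc{No}.

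The second problem is your final state compression. Tagging a boundary edge merely as ``connects to an internal terminal'' discards which terminal it is, and discards which two exported loose ends of a both-inside pair must later be joined \emph{to each other}. Both pieces of information are needed for soundness: if $s_1,s_2\in Y_t$ have counterparts $c_1,c_2\notin Y_t$, an anonymous tag lets the algorithm above $t$ connect $c_1$ to the fragment ending at $s_2$ and $c_2$ to the fragment ending at $s_1$, certifying a path system that realizes $s_1$--$c_2$ and $s_2$--$c_1$ but not the required pairs, and such a crossed routing cannot in general be uncrossed within the same edge set. The fix costs nothing asymptotically: the split terminals present at $t$ are determined by $t$ and at most $\Delta(\omega+1)$ of them can be routed, so storing a map from these terminals to boundary positions (the paper's \textsf{single}) and a multiset of required external connections (\textsf{used}) together with the through-capacities (\textsf{give}) keeps the table at $2^{\bigoh(\Delta\omega^2)}$. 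Finally, your join step only gestures at the double-counting of edges with both endpoints in $B_t$; the paper avoids this cleanly by deleting intra-bag edges from $G_t$ and accounting for them only at forget nodes, and you would need some equally explicit convention for your edge-based states.
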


\begin{proof}
Let $(G,P)$ be an instance of \textsc{EDP} and let $(T,\cal B)$ be a
nice tree-decomposition of $G$ of width at most $k=5\omega+4$; recall
that such $(T,\cal B)$ can be computed in time $2^{\bigoh(k)}$ by Fact~\ref{fact:findtw}.
Consider the following leaf-to-root dynamic programming algorithm $\AA$, executed on $T$. 
At each bag $B_t$ associated with a node $t$ of $T$, $\AA$ will compute a table ${\cal M}_t$ of \emph{records}, which are tuples of the form $\{(\textsf{used},\textsf{give},\textsf{single})\}$ where:
\begin{itemize}
\item {\sf used} is a multiset of subsets of $B_t$ of cardinality $2$ with
  each subset occurring at most $\Delta$ times,
\item {\sf give} is a mapping from subsets of $B_t$ of cardinality $2$ to $[\Delta]$, and
\item {\sf single} is a mapping which maps each terminal $a_i\in Y_t$ such that its counterpart $b_i\not \in Y_t$ to an element of $B_t$.
\end{itemize}

Before we proceed to describe the steps of the algorithm itself, let
us first introduce the semantics of a record. For a fixed $t$, we will
consider the graph $G_t$ obtained from $G[Y_t]$ by removing all edges
with both endpoints in $B_t$ (we note that this ``pruned'' definition
of $G_t$ is not strictly necessary for the algorithm, but makes
certain steps easier later on). Then
$\alpha=\{(\textsf{used},\textsf{give},\textsf{single})\}\in {\cal
  M}_t$ if and only if there exists a set of edge disjoint paths $Q$
in $G_t$ and a surjective mapping $\tau$ from terminal pairs occurring
in $Y_t$ to subsets of $B_t$ of size two with the following properties:
\begin{itemize}
\item For each terminal pair $ab$ that occurs in $Y_t$: 
\begin{itemize}
\item[$\bullet$] $Q$ either contains a path whose endpoints are $a$ and $b$, or 
\item[$\bullet$] $Q$ contains an $a$-$x_1$ path for some $x_1\in B_t$ and a $b$-$x_2$ path for some $x_2\in B_t$ which is distinct from $x_1$, and furthermore $\tau(ab)=\{x_1,x_2\}\in$used;
\end{itemize}
\item for each terminal pair $ab$ such that $a\in Y_t$ but $b\not \in Y_t$:
\begin{itemize}
\item[$\bullet$] $Q$ contains a path whose endpoints are $a$ and $x\in B_t$, where $(a,x)\in \textsf{single}$;
\end{itemize}
\item for each distinct $x_1,x_2\in B_t$, $Q$ contains precisely $\textsf{give}(\{x_1,x_2\})$ paths from $x_1$ to $x_2$.
\end{itemize}

In the above case, we say that $Q$ witnesses $\alpha$. It is important
to note that the equivalence between the existence of records and sets
$Q$ of pairwise edge disjoint paths only holds because of the bound on the
maximum degree. That is because every vertex of $G$ has degree at most
$\Delta$, it follows that any set $Q$ of pairwise edge
disjoint paths can contain at most $\Delta$ paths containing a vertex
in the boundary.
Moreover, we note that by
reversing the above considerations, given a set of edge disjoint paths
$Q$ in $G_t$ satisfying a certain set of conditions, we can construct
in time $3^{\Delta k}$ a set of records in ${\cal M}_t$ that are
witnessed by $Q$ (one merely needs to branch over all options of
assigning paths in $\alpha$ which end in the boundary: they may either
contribute to give or to single or to used). These conditions are that
each path either (i) connects a terminal pair, (ii) connects a terminal
pair to two vertices in $B_t$, (iii) connects two vertices in
$B_t$, or (iv) connects a terminal $a \in Y_t$ whose counterpart $b \notin Y_t$ to a vertex in $B_t$.

$\AA$ runs as follows: it begins by computing the records ${\cal M}_t$ for each leaf $t$ of $T$. It then proceeds to compute the records for all remaining nodes in $T$ in a bottom-up fashion, until it computes ${\cal M}_r$. Since $B_r=\emptyset$, it follows that $(G,P)$ is a yes-instance if and only if $(\emptyset, \emptyset, \emptyset)\in {\cal M}_r$. For each record $\alpha$, it will keep (for convenience) a set $Q_\alpha$ of edge disjoint paths witnessing $\alpha$. Observe that while for each specific $\alpha$ there may exist many possible choices of $Q_\alpha$, all of these interact with $B_t$ in the same way.

We make one last digression before giving the procedures used to
compute ${\cal M}_t$ for the four types of nodes in nice
tree-decompositions. First, observe that the size of one particular
record is at most $\Delta k^2+\Delta k^2+|\textsf{single}|$. Since the
number of edge disjoint paths in $G_t$ ending in $B_t$ is
upper-bounded by $\Delta k$, it follows that each record in ${\cal
  M}_t$ satisfies $|\textsf{single}|\leq \Delta k$ and in particular
each such record has size at most $\bigoh(\Delta k^2)$. As a
consequence, $|{\cal M}_t|\leq 2^{\bigoh(\Delta k^2)}$ for each node
$t$ in $T$, which is crucial to obtain an upper-bound on the running time of
$\AA$.
\sv{Due to space reasons the details on how the records can be
  computed for the four types of nodes of a nice tree decomposition, i.e., leaf, introduce, join,
  and forget, have been moved to the appendix.
  }
\lv{
\begin{description}[leftmargin = 0cm]
\item[Case 1:]{\em $t$ is a leaf node.} 
If $v\in B_t$ is not a terminal, then ${\cal M}_t=\{\emptyset, \emptyset, \emptyset\}$. On the other hand, if $v\in B_t$ is a terminal, then ${\cal M}_t=\{\emptyset, \emptyset, \{(v,v)\}\}$.

\item[Case 2:]{\em $t$ is an introduce node:}
Let $p$ be the child of $t$ in $T$ and let $v$ be the vertex introduced at $t$, i.e., $v\in B_t\setminus B_p$. Recall that $v$ has no neighbors in $G_t$. As a consequence, if $v$ is not a terminal, then ${\cal M}_t={\cal M}_p$. If $v$ is a terminal and its counterpart $w$ lies in $G_t$, then for each $\{(\textsf{used},\textsf{give},\textsf{single})\}\in {\cal M}_p$ we add the record $\{(\textsf{used}',\textsf{give}',\textsf{single}')\}$ to ${\cal M}_t$, where:
\begin{itemize}
\item $\textsf{give}'=\textsf{give}$,
\item $\textsf{single}'$ is obtained by deleting the unique tuple $(w,?)$ from single, and
\item $\textsf{used}'$ is obtained by adding the subset $\{v, ?\}$ to used.
\end{itemize}
Finally, if $v$ is a terminal and its counterpart $w$ does not lie in $G_t$, then for each $\{(\textsf{used},\textsf{give},\textsf{single})\}\in {\cal M}_p$ we add the record $\{(\textsf{used},\textsf{give},\textsf{single}\cup{(v,v)})\}$ to ${\cal M}_t$.

\item[Case 3:] {\em $t$ is a join node:}
Let $p$, $q$ be the two children of $t$ in $T$. For each $\{(\textsf{used}_p,\textsf{give}_p,\textsf{single}_p)\}\in {\cal M}_p$ and each $\{(\textsf{used}_q,\textsf{give}_q,\textsf{single}_q)\}\in {\cal M}_p$, we add a new record $\{(\textsf{used},\textsf{give},\textsf{single})\}$ to ${\cal M}_t$ constructed as follows:
\begin{enumerate}
\item for each $x,y\in B_t$, we set $\textsf{give}(xy):=\textsf{give}_p(xy)+\textsf{give}_q(xy)$;
\item for each terminal pair $v,w$ such that $(v,?_v)\in \textsf{single}_p$ and $(w,?_w)\in \textsf{single}_q$ where $?_v=?_w$, 
\begin{itemize}
\item[$\bullet$] we delete $(v,?_v)$ from $\textsf{single}_p$ and 
\item[$\bullet$] we delete $(w,?_w)$ from $\textsf{single}_q$;
\end{itemize}
\item for each terminal pair $v,w$ such that $(v,?_v)\in \textsf{single}_p$ and $(w,?_w)\in \textsf{single}_q$ where $?_v\neq ?_w$, 
\begin{itemize}
\item[$\bullet$] we delete $(v,?_v)$ from $\textsf{single}_p$ and 
\item[$\bullet$] we delete $(w,?_w)$ from $\textsf{single}_q$ and
\item[$\bullet$] we add $\{?_v,?_w\}$ to used;
\end{itemize}
\item we set $\textsf{used}:=\textsf{used}\cup \textsf{used}_p \cup \textsf{used}_q$;
\item we set $\textsf{single}:=\textsf{single}_p \cup \textsf{single}_q$;
\item finally, we restore the records in ${\cal M}_p$ and ${\cal M}_q$ to their original state as of step $1$ (i.e., we restore all deleted items).
\end{enumerate}

\item[Case 4:]{\em $t$ is a forget node:}
Let $p$ by the child of $t$ in $T$ and let $v$ be the vertex forgotten at $t$, i.e., $v\in B_p\setminus B_t$. We note that this will be the by far most complicated step, since we are forced to account for the edges between $v$ and its neighbors in $B_t$. 

Let us begin by considering how the records in ${\cal M}_t$ can be obtained from those in ${\cal M}_p$; in particular, let us fix an arbitrary $\alpha=(\textsf{used}_t,\textsf{give}_t,\textsf{single}_t)\in {\cal M}_t$. This means that there exists a set $Q_\alpha$  of edge disjoint paths (along with a mapping $\tau$) in $G_t$ satisfying the conditions given by $\alpha$. Furthermore, let $E_v=\SB wv\in E(G)\SM w\in B_t\SE$, i.e., $E_v$ is the set of edges which are not present in $G_p$ but are present in $G_t$.
The crucial observation is that for each set $Q_\alpha$, there exists a set $Q_\beta$ of edge disjoint paths satisfying the conditions given by some $\beta=(\textsf{used}_p,\textsf{give}_p,\textsf{single}_p)\in {\cal M}_p$ such that $Q_\alpha$ can be obtained by merely ``extending'' $Q_\beta$ using the edges in $E_v$; in particular, $E_v$ can increase the number of paths contributing to $\textsf{give}_p$, change the endpoints of paths captured by $\textsf{used}_p$, and also change the endpoints of paths captured by $\textsf{single}_p$.

Formally, we will proceed as follows. For each $\beta\in {\cal M}_p$,
we begin with the witness $Q_\beta$ stored by $\AA$, and then branch
over all options of how the addition of the edges in $E_v$ may be used
to augment the paths in $Q_\beta$. In particular, since $|E_v|\leq k$,
there are at most $k+1$ vertices incident to an edge in $E_v$, and due
to the degree bound of $\Delta$ it then follows that there are at most
$\Delta(k+1)$ distinct paths in $Q_\beta$ which may be augmented by
$E_v$. Since each edge in $E_v$ may either be assigned to extend one
such path or form a new path, we conclude that there exist at most
$k^{\bigoh(\Delta k)}$ possible resulting sets of edge-disjoint paths
in $G_t$. For each such set $Q'$ of edge-disjoint paths, we then check
that conditions (i)-(iv) for constructing records witnessed by $Q'$
hold, and in the positive case we construct these records in time
$3^{\Delta k}$ as argued earlier and add them to ${\cal M}_t$.
\end{description}
}

\paragraph*{Summary and running time.}
Algorithm $\AA$ begins by invoking Fact~\ref{fact:findtw} to compute a
tree-decomposition of width at most $k=5\omega+4$ and $\bigoh(n)$
nodes, and then converts this into a nice tree-decomposition $(T,\cal
B)$ of the same width and also $\bigoh(n)$ nodes. It then proceeds to
compute the records ${\cal M}_t$ (along with corresponding witnesses)
for each node $t$ of $T$ in a leaves-to-root fashion, using the
procedures described above. The number of times any procedure is
called is upper-bounded by $\bigoh(n)$, and the running time of every
procedure is upper-bounded by the worst-case running time of the
procedure for forget nodes. There, for each record $\beta$ in the data
table of the child of $t$, the algorithm takes its witness $Q_\beta$
and uses branching to construct at most $k^{\bigoh(\Delta k)}$ new
witnesses (after the necessary conditions are checked). Each such
witness $Q_\alpha$ gives rise to a set of records that can be computed
in time $3^{\Delta k}$, which are then added to ${\cal M}_t$ (unless
they are already there). All in all, the running time of this
procedure is upper-bounded by $2^{\bigoh(\Delta k^2)}\cdot
k^{\bigoh(\Delta k)}\cdot 3^{\Delta k}=2^{\bigoh(\Delta k^2)}$, and the run-time of the algorithm follows.
\end{proof}

\section{Lower Bounds of EDP for Parameters of the Augmented Graph}

\sv{
  This section is devoted to a proof of the following theorem.
  \begin{THE}[$\star$]
  \label{the:hard-edp-fvs}
    EDP is \W{1}-hard parameterized by the feedback vertex set number
    of the augmented graph.
  \end{THE}
  Note that since the feedback vertex set number upper-bounds the
  treewidth, Theorem~\ref{the:hard-edp-fvs} complements the
  result in~\cite{ZhouTN00} showing that EDP is solvable in
  polynomial time for bounded treewidth. 

  As intermediate steps for the proof of Theorem~\ref{the:hard-edp-fvs} 
  we establish the \W{1}-hardness of
  several interesting variants of a multidimensional version of
  the well-known \textsc{Subset Sum} problem as well as several directed and
  undirected versions of EDP, which we believe are interesting in
  their own right. Namely, the proof starts by showing
  \W{1}-hardness for the following problem using a reduction from the
  well-known {\sc Multi-Colored Clique (MCC)} problem~\cite{DowneyFellows13}.
  
  \pbDefP{\textsc{Multidimensional Subset Sum (MSS)}}{An integer $k$, a set $S=\{s_1,\dotsc,s_n\}$ of item-vectors with $s_i \in \Nat^{k}$ for every $i$ with $1\leq i \leq n$ and a target vector $t \in \Nat^k$.}{$k$}{Is there a subset $S' \subseteq S$ such that $\sum_{s \in S'}s=t$?}

  Using a reduction from \textsc{MSS}, the proof then continues by establishing 
  \W{1}-hardness for the following more relaxed version of
  \textsc{MSS}.

  \pbDefP{\textsc{Multidimensional Relaxed Subset Sum (MRSS)}}{An
  integer $k$, a set $S=\{s_1,\dotsc,s_n\}$ of item-vectors with $s_i
  \in \Nat^{k}$ for every $i$ with $1\leq i \leq n$, a target vector
  $t \in \Nat^k$, and an integer $k'$.}{$k+k'$}{Is there a subset $S'
  \subseteq S$ with $|S'|\leq k'$ such that $\sum_{s \in S'}s\geq t$?}

Next, we reduce from \textsc{MRSS} to the following directed variant of EDP.
\pbDefP{\textsc{Multiple Directed Edge Disjoint Paths (MDEDP)}}{A directed
  graph $G$, a set $P$ of $\ell$ triples $(s_i,t_i,n_i)$ with $1 \leq
  i \leq \ell$, $s_i,t_i \in V(G)$, and $n_i \in \Nat$.}{$\fvs(\und{G})+|P|$}{Is there a set
  of pairwise edge disjoint paths containing $n_i$ paths from
  $s_i$ to $t_i$ for every $i$ with $1\leq i \leq \ell$?}
In the above, $\und{G}$ denotes the underlying undirected graph of a given
directed graph $G$.

Using a known result that allows to reduce the directed EDP
problem to the undirected EDP problem given by Vygen~\cite{vygen1995np}, we
then show that the undirected variant of the above problem, which we refer to as
\textsc{MUEDP}, of \textsc{MDEDP} is also \W{1}-hard. The final step
of the proof is then a straightforward reduction from
\textsc{MUEDP} to the \textsc{EDP} problem parameterized by the
feedback vertex set number of the augmented graph.
}

\lv{
In this section we will show that EDP parameterized by the feedback
vertex set number (and hence also parameterized by the treewidth)
of the augmented graph is \W{1}-hard.
This nicely complements the
result in~\cite{ZhouTN00} showing that EDP is solvable in
polynomial time for bounded treewidth and also provides a natural
justification for the fpt-algorithm presented in the next
section, since it shows that more general parameterizations such as
treewidth are unlikely to lead to an fpt-algorithm for EDP.

On the way towards this
result, we provide hardness results for several interesting versions
of the multidimensional subset sum problem (parameterized by the
number of dimensions) which we believe are interesting in their own right.
In particular, we note that the hardness results also hold for the well-known and more general
multidimensional knapsack problem~\cite{Freville04}.
Our first auxiliary result shows hardness for the following problem.

\pbDefP{\textsc{Multidimensional Subset Sum (MSS)}}{An integer $k$, a set $S=\{s_1,\dotsc,s_n\}$ of item-vectors with $s_i \in \Nat^{k}$ for every $i$ with $1\leq i \leq n$ and a target vector $t \in \Nat^k$.}{$k$}{Is there a subset $S' \subseteq S$ such that $\sum_{s \in S'}s=t$?}

\begin{LEM}
  \textsc{MSS} is \W{1}-hard even if all integers in the input are
  given in unary.
\end{LEM}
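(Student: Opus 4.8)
The plan is to reduce from \W{1}-hard \textsc{Multi-Colored Clique (MCC)}: given a graph $H$ with vertex set partitioned into $k$ color classes $V_1,\dots,V_k$, decide whether $H$ has a clique using exactly one vertex from each class. The parameter of the \textsc{MSS} instance we build will be a polynomial function of $k$ only, and all numbers will be produced in unary, so the reduction is an fpt-reduction that certifies \W{1}-hardness even with unary encoding.

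The central idea is a standard "large-base encoding" trick adapted to the multidimensional setting. First I would introduce one coordinate for each color class $V_i$ (a "vertex-selection" coordinate) and one coordinate for each unordered pair $\{i,j\}$ of distinct color classes (an "edge-selection" coordinate); this gives $k + \binom{k}{2}$ dimensions, which is bounded by a function of the parameter. The target vector $t$ will have a $1$ in every vertex-coordinate and a $1$ in every edge-coordinate. For each vertex $v\in V_i$ I create an item-vector $s_v$ that puts a $1$ in the $V_i$-coordinate, and for each pair $\{i,j\}$ and each edge $uw\in E(H)$ with $u\in V_i$, $w\in V_j$ I create an item-vector $s_{uw}$ that puts a $1$ in the $\{i,j\}$-coordinate. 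The difficulty is that we must force the selected edge item for the pair $\{i,j\}$ to be consistent with the two selected vertex items for classes $i$ and $j$ — a subset that sums exactly to $t$ must "know" which vertex of $V_i$ was chosen. To enforce this I would give each vertex $v\in V_i$ a distinct numeric identifier $\mathrm{id}(v)\in\{1,\dots,|V_i|\}$ and, in the item $s_v$, additionally place the value $\mathrm{id}(v)$ into an auxiliary "incidence" coordinate associated with every pair $\{i,j\}$ that involves $i$; symmetrically, in the edge item $s_{uw}$ for $u\in V_i,w\in V_j$ I place $\mathrm{id}(u)$ into a second auxiliary coordinate for the ordered slot $(i,\{i,j\})$ and $\mathrm{id}(w)$ into the slot $(j,\{i,j\})$, together with appropriate complementary "checksum" constants chosen so that the coordinate can be hit exactly only when the edge's endpoint identifiers match the chosen vertices. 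Since each vertex and each pair contribute $O(1)$ coordinates, the dimension stays $O(k^2)$, and since identifiers are at most $|V(H)|$ all entries are polynomially bounded and can be written in unary.

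The key steps, in order, are: (1) define the coordinate set and argue $|{\rm coords}| = O(k^2)$; (2) define the item-vectors and target vector, keeping all entries polynomial; (3) prove the forward direction — a $k$-clique $\{v_1,\dots,v_k\}$ yields the subset $\{s_{v_i}\}\cup\{s_{v_iv_j}\}$ whose sum is exactly $t$, which is a routine coordinate-by-coordinate check; (4) prove the backward direction — any $S'$ summing to $t$ must, by the vertex-coordinates, pick exactly one $s_v$ per class and, by the edge-coordinates, exactly one edge-item per pair, and then the incidence/checksum coordinates force these choices to be mutually consistent, i.e. the chosen edges join exactly the chosen vertices, so the chosen vertices form a clique. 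I expect the main obstacle to be step (4): designing the auxiliary incidence coordinates and their constant offsets so that exact attainment of the target genuinely enforces endpoint-consistency without introducing unintended solutions, and simultaneously keeping every entry small enough to be encoded in unary within polynomial size. Once the consistency gadget is correct, the equivalence and the polynomial (hence fpt, and unary) bound on the construction follow immediately.
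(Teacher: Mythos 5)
Your reduction is correct in outline, but it reaches consistency by a genuinely different mechanism than the paper. The paper also reduces from \textsc{MCC} and also uses $k$ vertex-count and $\binom{k}{2}$ edge-count coordinates with target $1$ to force one vertex per class and one edge per pair; however, for endpoint-consistency it uses a single extra coordinate per pair $\{i,j\}$, assigns vertices values from a Sidon sequence (constructed with polynomially bounded elements via Erd\H{o}s--Tur\'an and Bertrand's postulate), lets $s_v$ contribute $\mathcal{S}(v)$ and the edge item contribute $(\max_2(\mathcal{S})+1)-(\mathcal{S}(u)+\mathcal{S}(w))$, and exploits uniqueness of pairwise sums to conclude the chosen edge joins the chosen vertices. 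Your proposal instead uses two ordered incidence coordinates per pair and plain identifiers, and the ``checksum constants'' you leave open -- which you flag as the main obstacle -- complete in the standard way: for the slot $(i,\{i,j\})$ fix a constant $C>\max_v \mathrm{id}(v)$, let $s_v$ (for $v\in V_i$) contribute $\mathrm{id}(v)$, let the edge item for $uw$ with $u\in V_i$ contribute $C-\mathrm{id}(u)$, and set the target entry to $C$; since the count coordinates force exactly one contributing vertex item and one contributing edge item, exact attainment gives $\mathrm{id}(v)=\mathrm{id}(u)$, hence $v=u$. With this instantiation your backward direction goes through, the dimension is still $O(k^2)$ and all entries are polynomial, so the unary claim holds. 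Comparing the two: your route is more elementary (no Sidon sequences, only distinct identifiers) at the cost of $2\binom{k}{2}$ extra coordinates, while the paper's Sidon gadget packs the consistency check for each pair into one coordinate; note also that your complement trick relies on the \emph{exact} equality in \textsc{MSS}, which is available here (the paper handles the relaxed variant \textsc{MRSS} by a separate subsequent reduction).
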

\begin{proof}
  We prove the lemma by a  parameterized reduction from
  \textsc{Multicolored Clique}, which is well-known to be 
  \W{1}\hy complete~\cite{Pietrzak03}.
  Given an integer $k$ and a
  $k$-partite graph $G$ with partition $V_1,\dotsc,V_k$, 
  the \textsc{Multicolored Clique} problem
  ask whether $G$ contains a $k$-clique. In the following we denote by
  $E_{i,j}$ the set of all edges in $G$ with one endpoint in $V_i$ and
  the other endpoint in $V_j$, for every $i$ and $j$ with $1\leq i < j
  \leq k$.
  To show the lemma, we will construct an instance $\III=(k',S,t)$ of
  \textsc{MSS} in polynomial time with $k'=2\binom{k}{2}+k$ and all integers in
  $\III$ are bounded by a polynomial in $|V(G)|$ such that $G$ has a
  $k$-clique if and only if $\III$ has a solution.

  For our reduction we will employ so called Sidon sequences of
  natural numbers. A \emph{Sidon
    sequence} is a sequence of natural numbers such that the sum of every two
  distinct numbers in the sequence is unique. For our reduction we
  will need a Sidon sequence of $|V(G)|$ natural numbers, i.e.,
  containing one number for each vertex of $G$. Since the numbers in
  the Sidon sequence will be used as numbers in $\III$, we need
  to ensure that the largest of these numbers is bounded by a
  polynomial in $|V(G)|$. Indeed~\cite{ErdosTuran41} shows that a Sidon sequence
  containing $n$ elements and whose largest element is at most $2p^2$,
  where $p$ is the smallest prime number larger or equal to $n$, can be
  constructed in polynomial time. Together with Bertrand's
  postulate~\cite{AignerZiegler10}, which states that for every natural
  number $n$ there is a prime number between $n$ and $2n$, we obtain
  that a Sidon sequence containing $|V(G)|$ numbers and whose largest
  element is at most $8|V(G)|^2$ can be found in polynomial time.
  In
  the following we will
  assume that we are given such a Sidon sequence $\SSS$ and we denote
  by~$\SSS(i)$ the~$i$\hy th element of $\SSS$ for any $i$ with $1 \leq i
  \leq |V(G)|$. Moreover, we denote by $\max(\SSS)$ and $\max_2(\SSS)$
  the largest element of~$\SSS$ and the maximum sum of any
  two numbers in~$\SSS$, respectively. We will furthermore assume that the vertices
  of $G$ are identified by numbers between $1$ and $|V(G)|$ and
  therefore $\SSS(v)$ is properly defined for every $v \in V(G)$.

  We are now ready to construct the instance $\III=(k',S,t)$.
  We set $k'=2\binom{k}{2}+k$ and $t$ is the vector whose first
  $\binom{k}{2}$ entries are all equal to $\max_2(\SSS)+1$ and whose remaining
  $\binom{k}{2}+k$ entries are all equal to $1$. For every $i$ and $j$
  with $1 \leq i < j \leq k$, we will use $I(i,j)$ as a means of enumerating the indices in a
  sequence of two-element tuples; formally, $I(i,j)=(\sum_{l=1}^{l<i}(k-l))+(j-1)$. 
  Note that the vector $t$ and its indices can then be visualized as follows:
  \[
    t=(\underbrace{{\max_2}(\SSS)+1, \dotsc, \max_2(\SSS)+1}_{I(1,2),
      I(1,3), \dotsc, I(k-1,k)}, \underbrace{1, \dotsc, 1}_{\binom{k}{2}+I(1,2),
      \binom{k}{2}+I(1,3), \dotsc, \binom{k}{2}+I(k-1,k)}, \underbrace{1, \dotsc, 1}_{2\binom{k}{2}+1,\dotsc, 2\binom{k}{2}+k)})
  \]
  
  We now proceed to the construction of $S$, which will contain one element for each
  edge and for each vertex in $G$. In particular, the set $S$ of
  item-vectors contains the following elements:
  \begin{itemize}
  \item for every $i$ with $1 \leq i \leq k$ and every $v \in V_i$,
    a vector $s_v$ such that all
    entries with index in $\SB I(l,r) \SM 1 \leq l < r \leq k \land l=i \SE\cup \SB
    I(l,r) \SM 1\leq l < r \leq k \land r=i\SE$ are equal to $\SSS(v)$ (informally, this corresponds to all indices where at least one element of the tuple $(l,r)$ is equal to $i$), the $2\binom{k}{2}+i$-th
    entry is equal to $1$, and all other entries are equal to
    $0$,
  \item for every $i$ and $j$ with $1\leq i < j \leq k$ and every $e=\{u,v\}
    \in E(i,j)$, a vector $s_e$ such that the entry $I(i,j)$ is equal
    to $(\max_2(\SSS)+1)-(\SSS(u)+\SSS(v))$, the
    $\binom{k}{2}+I(i,j)$-th entry
     is equal to $1$, and all other entries are equal
    to $0$.
  \end{itemize}
  This completes the construction of $\III$. It is clear that $\III$
  can be constructed in polynomial time and moreover every integer in
  $\III$ is at most $\max_2(\SSS)+1$ and hence polynomially bounded in
  $|V(G)|$. Intuitively, the construction relies on the fact that since the sum of each pair of vertices is unique, we can uniquely associate each pair with an edge between these vertices whose value will then be the remainder to the global upper-bound of $\max_2(\SSS)$.

  It remains to show that $G$ has $k$-clique if and only if $\III$ has
  a solution. Towards showing the forward direction, let $C$ be a
  $k$-clique in $G$ with vertices $v_1,\dotsc,v_k$ such that $v_i \in
  V_i$ for every $i$ with $1\leq i \leq k$. We claim that the
  subset $S'=\SB s_v \SM v \in V(C) \SE \cup \SB s_e \SM e \in E(C) \SE$
  of $S$ is a solution for $\III$. Let $t'$ be the vector $\sum_{s \in
    S'}s$. Because $C$ contains exactly one
  vertex from every $V_i$ and exactly one edge from every $E_{i,j}$,
  it holds that $t'[l]=t[l]=1$ for every index $l$ with
  $\binom{k}{2}< l \leq 2\binom{k}{2}+k$. Moreover, for every $i$ and $j$ with
  $1 \leq i < j \leq k$, the vectors $s_{v_i}$, $s_{v_j}$, and
  $s_{e_{i,j}}$ are the only vectors in $S'$ with a non-zero entry at
  the $I(i,j)$-th position. Hence
  $t'[I(i,j)]=s_{v_i}[I(i,j)]+s_{v_j}[I(i,j)]+s_{e_{i,j}}[I(i,j)]$,
  which because $s_{v_i}[I(i,j)]=\SSS(v_i)$,
  $s_{v_j}[I(i,j)]=\SSS(v_j)$, and
  $s_{e_{i,j}}[I(i,j)]=(\max_2(\SSS)+1)-(\SSS(v_i)+\SSS(v_j))$ is
  equal to
  $\SSS(v_i)+\SSS(v_j)+(\max_2(\SSS)+1)-(\SSS(v_i)+\SSS(v_j))=\max_2(\SSS)+1=t[I(i,j)]$,
  as required.
  
  Towards showing the reverse direction, let $S'$ be a subset of $S$
  such that $\sum_{s \in S'}s=t$. Because the last $k$ entries of $t$
  are equal to $1$ and for every $i$ with $1 \leq i \leq k$, it holds
  that the only vectors in $S$ that have a non-zero entry at the
  $i$-th last position are the vectors in $\SB s_v \SM v \in V_i\SE$,
  it follows that $S'$ contains exactly one vector say $s_{v_i}$
  in $\SB s_v \SM v \in V_i\SE$ for every $i$ with $1 \leq i \leq k$.
  Using a similar argument for the
  entries of $t$ with indices between $\binom{k}{2}+1$ and
  $2\binom{k}{2}$, we obtain that $S'$ contains exactly one vector say
  $e_{i,j}$ in $\SB s_e \SM e \in E_{i,j}\SE$ for every $i$ and $j$
  with $1 \leq i < j \leq k$. Consequently,
  $S'=\{s_{v_1},\dotsc,s_{v_k}\}\cup \SB e_{i,j} \SM 1 \leq i < j \leq
  k \SE$. We claim that $\{v_1,\dotsc,v_k\}$ forms a $k$-clique in
  $G$, i.e., for every $i$ and $j$ with $1 \leq i < j \leq k$, it
  holds that $e_{i,j}=\{v_i,v_j\}$. To see this consider the $I(i,j)$-th
  entry of $t'=\sum_{s\in S'}s$. The only vectors in $S'$ having a
  non-zero contribution towards $t'[I(i,j)]$ are the vectors
  $s_{v_i}$, $s_{v_j}$, and $s_{e_{i,j}}$. Because
  $s_{v_i}[I(i,j)]=\SSS(v_i)$, $s_{v_j}[I(i,j)]=\SSS(v_j)=\SSS(v_j)$,
  and $t'[I(i,j)]=t[I(i,j)]=\max_2(\SSS)+1$, we obtain that
  $s_{e_{i,j}}[I(i,j)]=(\max_2(\SSS)+1)-(\SSS(v_i)+\SSS(v_j))$. Because
  $\SSS$ is Sidon sequence and thus the sum $(\SSS(v_i)+\SSS(v_j))$ is
  unique, we obtain that $e_{i,j}=\{v_i,v_j\}$, as required.
\end{proof}
Observe that because any solution $S'$ of the constructed instance in
the previous lemma must be of size exactly $k'=2\binom{k}{2}+k$, it
follows that the above proof also shows
\W{1}-hardness of the following problem.

\pbDefP{\textsc{Restricted Multidimensional Subset Sum (RMSS)}}{An integer $k$, a
  set $S=\{s_1,\dotsc,s_n\}$ of item-vectors with $s_i \in \Nat^{k}$
  for every $i$ with $1\leq i \leq n$, a target vector $t \in
  \Nat^k$, and an integer $k'$.}{$k+k'$}{Is there a subset $S'
  \subseteq S$ with $|S'|=k'$ such that $\sum_{s \in S'}s=t$?}

Using an fpt-reduction from the above problem, we will now show that
also the following more relaxed version is \W{1}-hard.
\pbDefP{\textsc{Multidimensional Relaxed Subset Sum (MRSS)}}{An
  integer $k$, a set $S=\{s_1,\dotsc,s_n\}$ of item-vectors with $s_i
  \in \Nat^{k}$ for every $i$ with $1\leq i \leq n$, a target vector
  $t \in \Nat^k$, and an integer $k'$.}{$k+k'$}{Is there a subset $S'
  \subseteq S$ with $|S'|\leq k'$ such that $\sum_{s \in S'}s\geq t$?}
\begin{LEM}
  \textsc{MRSS} is \W{1}-hard even if all integers in the input are
  given in unary.
\end{LEM}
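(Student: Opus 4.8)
The plan is to prove this by a polynomial-time parameterized reduction from \textsc{RMSS}, which we argued above is \W{1}-hard even when all integers are given in unary. Given an \textsc{RMSS} instance $(k,S,t,k')$, I will build an \textsc{MRSS} instance $(\hat k,\hat S,\hat t,k')$ with $\hat k = 2k+1$, keeping the same bound $k'$ on the number of selected items; since $\hat k + k' = 2k+1+k'$ is linear in $k+k'$, this is a valid fpt-reduction, and I will check below that all new integers stay polynomially bounded so that the unary encoding is preserved.

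The reduction uses three gadgets. First, a \emph{cardinality coordinate}: a fresh coordinate in which every item of $\hat S$ has value $1$ and the target is $k'$; combined with the relaxed budget $|\hat S'|\le k'$, the inequality $\sum_{s\in\hat S'}1\ge k'$ forces $|\hat S'| = k'$ exactly. Second, I keep each original coordinate $i\in\{1,\dots,k\}$ untouched with target $t_i$, which enforces $\sum_{s\in\hat S'}s_i\ge t_i$. Third, for each original coordinate $i$ I add a \emph{complementary coordinate} $i'$: letting $c_i=\max_{s\in S}s_i$, item $s$ gets value $c_i-s_i\ge 0$ in coordinate $i'$, and the target there is $k'c_i-t_i$. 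Using that $|\hat S'|=k'$ is forced, the constraint $\sum_{s\in\hat S'}(c_i-s_i)\ge k'c_i-t_i$ rewrites, via $\sum_{s\in\hat S'}(c_i-s_i)=k'c_i-\sum_{s\in\hat S'}s_i$, as $\sum_{s\in\hat S'}s_i\le t_i$; together with the second gadget this yields $\sum_{s\in\hat S'}s_i=t_i$ for every $i$. If $k'c_i<t_i$ for some $i$, the \textsc{RMSS} instance is trivially a no-instance (a sum of $k'$ values each at most $c_i$ cannot reach $t_i$), and I output a fixed no-instance of \textsc{MRSS}; otherwise all targets are non-negative and $\hat S$ consists of vectors over $\Nat$.

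For correctness, a solution $S'$ of the \textsc{RMSS} instance has $|S'|=k'$ and $\sum_{s\in S'}s=t$, so it meets all $2k+1$ relaxed constraints (with equality in each), hence is a solution of the \textsc{MRSS} instance. Conversely, any solution $\hat S'$ of the \textsc{MRSS} instance satisfies $|\hat S'|=k'$ by the cardinality coordinate, and then coordinates $i$ and $i'$ together give $\sum_{s\in\hat S'}s=t$, so $\hat S'$ is an \textsc{RMSS} solution. Finally, all auxiliary integers ($c_i$ and $k'c_i-t_i$) are bounded by $k'\cdot\max_{s\in S,\,i}s_i$, which is polynomial in the input size even under unary encoding since $k'\le k+k'$ is bounded by the parameter; thus the reduction preserves unary inputs.

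The only nontrivial aspect — and the step I expect to need the most care — is emulating the \emph{exact} sum constraint of \textsc{RMSS} using only the one-sided (``$\ge$'') constraints available in \textsc{MRSS}. This is precisely what the complementary coordinates do, but the argument hinges on first pinning the selection size to exactly $k'$, which is the sole purpose of the cardinality coordinate; without it, a smaller selection could satisfy the complementary inequalities vacuously. The remaining bookkeeping (non-negativity of all entries, the polynomial bound on the new integers, and the trivial-instance corner case $k'c_i<t_i$) is routine.
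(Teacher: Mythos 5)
Your reduction is correct and takes essentially the same route as the paper: an fpt-reduction from \textsc{RMSS} that doubles the coordinates with complementary values so that the relaxed ``$\ge$'' constraints, together with the budget $|S'|\le k'$, force $|S'|=k'$ and exact equality in every original coordinate. The differences are only cosmetic --- the paper uses complements $t[i]-s[i]$ with target $(k'-1)\cdot t[i]$ and derives $|S'|=k'$ implicitly from those constraints, whereas you add an explicit cardinality coordinate and use $c_i=\max_{s\in S}s_i$ as the complement base, which handles the corner cases ($s_i>t_i$, $t_i=0$) slightly more cleanly.
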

\begin{proof}
  We prove the lemma by a  parameterized reduction from
  \textsc{RMSS}. Namely, given an instance $\III=(k,S,t,k')$ of \textsc{RMSS}
  we construct an equivalent instance $\overline{\III}=(2k,\overline{S},\overline{t},k')$ of
  \textsc{MRSS} in polynomial time such that all integers in $\overline{\III}$
  are bounded by a polynomial of the integers in $\III$.

  The set $\overline{S}$ contains one vector $\overline{s}$ for every vector $s \in S$
  with $\overline{s}[i]=s[i]$ and $\overline{s}[k+i]=t[i]-s[i]$ for every $i$ with $1\leq
  i \leq k$. Finally, the target vector $\overline{t}$ is defined by setting $\overline{t}[i]=t[i]$
  and $\overline{t}[k+i]=(k'-1)\cdot t[i]$ for every $i$ with $1 \leq i \leq k$.
  This concludes the construction of $\overline{\III}$. Clearly, $\overline{\III}$ can be
  constructed in polynomial time and the values of all numbers in $\overline{\III}$ are
  bounded by a polynomial of the maximum number in $\III$.
  It remains to show that $\III$ has a solution if and only if so does
  $\overline{\III}$.

  Towards showing the forward direction, let $S' \subseteq S$ be a
  solution for $\III$, i.e., $|S'|=k'$ and $\sum_{s\in S'}s=t$. We
  claim that the set $\overline{S}'=\SB \overline{s} \SM s\in S'\SE$
  is a solution for $\overline{\III}$. Because $\overline{s}[i]=s[i]$
  and $\overline{t}[i]=t[i]$
  for every $s \in S$ and $i$ with $1 \leq i \leq k$, it follows that
  $\sum_{\overline{s}\in
    \overline{S}'}\overline{s}[i]=\overline{t}[i]$ for every $i$ as
  above. Moreover, for every $i$ with $1 \leq i \leq k$, it holds that
  $\sum_{\overline{s}\in
    \overline{S}'}\overline{s}[k+i]=k'\cdot t[i] - \sum_{\overline{s}\in
    \overline{S}'}\overline{s}[i]=k'\cdot t[i] -
  t[i]=(k'-1)t[i]=\overline{t}[k+i]$, showing that $\overline{S}$ is a
  solution for $\overline{\III}$.

  Towards showing the reverse direction, let $\overline{S}' \subseteq
  \overline{S}$ be a solution for $\III'$, i.e., $|\overline{S}'|\leq k'$ and
  $\sum_{\overline{s}\in \overline{S}'}\overline{s}\geq
  \overline{t}$. We claim that the set $S'=\SB s \SM \overline{s}\in
  \overline{S}'\SE$ is a solution for $\III$.

  Because $\overline{S}'$ is a solution for $\III'$, we obtain for
  every $i$ with $1 \leq i \leq k$ that:
  \begin{itemize}
  \item[(1)] $\sum_{s \in \overline{S}'}\overline{s}[i]\geq t[i]$, which
    because $s[i]=\overline{s}[i]$ implies that $\sum_{s \in
      S'}s[i]\geq t[i]$,
  \item[(2)] $\sum_{\overline{s} \in \overline{S}'}\overline{s}[k+i]\geq
    (k-1)t[i]$, which because $\overline{s}[k+i]=t[i]-s[i]$ implies that $|S'|t[i]-\sum_{s\in S'}s[i]\geq (k'-1)t[i]$. First, since $\sum_{s\in S'}s[i]>0$ by (1), observe that $|S'|> k'-1$ and in particular $|S'|=k'$. Then by using this, we obtain that $k't[i]-\sum_{s\in S'}s[i]\geq (k'-1)t[i]$ which implies $t[i]\geq \sum_{s\in S'}s[i]$.
  \end{itemize}
  It follows from (1) and (2) that $\sum_{s\in S'}s[i]=t[i]$ and hence
  $S'$ is a solution for $\III$ of size $k'$, as required.
\end{proof}

Our next step is to move from subset sum problems to EDP variants, with a final goal
of showing the hardness of EDP parameterized by the feedback vertex set number
of the augmented graph. We will first show that the following directed
version, which also allows for multiple instead of just one path between
every terminal pair, is \W{1}-hard parameterized by the feedback
vertex set number of the augmented graph and the number of terminal pairs.

\pbDef{\textsc{Multiple Directed Edge Disjoint Paths (MDEDP)}}{A directed
  graph $G$, a set $P$ of $\ell$ triples $(s_i,t_i,n_i)$ with $1 \leq
  i \leq \ell$, $s_i,t_i \in V(G)$, and $n_i \in \Nat$.}{Is there a set
  of pairwise edge disjoint paths containing $n_i$ paths from
  $s_i$ to $t_i$ for every $i$ with $1\leq i \leq \ell$?}

\begin{LEM}\label{lem:hard-mdedp}
  MDEDP is \W{1}-hard parameterized by the following parameter: the
  feedback vertex set number of the undirected augmented graph plus
  the number of terminal triples. Furthermore, this holds even for acyclic instances when all integers on the input are given in unary.
\end{LEM}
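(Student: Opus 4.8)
The plan is to give a parameterized reduction from \textsc{MRSS}, whose \W{1}-hardness for unary integers was established in the previous lemma. Let $(k,S,t,k')$ with $S=\{s_1,\dots,s_n\}$ be a \textsc{MRSS} instance. First I would dispose of the degenerate case $k'\ge n$: then the answer is simply whether $\sum_{i=1}^n s_i\ge t$, which we decide directly and replace by a trivial equivalent \textsc{MDEDP} instance; so we may assume $0<k'<n$. I would also observe that, since all item entries are non-negative and $k'<n$, any solution of size at most $k'$ can be padded with arbitrary further items to one of size exactly $k'$ without decreasing the coordinatewise sum; hence $(k,S,t,k')$ is a yes-instance iff there is $S'\subseteq S$ with $|S'|=k'$ and $\sum_{s\in S'}s\ge t$. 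This is the statement the constructed graph will test.

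For the construction I would use \emph{hub} vertices $\sigma,\sigma'$ (for a ``budget'' triple) and $\alpha_j,\beta_j$ for each dimension $j\in[k]$. For every item $i$ and dimension $j$ I create $s_i[j]$ \emph{slot edges} $a_{i,j,c}\to b_{i,j,c}$ (for $1\le c\le s_i[j]$), add a \emph{dimension-entry} edge $\alpha_j\to a_{i,j,c}$ and a \emph{dimension-exit} edge $b_{i,j,c}\to\beta_j$ for each of them, and then string all slot edges of item $i$, taken in lexicographic order of $(j,c)$, into a single directed path (the ``snake of item $i$'') using edges $b_{i,j,c}\to a_{i,j,c+1}$ and $b_{i,j,s_i[j]}\to a_{i,j+1,1}$ (skipping empty dimensions), together with $\sigma\to a_{i,j^\star,1}$ and $b_{i,k^\star,s_i[k^\star]}\to\sigma'$, where $j^\star,k^\star$ are the first and last dimensions $j$ with $s_i[j]>0$; an all-zero item $i$ instead gets a fresh vertex $d_i$ with edges $\sigma\to d_i\to\sigma'$. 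The terminal triples are $P=\{(\alpha_j,\beta_j,t[j]):j\in[k]\}\cup\{(\sigma,\sigma',n-k')\}$, so $|P|=k+1$; the graph is simple, and $G$ together with all its integers has size polynomial in the unary-encoded input. Moreover $G$ is acyclic, via the topological order in which $\sigma$ precedes all $\alpha_j$, which precede all slot/dummy vertices (each item's slot vertices internally ordered along its snake), which precede all $\beta_j$, which precede $\sigma'$.

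For correctness, call item $i$ \emph{blocked} by a solution if some $\sigma$-$\sigma'$ path uses a slot edge of $i$. Since $\sigma$ is a source, $\sigma'$ a sink, the only out-edge of each $a_{i,j,c}$ is its slot edge, and from $b_{i,j,c}$ the only way to eventually reach $\sigma'$ is to continue along the snake (a detour to $\beta_j$ dead-ends at a sink), every $\sigma$-$\sigma'$ path traverses the entire snake of exactly one item and uses all of its slot edges; hence a solution blocks at least $n-k'$ distinct items. Conversely a dimension-$j$ path leaving $\alpha_j$ must immediately use a slot edge of dimension $j$ and must return to $\beta_j$, so it only ever touches dimension-$j$ slot edges of \emph{unblocked} items, and item $i$ contributes at most $s_i[j]$ distinct such edges. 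Putting $S'$ to be the set of unblocked items, a valid solution therefore exists iff there is such an $S'$ with $|S'|\le k'$ and $\sum_{i\in S'}s_i[j]\ge t[j]$ for all $j$ (forward: extract $S'$ from any solution; backward: route the $n-k'$ budget paths through the snakes of the items outside a size-$k'$ witness $S'$, and for each $j$ route the $t[j]$ dimension-$j$ paths through distinct dimension-$j$ slot edges of items in $S'$, which is possible exactly when the inequality holds, all these paths being pairwise edge-disjoint by construction). This matches the reformulated \textsc{MRSS} question.

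Finally, taking $D=\{\sigma,\sigma'\}\cup\{\alpha_j,\beta_j:j\in[k]\}$ with $|D|=2k+2$: every augmented edge joins two vertices of $D$, and $\und{G}-D$ consists of the item snakes (each a path) plus isolated dummy vertices, pairwise vertex-disjoint, so $\und{G^P}-D$ is a forest and $\fvs(\und{G^P})\le 2k+2$. Thus $\fvs(\und{G^P})+|P|\le 3k+3$ is bounded in terms of $k+k'$, the reduction is an fpt-reduction, the produced instance is acyclic, and all its integers are polynomial, hence expressible in unary. The hard part will be the design and verification of the item gadget: a single budget path must simultaneously ``switch off'' all $k$ dimensions of an item while the graph stays acyclic and the feedback vertex number of the augmented graph stays bounded in $k$; one then has to carefully rule out cheating, e.g.\ dimension paths snaking through a blocked item, budget paths traversing only part of a snake, or any path exiting a gadget at the wrong hub.
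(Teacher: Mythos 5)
Your proposal is correct and follows essentially the same route as the paper's proof: a reduction from \textsc{MRSS} in which each item becomes a directed path (``snake'') containing one slot edge per unit of each coordinate, dimension sources/sinks attach to these slots, a budget triple with $n-k'$ paths forces all slot edges of $n-k'$ snakes to be consumed, and the $2k+2$ hub vertices form a feedback vertex set of the augmented graph while $|P|=k+1$. Your extra touches (padding to size exactly $k'$, dummy vertices for all-zero items, the explicit source/sink argument ruling out cheating paths) are only cosmetic refinements of the same construction and correctness argument.
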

\begin{proof}
  We prove the lemma by a parameterized reduction from
  \textsc{MRSS}. Namely, given an instance $\III=(k,S,t,k')$ of
  \textsc{MRSS} we construct an equivalent instance
  $\III'=(G,P)$
  of \textsc{MDEDP} in polynomial time such
  that $|P|\leq k+1$, $\fvs(G^P)\leq 2k+2$ and $G$ is acyclic.
  
  We start by introducing a gadget $G(s)$ for every item-vector $s
  \in S$. $G(s)$ is a directed path $(p_1^s,\dotsc,p_l^s)$, where $l=2(\sum_{1 \leq i \leq
    k}s[i])+2$. We let $P$ contain one triple
  $(s,t,|S|-k')$ as well as one triple $(s_i,t_i,t[i])$ for every
  $i$ with $1\leq i \leq k$. Then $G$ is obtained from the disjoint
  union of $G(s)$ for every $s \in S$ plus the vertices
  $\{s,t,s_1,t_1,\dotsc,s_k,t_k\}$. Moreover, $G$ contains the
  following edges:
  \begin{itemize}
  \item one edge from $s$ to the first vertex of $G(s)$ for every $s
    \in S$,
  \item one edge from the last vertex of $G(s)$ to $t$ for every $s
    \in S$,
  \item for every $i$ with $1\leq i \leq k$ and every $s \in S$ an
    edge from $s_i$ to the vertex $p_j^s$ (in $G(s)$), where
    $j=1+2(\sum_{1\leq j < i}s[j])+2l-1$ for every $l$ with $1 \leq l
    \leq s[i]$,
  \item for every $i$ with $1\leq i \leq k$ and every $s \in S$ an
    edge from $p_j$ (in $G(s)$), where
    $j=1+2(\sum_{1\leq j < i}s[j])+2l$, to $t_i$ for every $l$ with $1
    \leq l \leq s[i]$,
  \end{itemize}
  This completes the construction of $(G,P)$. Since $G^P -
  \{s,t,s_1,t_1,\dotsc,s_k,t_k\}$ is a disjoint union of directed
  paths, i.e., one path $G(s)$ for every $s \in S$, we obtain that
  $G^P$ has a feedback vertex set, i.e., the set
  $\{s,t,s_1,t_1,\dotsc,s_k,t_k\}$,
  of size at most $2(k+1)=2k+2$. Moreover, $G$
  is clearly acyclic and $|P|\leq k+1$. It hence only remains to
  show that $(k,S,t,k')$ has a solution if and only if so does
  $(G,P)$.
  
  Towards showing the forward direction let $S' \subseteq S$ be a
  solution for $\III$. Then we construct a set $Q$ of pairwise
  edge-disjoint paths in $G$ containing $|S|-k'$ path from $s$ to
  $t$ as well as $t[i]$ paths from $s_i$ to $t_i$, i.e., a solution
  for $\III'$, as follows. For every $s \in S
  \setminus S'$, $Q$ contains the path $(s,G(s),t)$, which already
  accounts for the $|S|-k'$ paths from $s$ to $t$. Moreover,
  for every $i$ with $1 \leq i \leq k$ and $s \in S'$, $Q$ contains
  the path $(s_i,p_{j},p_{j+1},t_i)$ for every $j$ with
  $1+2(\sum_{1\leq j < i}s[j])< j \leq 1+2(\sum_{1\leq j <
    i}s[j])+2(s[i])$ and $j$ is even.
  This concludes the definition of $Q$. Note that $Q$ now contains
  $\sum_{s \in S'}s[i]$ paths from $s_i$ to $t_i$ for every $i$ with
  $1 \leq i \leq k$ and since $S'$ is a solution for $\III$, it
  holds that $\sum_{s \in S'}s[i]\geq t[i]$. Consequently, $Q$ is a
  solution for $\III'$.
  
  Towards showing the reverse direction, let $Q$ be a solution for
  $\III'$. Then $Q$ must contain $|S|-k'$ pairwise edge
  disjoint paths from $s$ to $t$. Because every path from $s$ to
  $t$ in $G$ must have the form $(s,G(s),t)$ for some $s \in S$,
  we obtain that all the edges of exactly $|S|-k'$ gadgets $G(s)$ for $s \in
  S$ are used by the paths from $s$ to $t$ in $G$. Let $S' \subseteq
  S$ be the set containing all $s \in S$ such that $G(s)$ is not
  used by a path from $s$ to $t$ in $Q$. We claim that $S'$ is a
  solution for $\III$. Clearly, $|S'|=k'$ and it remains to show that
  $\sum_{s \in S'}s\geq t$. Towards showing this observe that for
  every $i$ with $1 \leq i \leq k$, a path from $s_i$ to $t_i$ has
  to use at least one edge $\{p_j^s,p_{j+1}^s\}$ from some $s \in
  S'$ and $j$ with $1+2(\sum_{1 \leq j < i}s[i])< j \leq 1+2(\sum_{1
    \leq j \leq i}s[i])$ and $j$ is odd. Since $G(s)$ for every $s
  \in S'$ has at most $s[i]$ of those edges, we obtain that for
  every $s \in S'$, $Q$ contains at most $s[i]$ path from $s_i$ to
  $t_i$ using edges in $G(s)$. Because the paths in $Q$ from $s_i$
  to $t_i$ cannot use any edge from $G(s)$ such that $s \notin S'$;
  this is because all edges of such $G(s)$ are already used by the
  paths from $s$ to $t$ in $Q$. Hence the total number of paths in
  $Q$ from $s_i$ to $t_i$ is at most $\sum_{s\in S'}s[i]$ and since
  $Q$ contains $t[i]$ paths from $s_i$ to $t_i$, we obtained that
  $\sum_{s\in S'}s[i]\geq t[i]$.
\end{proof}

Our next aim is now to reduce from \textsc{MDEDP} to the following
undirected version of \textsc{MDEDP}.

\pbDef{\textsc{Multiple Undirected Edge Disjoint Paths (MUEDP)}}{An undirected
  graph $G$, a set $P$ of $\ell$ triples $(s_i,t_i,n_i)$ with $1 \leq
  i \leq \ell$, $s_i,t_i \in V(G)$, and $n_i \in \Nat$.}{Is there a set
  of pairwise edge disjoint paths containing $n_i$ paths from
  $s_i$ to $t_i$ for every $i$ with $1\leq i \leq \ell$?}

To do so we first need the following auxiliary lemma.
\begin{LEM}\label{lem:mdedp-euler}
  Let $\III=(G,P)$ be an instance of \textsc{MDEDP} such that $G$ is
  acyclic. Then in polynomial time we can construct an instance
  $\III'=(G',P')$ such that:
  \begin{itemize}
  \item[(P1)] $\III$ has a solution if and only if so does $\III'$,
  \item[(P2)] $G'$ is acyclic and the graph $G'(P')$ is Eulerian, where $G'$
    is the graph obtained from $G'$ after adding $n$ edges from
    $t$ to $s$ for every $(s,t,n) \in P$,
  \item[(P3)] $|P'|\leq |P|+1$ and $\fvs(\und{G'}) \leq \fvs(\und{G})+2$.
  \end{itemize}
\end{LEM}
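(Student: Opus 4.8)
The plan is to repair the degree imbalances of $G$ relative to the demands by attaching one fresh source $\sigma$ and one fresh sink $\rho$ together with a single new terminal triple. For an MDEDP instance $(H,R)$ write $H(R)$ for the multigraph obtained from $H$ by adding, for each $(s,t,n)\in R$, $n$ parallel edges directed from $t$ to $s$; this is precisely the graph that must become Eulerian, where ``Eulerian'' denotes the balance condition $d^+(v)=d^-(v)$ at every vertex. For $v\in V(G)$ let $b(v)=d^+_{G(P)}(v)-d^-_{G(P)}(v)$; since $\sum_v b(v)=0$, set $B:=\sum_{v:\,b(v)>0}b(v)=\sum_{v:\,b(v)<0}(-b(v))$. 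Build $G'$ from $G$ by adding $\sigma,\rho$, then $b(v)$ parallel edges $\sigma\to v$ whenever $b(v)>0$ and $-b(v)$ parallel edges $v\to\rho$ whenever $b(v)<0$, and set $P'=P\cup\{(\sigma,\rho,B)\}$; this is computable in polynomial time and preserves unary encodings, since all new multiplicities are bounded by degrees in $G(P)$. Properties (P2)--(P3) follow immediately: $\sigma$ is a source and $\rho$ a sink of $G'$, so every directed cycle of $G'$ lies inside the acyclic graph $G$ --- hence $G'$ is acyclic; deleting $\{\sigma,\rho\}$ from $\und{G'}$ leaves $\und{G}$, so $\fvs(\und{G'})\le\fvs(\und{G})+2$; $|P'|=|P|+1$; and $G'(P')$ is balanced, as each $v\in V(G)$ has its imbalance $b(v)$ cancelled by the new edges between $v$ and $\sigma$ or $\rho$, while $\sigma$ and $\rho$ are balanced by the $B$ back-edges $\rho\to\sigma$ contributed by the triple $(\sigma,\rho,B)$.

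It remains to establish the equivalence (P1). For the easy direction, suppose $Q'$ solves $\III'$. Every path of $Q'$ realising an original triple $(s_i,t_i,n_i)$ has both endpoints in $V(G)$, and since one can neither enter $\sigma$ nor leave $\rho$ in $G'$, such a path uses no new edge and hence lies entirely in $G$; thus the sub-collection of $Q'$ belonging to the original triples is a solution of $\III$. The converse rests on an imbalance-invariance claim: if $Q$ solves $\III$ and $G_0$ is the subgraph of $G$ consisting of the edges \emph{not} used by $Q$, then $d^+_{G_0}(v)-d^-_{G_0}(v)=b(v)$ for every $v$, independently of the choice of $Q$. To see this, put $D^{\mathrm{out}}(v)=\sum_{(s,t,n)\in P,\,s=v}n$ and $D^{\mathrm{in}}(v)=\sum_{(s,t,n)\in P,\,t=v}n$; by validity of $Q$ these are precisely the numbers of paths of $Q$ starting, resp.\ ending, at $v$. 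Writing $I(v)$ for the number of paths of $Q$ passing through $v$ as an internal vertex, and using that in a DAG each path visits $v$ at most once, the edges of $Q$ incident with $v$ consist of $D^{\mathrm{out}}(v)+I(v)$ out-edges and $D^{\mathrm{in}}(v)+I(v)$ in-edges. Since moreover $d^+_{G(P)}(v)=d^+_G(v)+D^{\mathrm{in}}(v)$ and $d^-_{G(P)}(v)=d^-_G(v)+D^{\mathrm{out}}(v)$, subtracting yields $d^+_{G_0}(v)-d^-_{G_0}(v)=(d^+_G(v)-d^-_G(v))+(D^{\mathrm{in}}(v)-D^{\mathrm{out}}(v))=b(v)$.

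Granting the claim, let $R$ be the acyclic multigraph obtained from $G_0$ by adding all the new edges of $G'$; every vertex of $R$ is balanced except $\sigma$, with out-excess $B$, and $\rho$, with in-excess $B$. A standard flow decomposition (equivalently, an Eulerian-type argument) writes $E(R)$ as a disjoint union of directed $\sigma$--$\rho$ paths and closed walks, all edge-disjoint; acyclicity rules out the closed walks, so $R$ decomposes into exactly $B$ edge-disjoint directed paths from $\sigma$ to $\rho$. Appending these $B$ paths to $Q$ produces pairwise edge-disjoint paths in $G'$ --- the new paths use only edges of $G_0$ and new edges, hence are disjoint from $Q$ --- realising all original triples together with $B$ paths from $\sigma$ to $\rho$, i.e.\ a solution of $\III'$. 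The only step that is not pure bookkeeping is the imbalance-invariance claim for $G_0$; once it is in hand, the whole argument reduces to degree counting together with the elementary fact that an acyclic multigraph which is balanced except at two prescribed terminals decomposes into the corresponding number of paths between them.
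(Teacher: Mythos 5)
Your proposal is correct and takes essentially the same approach as the paper's proof: a fresh source and sink joined by balancing parallel edges, one extra triple whose multiplicity equals the total imbalance, and the forward direction of (P1) obtained by noting that the edges unused by a solution together with the new edges form an acyclic graph balanced everywhere except at the two new terminals, hence decomposable into the required number of edge-disjoint source--sink paths. You merely spell out the degree-counting and path-decomposition details that the paper leaves implicit.
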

\begin{proof}
  The construction of $\III'$ is based on the construction of
  $(G',H')$ from $(G,H)$ in~\cite[Theorem 2]{vygen1995np}.
  Namely, for every $v \in V(G)$ let
  $\alpha(v)=\max\{0,\delta_{G(P)}(v)-\rho_{G(P)}(v)\}\}$
  and $\beta(v)=\max\{0,\rho_{G(P)}(v)-\delta_{G(P)}(v)\}\}$, where
  $\delta_D(v)$ and $\rho_D(v)$ denote the number of out-neighbors
  respectively in-neighbors of a vertex $v$ in a directed graph $D$.
  . Then
  \begin{eqnarray*}
    0 & = & |E(G(P))|-|E(G(P))|\\
      & = & \sum_{v\in V(G)}(\delta_{G(P)}(v)-\rho_{G(P)}(v))\\
      & = & \sum_{v\in V(G)}(\alpha(v)-\beta(v))
  \end{eqnarray*}
  Hence $\sum_{v\in V(G)}\alpha(v)=\sum_{v\in V(G)}\beta(v)=q$. We now
  construct the instance $\III'=(G',P')$ from $(G,P)$ by adding one
  triple $(s,t,q)$ to $P$ as well as adding the vertices $s$ and $t$
  together with $\alpha(v)$ edges from $s$ to $v$ and $\beta(v)$ edges
  from $v$ to $t$ for every $v \in V(G)$.
  It is straightforward to verify that $\III'$ satisfies Properties
  (P2) and (P3). Moreover, the reverse direction of Property (P1) is
  trivial. Towards showing the forward direction of Property (P1),
  assume that $\III$ has a solution $S$ and let $G''$ be the graph
  obtained from $G'$ after removing all edges appearing in a path in
  $S$. Then $G''(\{(s,t,q)\})$ is Eulerian and $G''$ is acyclic, hence
  there are $q$ pairwise edge-disjoint paths in $G''$ from $s$ to $t$.
\end{proof}

We are now ready to show that \textsc{MUEDP} is \W{1}-hard
parameterized by the combined parameter the feedback vertex set number
of the augmented graph and the number of terminal triples.
\begin{LEM}\label{lem:hard-muedp}
  MUEDP is \W{1}-hard parameterized by the following parameter: the
  feedback vertex set number of the augmented graph plus
  the number of terminal triples. Furthermore, this holds even when all integers on the input are given in unary.
\end{LEM}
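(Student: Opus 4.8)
The plan is to give an fpt-reduction from \textsc{MDEDP} restricted to acyclic instances with all integers encoded in unary; by Lemma~\ref{lem:hard-mdedp} this problem is \W{1}-hard parameterized by the feedback vertex set number of the undirected augmented graph plus the number of terminal triples. Let $\III=(G,P)$ be such an instance and write $\kappa$ for this parameter; since $\und{G}$ is a subgraph of the undirected augmented graph, we have $\fvs(\und{G})\le\kappa$ and $|P|\le\kappa$. First I would apply Lemma~\ref{lem:mdedp-euler} to obtain, in polynomial time, an equivalent instance $\III'=(G',P')$ of \textsc{MDEDP} that is still acyclic, for which $G'(P')$ is Eulerian, and which satisfies $|P'|\le|P|+1\le\kappa+1$ and $\fvs(\und{G'})\le\fvs(\und{G})+2\le\kappa+2$. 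As this construction runs in polynomial time, the integers of $\III'$ remain polynomially bounded and hence expressible in unary.

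Next I would simply forget orientations: let $\overline{\III}=(\und{G'},P')$ be the \textsc{MUEDP} instance on the underlying undirected graph of $G'$ with the same triples. Reading a triple $(s,t,n)$ as $n$ parallel demand edges $st$ (so the demand multigraph $H'$ has polynomial size, since the integers are unary), $\III'$ is precisely the integral \emph{directed} edge-disjoint paths problem on $(G',H')$ and $\overline{\III}$ the integral \emph{undirected} one on $(\und{G'},\und{H'})$, where $G'(P')=G'+H'$. Since $G'$ is acyclic and $G'(P')$ is Eulerian, Vygen's theorem~\cite[Theorem~2]{vygen1995np} applies and shows that $\III'$ has a solution if and only if $\overline{\III}$ does. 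Chaining this with the equivalence of Lemma~\ref{lem:mdedp-euler} yields that $\III$ is a yes-instance of \textsc{MDEDP} if and only if $\overline{\III}$ is a yes-instance of \textsc{MUEDP}, and the entire construction is polynomial-time.

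Finally I would bound the parameter of $\overline{\III}$. Its undirected augmented graph is $\und{G'}+\und{H'}=\und{G'(P')}$; extending any feedback vertex set of $\und{G'}$ by the at most $2|P'|$ endpoints of the demand edges produces a feedback vertex set of $\und{G'(P')}$, so its feedback vertex set number is at most $\fvs(\und{G'})+2|P'|\le(\kappa+2)+2(\kappa+1)$, while the number of triples is $|P'|\le\kappa+1$. Hence the parameter of $\overline{\III}$ is $\bigoh(\kappa)$, so this is an fpt-reduction and \textsc{MUEDP} is \W{1}-hard for the claimed parameter, with all integers still given in unary. I expect the one genuinely delicate point to be verifying that the hypotheses of Vygen's theorem ($G'$ acyclic, $G'(P')$ Eulerian) match exactly what Lemma~\ref{lem:mdedp-euler} supplies, and that the directed/undirected equivalence is unaffected by encoding the demand $n_i$ as $n_i$ parallel edges; the parameter arithmetic and the composition of the two equivalences are routine.
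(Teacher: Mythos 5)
Your proposal is correct and follows essentially the same route as the paper: reduce from acyclic, unary \textsc{MDEDP} (Lemma~\ref{lem:hard-mdedp}), apply Lemma~\ref{lem:mdedp-euler} to obtain an acyclic instance whose graph together with the demands is Eulerian, drop the edge orientations, invoke Vygen's directed/undirected equivalence for this case, and verify the parameter bounds. The only cosmetic differences are that the paper attributes the equivalence step to \cite[Lemma 5]{vygen1995np} (reserving Theorem~2 of that paper for the Eulerian completion inside Lemma~\ref{lem:mdedp-euler}) and leaves the parameter arithmetic implicit where you spell it out.
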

\begin{proof}
  We prove the lemma by a parameterized reduction from
  \textsc{MDEDP}. Namely, given an instance $\III=(G,P)$ of
  \textsc{MDEDP} we construct an equivalent instance
  $\III'=(H,Q)$
  of \textsc{MUEDP} in polynomial time such
  that $|Q|\leq |P|+1$, $\fvs(H^Q)\leq \fvs(G^P)$. The result then
  follows from Lemma~\ref{lem:hard-mdedp}.

  Let $(G',P')$ be the instance of \textsc{MDEDP} obtained from
  $(G,P)$ by Lemma~\ref{lem:mdedp-euler}. Then $\III'=(H,P)$ is simply
  obtained from $(G',P')$ by disregarding the directions of all edges
  in $G'$. Because of Lemma~\ref{lem:mdedp-euler}, we obtain that
  $|Q|\leq |P|+1$ and $\fvs(\und{H^Q})\leq \fvs(\und{G^P})$ and it hence only
  remains to show the equivalence of $\III$ and $\III'$.
  Note that because of Lemma~\ref{lem:mdedp-euler}, it holds that
  $(G',P')$ is acyclic and $G'(P')$ is Eulerian. It now
  follows from~\cite[Lemma 5]{vygen1995np} that $(G',P')$
  and $(H,P)$ are equivalent, i.e., any solution for $(G',P')$ is a
  solution for $(H,P)$ and vice versa. Together with the equivalency
  between $(G,P)$ and $(G',P')$ (due to Lemma~\ref{lem:mdedp-euler})
  this shows that $\III$ and $\III'$ are
  equivalent and concludes the proof of the lemma.
\end{proof}

Finally, using a very simple reduction from \textsc{MUEDP} we are
ready to show that \textsc{EDP} is \W{1}-hard parameterized by the
feedback vertex set number of the augmented graph.
\begin{THE}
  EDP is \W{1}-hard parameterized by the feedback vertex set number
  of the augmented graph.
\end{THE}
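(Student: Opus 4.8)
The plan is to give a simple parameterized reduction from \textsc{MUEDP}, which by Lemma~\ref{lem:hard-muedp} is \W{1}-hard parameterized by the feedback vertex set number of its augmented graph plus the number of terminal triples. Let $(H,Q)$ be an instance of \textsc{MUEDP} with $Q=\{(s_i,t_i,n_i):1\le i\le \ell\}$, where we may assume $s_i\neq t_i$ and $n_i\ge 1$. I would build an \textsc{EDP} instance $(H',P')$ by adding, for every $i$ and every $j\in\{1,\dots,n_i\}$, two fresh degree-one vertices $s_i^j$ (adjacent to $s_i$) and $t_i^j$ (adjacent to $t_i$), and inserting the terminal pair $\{s_i^j,t_i^j\}$ into $P'$; thus $H'$ is $H$ together with all these pendant edges. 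This is clearly computable in polynomial time and yields a legitimate (indeed, normalised) \textsc{EDP} instance.

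Correctness is to be checked in both directions. Given a \textsc{MUEDP}-solution, i.e.\ pairwise edge-disjoint paths of which exactly $n_i$ join $s_i$ to $t_i$, one assigns the $n_i$ paths of the $i$-th group bijectively to the indices $j$ and extends the path assigned to $j$ by the edges $s_i^js_i$ and $t_it_i^j$; since all pendant edges are distinct and new, this produces a valid \textsc{EDP}-solution for $(H',P')$. Conversely, in any \textsc{EDP}-solution the path $R_i^j$ linking $s_i^j$ and $t_i^j$ must use the unique edge at $s_i^j$ and the unique edge at $t_i^j$, and---because every vertex of the form $s_{i'}^{j'}$ or $t_{i'}^{j'}$ has degree one in $H'$---$R_i^j$ cannot run through any other pendant vertex, so its restriction to $V(H)$ is an $s_i$--$t_i$ path in $H$; deleting the pendant edges leaves, for each $i$, exactly $n_i$ pairwise edge-disjoint $s_i$--$t_i$ paths in $H$, i.e.\ a \textsc{MUEDP}-solution.

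It remains to bound $\fvs((H')^{P'})$, which is the crux of the argument. Recall that the augmented graph $H^Q$ is the multigraph on $V(H)$ with edge set $E(H)$ together with $n_i$ parallel edges between $s_i$ and $t_i$ for each triple. The construction is engineered so that $(H')^{P'}$ is exactly the graph obtained from $H^Q$ by subdividing each of these added parallel edges twice: the three-edge path $s_i-s_i^j-t_i^j-t_i$ of $(H')^{P'}$ replaces the $j$-th copy of $s_it_i$, while $E(H)$ is left intact. Subdividing an edge never changes the feedback vertex set number---given a feedback vertex set $X$ of a graph, $X$ remains a feedback vertex set after a subdivision, since every cycle of the new graph corresponds to a cycle of the old one and is hit by $X$ at a vertex common to both, and the newly inserted vertex has degree two (this also covers parallel edges and the case $n_i=1$). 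Hence $\fvs((H')^{P'})=\fvs(H^Q)$, so the parameter of $(H',P')$ is bounded by the parameter of $(H,Q)$, the reduction is an fpt-reduction, and the theorem follows from Lemma~\ref{lem:hard-muedp}.

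The only genuinely load-bearing step is this last one: recognising $(H')^{P'}$ as a twofold subdivision of $H^Q$ and invoking the routine (but to be stated carefully) fact that subdivision preserves the feedback vertex set number. Once that is in place the instance equivalence is essentially immediate, so I expect no further obstacles.
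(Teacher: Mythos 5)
Your reduction is exactly the one in the paper (pendant terminals $s_i^j,t_i^j$ attached to $s_i,t_i$ with pairs $\{s_i^j,t_i^j\}$, reducing from \textsc{MUEDP} via Lemma~\ref{lem:hard-muedp}), and the equivalence argument matches. The only divergence is the bound on the parameter: the paper simply observes that adding all terminals $s_i,t_i$ to a feedback vertex set of $H^Q$ yields one for $(H')^{P'}$, giving $\fvs((H')^{P'})\leq \fvs(H^Q)+2|Q|$, which suffices since $|Q|$ is part of the \textsc{MUEDP} parameter; your exact-preservation claim via double subdivision presupposes the multigraph reading of $H^Q$ (with $n_i$ parallel edges), and under the paper's simple-graph convention only the additive bound holds --- but that is all the reduction needs, so your proof is correct and essentially the same.
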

\begin{proof}
  We prove the lemma by a parameterized reduction from
  \textsc{MUEDP}. Namely, given an instance $\III=(G,P)$ of
  \textsc{MUEDP} we construct an equivalent instance
  $\III'=(G',P')$
  of \textsc{EDP} in polynomial time such
  that $\fvs(\und{{G'}^{P'}})\leq \fvs(\und{G^P})+2|P|$. The result then
  follows from Lemma~\ref{lem:hard-muedp}.

  $\III'$ is obtained from $\III$ as follows. For every $(s,t,n) \in
  P$, we add $2n$ new vertices $s^1,\dotsc,s^n$ and $t^1,\dotsc,t^n$
  to $G$ and an edge between $s^i$ and $s$ as well as an edge between
  $t^i$ and $t$ for every $1 \leq i \leq n$. Moreover, for every $i$
  with $1 \leq i \leq n$, we add the terminal pair $(s^i,t^i)$ to
  $P'$. It is straightforward to verify that $\III$ and $\III'$ are
  equivalent. Moreover, if $F$ is a feedback vertex set for $G^P$,
  then $F \cup \SB s,t \SM (s,t,n) \in G \SE$ is a feedback vertex set
  for ${G'}^{P'}$ and hence $\fvs(\und{{G'}^{P'}})\leq \fvs(\und{G^P})+2|P|$, which
  concludes the proof of the theorem.
\end{proof}
}

\section{An fpt-Algorithm for EDP using the Augmented Graph}
In light of Theorem~\ref{the:hard-edp-fvs}, it is natural to ask whether there exist natural structural parameters of the augmented graph which would give rise to fpt-algorithms for EDP but which cannot be used on the input graph. In other words, does considering the augmented graph instead of the input graph provide any sort of advantage in terms of fpt-algorithms? In this section we answer this question affirmatively by showing that EDP is fixed-parameter tractable
parameterized by the so-called \emph{fracture number} of the augmented graph. We note that a parameter similar to the fracture number has recently
been used to obtain fpt-algorithms for Integer Linear
Programming~\cite{DEGetalIJCAI17}.


\begin{definition}
  A vertex subset $X$ of a graph $H$ is called a \emph{fracture
    modulator} if each connected component in $H\setminus V$ contains at most
  $|X|$ vertices. We denote the size of a minimum-cardinality fracture
  modulator in $H$ as $\text{frac}(H)$ or the \emph{fracture number}
  of $H$.
\end{definition}
\lv{Note the the fracture number is always at most the treewidth of the
augmented graph and even though it is known that EDP parameterized by
the treewidth of the augmented graph is in XP~\cite{ZhouTN00}, it open whether it is
in FPT. Moreover, \cite[Theorem 6]{FleszarMS16} shows that EDP
parameterized by the fracture number of the input graph is paraNP-hard.
}
We begin by making a simple structural observation about fracture modulators.

\sv{\begin{lemma}[$\star$]}
\lv{\begin{lemma}}
  Let $(G,P)$ be an instance of \textsc{EDP} and let $k$ be the fracture
  number of its augmented graph. Then there exists a fracture modulator
  $X$ of $G^P$ of size at most $2k$ such that $X$ does not contain any
  terminal vertices. Furthermore, such a fracture modulator $X$ can be
  constructed from any fracture modulator of size at most $k$ in linear
  time.
\end{lemma}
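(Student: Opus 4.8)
The plan is to begin with an arbitrary fracture modulator $X_0$ of $G^P$ of size $k_0 := |X_0| \le k$ and to eliminate the terminal vertices it contains by ``sliding'' each of them onto the unique non-terminal vertex next to it. The structural fact driving this is that, under the standing assumptions of Subsection~\ref{ssec:edp}, every terminal vertex $s$ has degree exactly two in $G^P$: in $G$ it has a single neighbour $v_s$, and since $s$ was added as a fresh leaf this $v_s$ is not a terminal; in $G^P$, $s$ is additionally joined to its partner $t_s$. From the same assumptions one also has $v_s\neq v_{t_s}$ for every terminal pair.

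For the construction, write $W:=X_0\cap T$ (where $T$ is the set of terminal vertices), let $\mathcal{P}_W\subseteq P$ be the set of terminal pairs having at least one endpoint in $W$, and set $Y:=\{\,v_u : u \text{ is an endpoint of a pair in } \mathcal{P}_W\,\}$; this $Y$ consists of non-terminals and $|Y|\le 2|\mathcal{P}_W|\le 2|W|$. Let $X':=(X_0\setminus T)\cup Y$. If $|X'|\ge k_0$, put $X:=X'$; otherwise pad $X'$ with $k_0-|X'|$ arbitrary further non-terminal vertices taken from outside $X'$ to obtain $X$ with $|X|=k_0$. (Should the graph be so small that this many non-terminals are unavailable, one has $|V(G^P)|<|T|+k_0$, and one may instead simply return $X:=V(G^P)\setminus T$; this set is terminal-free, has size $<k_0\le k$, and is a fracture modulator because $G^P\setminus X=G^P[T]$ then consists of the terminal-pair edges only, so its components are pairs of size two while $|X|\ge|T|\ge 2$.) In every case $X$ contains no terminal vertex, and the size bound holds since $|X'|\le|X_0\setminus T|+|Y|\le(k_0-|W|)+2|W|=k_0+|W|\le 2k_0$, so that $|X|\le\max(|X'|,k_0)\le 2k_0\le 2k$.

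Correctness reduces to showing that every connected component of $G^P\setminus X$ has at most $|X|$ vertices, and I would obtain this by sorting the components into two kinds. (i)~For each pair $\{u,u'\}\in\mathcal{P}_W$, both $v_u$ and $v_{u'}$ belong to $Y\subseteq X$; hence in $G^P\setminus X$ the vertex $u$ sees only $u'$, and conversely, so $\{u,u'\}$ is an isolated edge --- a component of size $2$. Since $v_u\neq v_{u'}$ and both lie in $X$, the presence of any such pair already guarantees $|X|\ge 2$. (ii)~Any remaining component lies inside $C\setminus Y$ for a single component $C$ of $G^P\setminus X_0$: the only vertices of $X_0$ missing from $X$ are those of $W$, and every $w\in W$ is, in $G^P\setminus X$, adjacent only to $t_w$ and therefore locked inside its size-$2$ component of kind~(i); hence no $w\in W$ can glue two components of $G^P\setminus X_0$ together or bridge to a kind-(i) component. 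Thus a kind-(ii) component has at most $|C|\le k_0\le|X|$ vertices, and the two kinds together give the fracture condition. All the ingredients --- reading off $W$ from $X_0$, scanning $P$ for $\mathcal{P}_W$ and then $Y$, testing the smallness condition, and picking the padding vertices by a single traversal --- run in time $\mathcal{O}(|V(G^P)|+|E(G^P)|)$.

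The step I expect to be the genuine obstacle is ensuring the new modulator does not become \emph{too small}. Naively deleting the terminals of $X_0$ and inserting $Y$ can strictly decrease the size (precisely when some $v_w$, $w\in W$, was already in $X_0$), and then a leftover piece of an old component --- which may still carry up to $k_0$ vertices --- would break the fracture condition against the shrunken modulator. The padding step repairs this, and the subtle points to get right are that padding up to $k_0$ vertices never overshoots the budget $2k$ (it does not, since $|X'|\le 2k_0$ already and padding stops at $k_0$), and that the degenerate ``too few vertices'' case is covered by the separate construction described above.
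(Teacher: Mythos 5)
Your core construction is the same as the paper's: remove the terminals from the given modulator and add, for every affected pair, the $G$-neighbours of its endpoints, so that the affected pairs become components of size two and every other component is contained in a component of the original decomposition. Where you go beyond the paper is the padding step. You correctly noticed that the swap can make the set \emph{smaller} than $|X_0|$, in which case surviving components of size up to $k_0$ would violate the ``at most $|X|$ vertices'' clause in the definition of a fracture modulator; the paper's proof is silent on this (it only asserts that every component has size two or sits inside an old component), so your repair is a genuine improvement and is sound whenever enough non-terminal padding vertices exist. One small caveat: your claim $v_s\neq v_{t_s}$ does not follow from the paper's assumptions (the two terminals of a pair may well be pendant on the \emph{same} vertex); in the non-degenerate branch this does no harm, since $\mathcal{P}_W\neq\emptyset$ forces $k_0\geq 2$ and padding already gives $|X|\geq k_0\geq 2$.

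The step that does not hold up is the degenerate fallback. The inequality $|X|\geq |T|\geq 2$ for $X=V(G^P)\setminus T$ is a non sequitur and is false in general: take $G$ to be the path $t_1\,c\,t_2$ with the single pair $\{t_1,t_2\}$ (so $v_{t_1}=v_{t_2}=c$ and $G^P$ is a triangle). Then $V(G^P)\setminus T=\{c\}$ has size one while $G^P-\{c\}$ has the component $\{t_1,t_2\}$ of size two, so your fallback set is not a fracture modulator. Worse, in this instance the fracture number is $2$ and the only terminal-free subsets are $\emptyset$ and $\{c\}$, neither of which is a fracture modulator, so \emph{no} fallback can rescue the lemma as literally stated in this regime; the difficulty lies in the ``at most $|X|$'' normalization itself, and the paper's own proof has the same latent defect, since it never re-verifies that bound after the swap (its construction also outputs $\{c\}$ here). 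The honest repair is to read the conclusion as ``every component of $G^P-X$ has at most $2k$ vertices'' (which is all the subsequent algorithm uses), or to allow padding and treat such tiny instances separately; as written, your degenerate branch is incorrect.
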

\lv{
\begin{proof}
  Let $X'$ be arbitrary fracture modulator of size $k$ and let $P'
  \subseteq P$ be the set of terminal pairs $p$ such that $p \cap
  X'\neq \emptyset$. Consider the set $X=X' \setminus \bigunion{P'}
  \cup \SB N_G(a) \SM a \in \bigunion{P'}\SE$. Because every terminal
  $a \in \bigunion{P}$ has at most one neighbor in $G$ (recall our
  assumption on $(G,P)$ given in Subsection~\ref{ssec:edp}), it holds
  that $|X|\leq 2|X'|$. Moreover, for vertex $a$ that we removed from
  $X'$, it holds that $a$ is in a component of size two in $G^P
  \setminus X$, i.e., the component consisting of $a$ and $b$ where
  $\{a,b\} \in P'$. Consequently, every component of $G^P - X$
  either has size two or it is a subset of a component of $G^P- X'$.
\end{proof}
}
\sv{We note that the problem of computing a fracture modulator of size at
most $k$ has been recently considered in the context of Integer Linear Programming~\cite{DEGetalIJCAI17}.}

\lv{We note that the problem of computing a fracture modulator of size at
most $k$ is closely related to the  \textsc{Vertex Integrity} problem~\cite{DrangeDregiHof16}, and that a variant of it has been recently considered in the context of Integer Linear Programming~\cite{DEGetalIJCAI17}. }

\begin{lemma}[{\cite[Theorems 7 and 8]{DEGetalIJCAI17}}]
\label{lem:computefracture}
There exists an algorithm which takes as input a graph $G$ and an
integer $k$, runs in time at most   $\bigoh((k+1)^k|E(G)|)$, and
outputs a fracture modulator of cardinality at most $k$ if such
exists. Moreover, there is a polynomial-time algorithm that either
computes a fracture modulator of size at most $(k+1)k$ or outputs
correctly that no fracture modulator of size at most $k$ exists.
\end{lemma}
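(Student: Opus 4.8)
The plan is to reduce both parts to a single structural observation and then treat the two running-time regimes by branching and by greedy packing, respectively. The observation is: if $X^\star$ is a fracture modulator of $G$ with $|X^\star|\le k$, then every connected subgraph $W$ of $G$ on $k+1$ vertices satisfies $W\cap X^\star\neq\emptyset$ --- otherwise $W$ would lie inside a single connected component of $G-X^\star$, which would then have at least $k+1>k\ge|X^\star|$ vertices, contradicting that $X^\star$ is a fracture modulator. A convenient consequence is that ``$G$ has a fracture modulator of size at most $k$'' is equivalent to ``$G$ has a vertex set $X$ with $|X|\le k$ such that every connected component of $G-X$ has at most $k$ vertices'': the forward implication is immediate, and for the converse one may assume $|V(G)|>k$ (otherwise $V(G)$ itself is a fracture modulator of size at most $k$) and pad $X$ with arbitrary vertices to a set $X'$ of size exactly $k$; since deleting more vertices only shrinks components, every component of $G-X'$ still has at most $k=|X'|$ vertices.

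For the first part I would use a bounded search tree maintaining a partial solution $X_0$, initialized to $\emptyset$. At a node: if $|X_0|>k$, reject the branch; if every component of $G-X_0$ has at most $k$ vertices, accept and return $X_0$ padded (as above) to a genuine fracture modulator of size at most $k$; otherwise pick a component $C$ of $G-X_0$ with more than $k$ vertices, run a depth-first search from an arbitrary vertex of $C$, let $W$ consist of the first $k+1$ vertices it reaches (so $G[W]$ is connected), and branch on which vertex of $W$ to add to $X_0$. Correctness follows from the structural observation: if a fracture modulator $X^\star$ with $|X^\star|\le k$ exists, the branch that keeps $X_0\subseteq X^\star$ throughout can always add a vertex of $W\cap X^\star$ and terminates in acceptance within $|X^\star|\le k$ steps. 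The search tree has branching factor $k+1$ and depth at most $k$, and each node does $\bigoh(|E(G)|)$ work to compute components and extract $W$, which yields the claimed $\bigoh((k+1)^k|E(G)|)$ running time.

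For the second part I would replace branching by a deterministic ``add all of $W$'' step. Starting from $X=\emptyset$, as long as $G-X$ has a component with more than $k$ vertices, extract a connected $(k+1)$-vertex set $W$ from it by depth-first search and set $X\leftarrow X\cup W$; stop and report ``no fracture modulator of size at most $k$ exists'' once this step has been applied $k+1$ times. The sets $W_1,W_2,\dots$ produced in successive iterations are pairwise disjoint, since $W_i$ is chosen inside a component of $G-(W_1\cup\dots\cup W_{i-1})$, and each is a connected $(k+1)$-vertex subgraph of $G$; hence by the structural observation any fracture modulator $X^\star$ with $|X^\star|\le k$ intersects each $W_i$ in a distinct vertex, so surviving $k+1$ iterations would force $|X^\star|\ge k+1$, a contradiction that justifies the negative answer. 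If instead the process halts after $m\le k$ iterations, then $|X|\le m(k+1)\le (k+1)k$ and every component of $G-X$ has at most $k$ vertices; when $m\ge 1$ we have $|X|\ge k+1>k$, so $X$ is already a fracture modulator of the required size, and when $m=0$ every component of $G$ has at most $k$ vertices, so any $\min\{k,|V(G)|\}$ vertices form a fracture modulator of size at most $k$. There are at most $k+1$ iterations, each running in polynomial time.

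The point that needs care --- and which I would flag as the main obstacle --- is the self-referential form of the constraint, namely that the bound on component sizes is $|X|$ while $X$ is simultaneously the object being minimized. This is what makes it non-obvious both that branching on connected $(k+1)$-vertex sets suffices and that the greedy packing yields the bound $(k+1)k$; the equivalence with the ``components of size at most $k$'' formulation, together with the padding argument and the separate treatment of graphs on at most $k$ vertices, is precisely what resolves it.
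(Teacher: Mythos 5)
Your proposal is correct and follows essentially the same route as the paper: a bounded search tree that branches over the $k+1$ vertices of a connected $(k+1)$-vertex set extracted by depth-first search from an oversized component (branching factor $k+1$, depth $k$, linear work per node), and, for the polynomial-time part, the same step made greedy by adding the whole $(k+1)$-vertex set, yielding the $(k+1)k$ bound. Your padding argument handling the self-referential component-size bound ($|X|$ versus $k$) is a minor extra care that the paper leaves implicit by fixing the component-size threshold to $k$ throughout its recursion.
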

\lv{
\begin{proof}
The algorithm is based on a bounded search tree approach and relies on the following
  observations.
  \begin{description}
  \item[(O1)] If $G$ is not connected then each fracture modulator of $G$ is the disjoint union of fracture modulators for all connected components of $G$.
  \item[(O2)] If $G$ is connected and $D$ is any set of $k+1$ vertices
    of $G$ such that $G[D]$ is connected, then any fracture modulator 
    has to contain at least one vertex from $D$. 
  \end{description}
  These observations lead directly to the following recursive algorithm
  that either determines that the instance $(G,k)$ is a NO-instance or outputs a
  fracture modulator $X$ of size at most $k$. The algorithm also
  remembers the maximum size of any component in a global constant
  $c$, which is set to $k$ for the whole duration of the algorithm.
  The algorithm first checks
  whether $G$ is connected. If $G$ is not connected the algorithm
  calls itself recursively on the instance $(G[C], k)$ for each connected component~$C$
  of $G$. If one of the recursive calls returns NO or if the size of
  the union of the solutions returned for each component exceeds $k$,
  the algorithm returns that $I$ is a NO-instance. Otherwise the
  algorithm returns the union of the solutions returned for each
  component of $G$. 

  If $G$ is connected and $|V(G)|\leq c$, the algorithm returns the
  empty set as a solution. Otherwise, i.e., if $G$ is connected but
  $|V(G)|>c$ the algorithm first computes a set $D$ of $c+1$ vertices 
  of $G$ such that $G[D]$ is connected. This can for instance be achieved
  by a depth-first search that starts at any vertex of $G$ and stops
  as soon as $c+1$ vertices have been visited. The algorithm then
  branches on the vertices in $D$, i.e., for every $v \in D$ the
  algorithm recursively computes a solution for the instance $(G
  -\{v\},k-1)$. It then returns the solution of minimum size
  returned by any of those recursive calls, or NO if none of those
  calls return a solution. This completes the description of the
  algorithm. The correctness of the algorithm follows immediately from
  the above observations. Moreover the running time of the algorithm
  is easily seen to be dominated by the maximum time required for the
  case that at each step of the algorithm $G$ is connected.
  In this case the running time can be obtained as the
  product of the number of branching steps times the time spent on
  each of those. Because at each recursive call the parameter $k$ is
  decreased by at least one and the number of branching choices is at
  most $c+1$, we obtain that there are at most $(c+1)^k=(k+1)^k$ branching
  steps. Furthermore, the time at each branching step is dominated by
  the time required to check whether $G$ is connected, which is linear
  in the number of edges of $G$. Putting everything together,
  we obtain $\bigoh((k+1)^k|E(G)|)$ as the total time required by the
  algorithm, which completes the proof of the lemma.
\end{proof}
We note that the depth-first search algorithm in the above proof can be
easily transformed into a polynomial time approximation algorithm for
finding fracture modulators, with an approximation ratio of
$k+1$. In particular, instead of branching on the vertices of
a connected subgraph $D$ of $G$ with $k+1$ vertices, this algorithm
would simply add all the vertices of $D$ into the current
solution.
}

For the rest of this section, let us fix an instance $(G,P)$ of \textsc{EDP} with a fracture modulator $X$ of $G^P$ of cardinality $k$ which does not contain any terminals. Furthermore, since the subdivision of any edge (i.e., replacing an edge by a path of length $2$) in $(G,P)$ does not change the validity of the instance, we will assume without loss of generality that $G[X]$ is edgeless; in particular, any edges that may have had both endpoints in $X$ will be subdivided, creating a new connected component of size $1$.

Our next step is the definition of \emph{configurations}. These capture one specific way a connected component $C$ of $G^P-X$ may interact with the rest of the instance. 
It will be useful to observe that for each terminal pair $ab$ there exists precisely one connected component $C$ of $G^P-X$ which contains both of its terminals; we say that $ab$ \emph{occurs} in $C$. For a connected component $C$, we let $C^+$ denote the induced subgraph on $G^P[C\cup X]$.

A \emph{trace} is a tuple $(x_1,\dots,x_\ell)$ of elements of $X$. 
A configuration is a tuple $(\alpha,\beta)$ where 
\begin{itemize}
\item $\alpha$ is a multiset of at most $k$ traces, and 
\item $\beta$ is a mapping from subsets of $X$ of cardinality $2$ to $[k^2]$.
\end{itemize}

A component $C$ of $G^P$ \emph{admits} a configuration $(\alpha,\beta)$ if there exists a set of edge disjoint paths $\cF$ in $C^+$ 
and a surjective mapping $\tau$ (called the \emph{assignment}) from $\alpha$ to the terminal pairs that occur in $C$
with the following properties. 
\begin{itemize}
\item For each terminal pair $st$ that occurs in $C$: 
\begin{itemize}
\item $\cF$ either contains a path whose endpoints are $s$ and $t$, or 
\item $\cF$ contains an $s$-$x_1$ path  for some $x_1\in X$ and a $t$-$x_2$  path for some distinct $x_2\in X$ and there exists a trace $L=(x_1,\dots,x_2)\in \alpha$ such that $\tau(L)=st$.
\end{itemize}
\item for each distinct $a,b\in X$, $\cF$ contains precisely $\beta(\{a,b\})$ paths from $a$ to $b$.
\item $\cF$ contains no other paths than the above.
\end{itemize}

Intuitively, $\alpha$ stores information about how one particular set of edge disjoint paths $A$ which originate in $C$ is routed through the instance: they may either be routed only through $C^+$ (in which case they don't contribute to $\alpha$), or they may leave $C^+$ (in which case $\alpha$ stores the order in which these paths visit vertices of $X$, i.e., their trace). On the other hand, $\beta$ stores information about how paths that originate outside of $C$ can potentially be routed through $C$ (in a way which does not interfere with $A$). Observe that for any particular $\alpha$ there may exist several distinct configurations ($(\alpha,\beta_1),$ $(\alpha,\beta_2)$ and so forth); similarly, for any particular $\beta$ there may exist several distinct configurations ($(\alpha_1,\beta),$ $(\alpha_2,\beta)$ and so forth).

If a set $\cF$ of edge disjoint paths in $C^+$ satisfies the conditions specified above for a configuration $(\alpha,\beta)$, we say that $\cF$ \emph{gives rise} to $(\alpha,\beta)$. Clearly, given $\cF$ and $(\alpha,\beta)$, it is possible to determine whether $\cF$ gives rise to $(\alpha,\beta)$ in time polynomial in $|V(C)|$.

While configurations capture information about how a component can interact with a set of edge disjoint paths, our end goal is to have a way of capturing all important information about a component irrespective of any particular selection of edge disjoint paths. To this end, we introduce the notion of \emph{signatures}. A signature of a component $C$, denoted $\text{sign}(C)$, is the set of all configurations which $C$ admits. The set of all configurations is denoted by~$\Lambda$.

\sv{\begin{lemma}[$\star$]}
\lv{\begin{lemma}}
\label{lem:signbounds}
Given a component $C$, it is possible to compute $\text{sign}(C)$ in time at most $k^{\bigoh(k^2)}$. Furthermore, $|\text{sign}(C)|\leq |\Lambda| \leq k^{\bigoh(k^2)}$.
\end{lemma}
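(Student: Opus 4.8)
The plan is to bound the number of configurations and then argue that $\text{sign}(C)$ can be enumerated by brute force over a small search space. First I would count configurations. A configuration is a pair $(\alpha,\beta)$. The map $\beta$ sends each of the $\binom{k}{2}$ two-element subsets of $X$ to a value in $[k^2]=\{0,\dots,k^2\}$, so there are at most $(k^2+1)^{\binom{k}{2}} = k^{\bigoh(k^2)}$ choices for $\beta$. For $\alpha$: a single trace is a tuple $(x_1,\dots,x_\ell)$ of elements of $X$, and since the relevant paths are edge disjoint and $C^+$ has only the $k$ vertices of $X$ outside $C$, any trace that actually arises has length at most (a polynomial in) $k$ — here it suffices that a useful trace need not repeat vertices of $X$ more than a bounded number of times, so the number of distinct traces is at most $k^{\bigoh(k)}$. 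Then $\alpha$ is a multiset of at most $k$ traces drawn from this pool, giving at most $(k^{\bigoh(k)})^k = k^{\bigoh(k^2)}$ choices. Multiplying, $|\Lambda|\le k^{\bigoh(k^2)}$, and since $\text{sign}(C)\subseteq\Lambda$ the cardinality bound follows.

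For the computational claim I would proceed as follows. By the structural normalization fixed before the lemma, $|C|\le k$ (so $C^+$ has at most $2k$ vertices) and $G[X]$ is edgeless. A set $\cF$ of edge disjoint paths in $C^+$ that gives rise to some configuration uses only edges of $C^+$, of which there are at most $\binom{2k}{2}=\bigoh(k^2)$; moreover each such path, being edge disjoint from the others and useful, can be taken to be of length polynomial in $k$, and the total number of paths in $\cF$ is bounded (the paths counted by $\beta$ number at most $k^2\cdot\binom{k}{2}$, and the paths counted by $\alpha$ number at most $k$ plus the purely internal terminal-pair paths, of which there are at most $|C|\le k$). Hence there are only $k^{\bigoh(k^2)}$ "combinatorially distinct" candidate path systems $\cF$ in $C^+$ — for instance, one may enumerate all ways of assigning each edge of $C^+$ to at most one path and at most one position, which is $(\bigoh(k^2))^{\bigoh(k^2)} = k^{\bigoh(k^2)}$ options. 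For each candidate $\cF$ one checks in $\text{poly}(k)$ time that it is a valid system of edge disjoint paths, reads off the induced $\beta$ (count the $a$–$b$ paths), and for the induced $\alpha$ one branches over the surjective assignments $\tau$ of terminal-pair-carrying paths to terminal pairs of $C$ — there are at most $k^{\bigoh(k)}$ such $\tau$, and each determines whether the system gives rise to a configuration and which one. Collecting all configurations obtained this way yields exactly $\text{sign}(C)$, and the total running time is $k^{\bigoh(k^2)}\cdot k^{\bigoh(k)}\cdot\text{poly}(k) = k^{\bigoh(k^2)}$.

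The main obstacle — and the place where the argument needs real care rather than a one-line bound — is justifying that it suffices to search over path systems $\cF$ whose individual paths have length polynomial in $k$ (equivalently, that one may restrict to $\cF$ in which no path revisits the small "interface" structure too often). Without such a bound the number of candidate systems is not obviously finite in terms of $k$ alone, since $|C|$ could in principle be large; but the normalization $|C|\le k$ together with the fact that $C^+$ has at most $\bigoh(k^2)$ edges means every path in an edge disjoint system has length at most $\bigoh(k^2)$ automatically, and the number of paths is bounded as above. Once this is pinned down, the rest is a routine enumeration argument. I would therefore organize the proof as: (1) $|\Lambda|\le k^{\bigoh(k^2)}$ by the counting above; (2) observe $|C^+|\le 2k$ and $|E(C^+)|\le\bigoh(k^2)$, hence any relevant $\cF$ has $\bigoh(k^2)$ paths each of length $\bigoh(k^2)$; (3) enumerate all such $\cF$ (at most $k^{\bigoh(k^2)}$), and for each, all assignments $\tau$ (at most $k^{\bigoh(k)}$), recording the resulting configuration; (4) conclude that the union is exactly $\text{sign}(C)$ and the running time is $k^{\bigoh(k^2)}$.
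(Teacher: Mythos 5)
Your proposal is correct and follows essentially the same route as the paper: bound $|\Lambda|$ by counting choices for $\alpha$ (at most $k$ traces over $X$, each of length at most $k$, giving $k^{\bigoh(k^2)}$) and for $\beta$ ($k^{\bigoh(k^2)}$), then compute $\text{sign}(C)$ by brute-force enumeration of edge-disjoint path systems in $C^+$ via assignments of its $\bigoh(k^2)$ edges to $\bigoh(k^2)$ path labels and checking which configurations each system gives rise to. The only cosmetic difference is that you read off $\beta$ and branch over the assignments $\tau$ explicitly, whereas the paper simply loops over all of $\Lambda$ and tests the ``gives rise to'' predicate; both yield the claimed $k^{\bigoh(k^2)}$ bound.
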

\lv{
\begin{proof}
We begin with the latter claim. The number of traces is $k!+(k-1)!+\dots+1!$, which is upper-bounded by $2\cdot k!$. Consequently, the number of choices for $\alpha$ is upper-bounded by $(2\cdot k!)^k\leq k^{\bigoh(k^2)}$. On the other hand, the number of choices for $\beta$ is upper-bounded by $(k^2)^{k^2}$. Since the number of configurations is upper-bounded by the number of choices for $\alpha$ times the number of choices for $\beta$, we obtain $|\Lambda|\leq k^{\bigoh(k^2)}$. Note that it is possible to exhaustively construct $\Lambda$ in the same time bound.

For the former claim, observe that $C^+$ contains at most $2k^2$ edges. Consider the following exhaustive algorithm on $C^+$. The algorithm exhaustively tries all assignments of edges in $C^+$ to labels from $\{\emptyset\}\cup [2k^2]$. For each such assignment$\cF$, it checks that each label forms a path in $C^+$; if this is the case, then $\cF$ is a collection of edge disjoint paths of $C^+$. We can then loop through each configuration $(\alpha,\beta)$ in $\Lambda$, check whether $\cF$ gives rise to $(\alpha,\beta)$, and if yes we add $(\alpha,\beta)$ to $\text{sign}(C)$. In total, this takes time at most $k^{\bigoh(k^2)}\cdot k^{\bigoh(k^2)} = k^{\bigoh(k^2)}$.
\end{proof}
}

Our next step is the formulation of a clear condition linking configurations of components in $G^P-X$ and solving $(G,P)$. This condition will be of importance later, since it will be checkable by an integer linear program. For a trace $\alpha$, we say that $a,b$ \emph{occur consecutively} in $\alpha$ if elements $a$ and $b$ occur consecutively in the sequence $\alpha$ (regardless of their order), i.e., $\alpha=(\dots,a,b,\dots)$ or $\alpha=(\dots,b,a,\dots)$. Let $\mathcal D$ be the set of connected components of $G^P-X$.

A \emph{configuration selector} is a function which maps each
connected component $C$ in $G^P-X$ to a configuration
$(\alpha,\beta)\in \text{sign}(C)$. We say that a configuration
selector $S$ is \emph{valid} if it satisfies the condition that
$\text{dem}(ab)\leq \text{sup}(ab)$ for every $\{a,b\}\subseteq X$,
where dem (\emph{demand}) and sup (\emph{supply}) are defined as
follows:
\sv{dem$(ab)$ is the number of traces in $\bigcup_{C\in \mathcal
    D}S(C)$ where $ab$ occur consecutively and sup$(ab)$ is the sum of all the values $\beta(a,b)$ in $\bigcup_{C\in \mathcal D} S(C)$. 
}
\lv{\begin{itemize}
\item dem$(ab)$ is the number of traces in $\bigcup_{C\in \mathcal D}S(C)$ where $ab$ occur consecutively. 
\item sup$(ab)$ is the sum of all the values $\beta(a,b)$ in $\bigcup_{C\in \mathcal D} S(C)$. 
\end{itemize}
}

\lv{For completeness, we provide the formal definitions of sup($ab$) and dem($ab$) below. 
\begin{itemize}
\item dem($ab$): let the multiset ${\mathcal A}^0$ be the restriction of the multiset $\bigcup_{C\in \mathcal D} S(C)$ to the first component of each configuration. Then we set ${\mathcal A}=\bigcup_{\alpha\in {\mathcal A}^0} \alpha$, i.e., ${\mathcal A}$ is the multiset of all traces which occur in configurations originating from $S$. Finally, let ${\mathcal A}_{ab}$ be the restriction of ${\mathcal A}$ only to those traces where $ab$ occurs consecutively, and we set dem$(ab)=|{\mathcal A}_{ab}|$.
\item sup($ab$): Stated formally, let the multiset ${\mathcal B}^0$ be the restriction of the multiset $\bigcup_{C\in \mathcal D} S(C)$ to the second component of each configuration. Then we set ${\mathcal B}=\bigcup_{\beta\in {\mathcal B}^0} \beta$, i.e., ${\mathcal B}$ is the multiset of all mappings which occur in configurations originating from $S$.
Finally, we set sup$(ab)=\sum_{\beta\in {\mathcal B}}\beta(ab)$.
\end{itemize}
}

The next, crucial lemma links the existence of a valid configuration selector to the existence of a solution for \textsc{EDP}.

\lv{\begin{lemma}}
\sv{\begin{lemma}[$\star$]}
\label{lem:selcorrect}
$(G,P)$ is a yes-instance if and only if there is a valid configuration selector.
\end{lemma}

\lv{
\begin{proof}
Assume $(G,P)$ is a yes-instance and let $Q$ be a solution, i.e., $Q$ is a set of edge disjoint paths in $G$ which link each terminal pair in $P$. We will construct a valid configuration selector $S$.
First, consider a connected component $C$ of $G^P-X$. Observe that $Q$
restricted to $C^+$ forms a set of edge disjoint paths $Q_C$ which
consists of:
\sv{(1) paths starting and ending in $X$,
(2) paths starting and ending at terminals of a terminal pair, and (3)
paths starting at terminals of a terminal pair and ending in $X$.
}
\lv{\begin{itemize}
\item paths starting and ending in $X$,
\item paths starting and ending at terminals of a terminal pair, and
\item paths starting at terminals of a terminal pair and ending in $X$.
\end{itemize}}

We will use $Q_C$ to construct a configuration $(\alpha_C,\beta_C)$ of $C$, as follows. For each $\{x,y\}\subseteq X$, $\beta_C(\{x,y\})$ will map each tuple $\{x,y\}$ to the number of paths in $Q_C$ whose endpoints are precisely $x$ and $y$. On the other hand, for each pair of paths in $Q_C$ which start at terminals $s,t$ of a terminal pair and end at $x,y \in X$, we add the trace $(x,z_1,\dots,z_\ell,y)$ to $\alpha$, where $z_i$ is the $(i+1)$-th vertex visited by the $s$-$t$ path of $Q$ in $X$. For example, if the $s$-$t$ path used by the solution intersects with vertices in $X$ in the order $(a,b,c,d)$, then we add the trace $(a,b,c,d)$ to $\alpha_C$. As witnessed by $Q_C$, the resulting configuration $(\alpha_C,\beta_C)$ is a configuration of $C$ and in particular $(\alpha_C,\beta_C)\in \text{sign}(C)$. 

Let $S$ be a configuration selector which maps each connected component $C$ to $(\alpha_C,\beta_C)$ constructed as above. We claim that $S$ is valid. Indeed, for each $\{a,b\}\in X$, dem$(ab)$ is the number of terminal-to-terminal paths in $Q$ which consecutively visit $a$ and then $b$ in $X$ (or vice-versa, $b$ and then $a$). At the same time, sup$(ab)$ is the number of path segments in $Q$ whose endpoints are $a$ and $b$. Clearly, dem$(ab)=\text{sup}(ab)$.

On the other hand, assume we have a valid configuration selector $S$. We will use $S$ to argue the existence of a set $Q$ of edge disjoint paths which connect all terminal pairs in $(G,P)$. For each component $C$ we know that $C$ admits $S(C)=(\alpha_C,\beta_C)$ and hence there exists a set of edge disjoint paths, say $\cF_C$, which satisfies the following conditions:
\begin{itemize}
\item For each terminal pair $st$ in $C$, $\cF_C$ either contains a path whose endpoints are $s$ and $t$ or two paths from $s$ and $t$ to two distinct endpoints in $X$;
\item For each distinct $a,b\in X$, $\cF_C$ contains precisely $\beta_C(\{a,b\})$ paths from $a$ to $b$.
\end{itemize}

Let $\cF=\bigcup_{C\in \mathcal D} \cF_C$, and let $\tau_C$ be the surjective assignment function for $C$ which goes with $\cF_C$. We will now construct the set $Q$ from $\cF$ as follows. For each terminal pair $s,t$ in some component $C$, we add a $s$-$t$ path $L$ into $Q$, where $L$ is obtained as follows. If $\cF_C$ contains a path whose endpoints are $s$ and $t$, then we set this to $L$. Otherwise, $L$ is composed of the following segments:
\begin{itemize}
\item the two paths in $\cF_C$ whose endpoints are $s$ and $t$, and
\item for each $x,y\in X$ which occur consecutively in the trace $\tau_C^{-1}(st)$, we choose an arbitrary $x$-$y$ path segment from $\cF$, use it for $L$, and then delete this path segment from $\cF$.
\end{itemize}

First of all, observe that since $S$ is valid, there will always be enough $x$-$y$ path segments in $\cF$ to choose from in the construction of $L$. Furthermore, each $L$ is an $s$-$t$ path, and all the paths in $Q$ are edge disjoint. Hence $(G,P)$ is indeed a yes-instance and the lemma holds.
\end{proof}
}

\sv{
The problem whether there is a valid configuration selector can be
easily translated into an integer linear program with a number of
variables that can be bounded in terms of $k$. It then follows
from~\cite{Lenstra83} that the problem is fixed-parameter tractable
parameterized by $k$.
}
\sv{\begin{lemma}[$\star$]}
\lv{\begin{lemma}}
\label{lem:validsel}
There exists an algorithm which takes as input an \textsc{EDP} instance $(G,P)$ and a fracture modulator $X$ of $G^P$ and determines whether there exists a valid configuration selector $S$ in time at most $2^{2^{k^{\bigoh(k^2)}}}\cdot |V(G)|$.
\end{lemma}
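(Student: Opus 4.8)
The plan is to reformulate the existence of a valid configuration selector as an integer linear program whose number of variables is bounded by a function of $k$ alone, and then to invoke Theorem~\ref{thm:pilp}. The starting point is the observation that two connected components of $G^P-X$ with the same signature are interchangeable: by the definitions of $\text{dem}$ and $\text{sup}$, the only information about a configuration selector $S$ that affects validity is, for each signature $\sigma\subseteq\Lambda$ and each configuration $\lambda\in\sigma$, how many components of signature $\sigma$ are mapped by $S$ to $\lambda$. So the algorithm first computes $\text{sign}(C)$ for every component $C$ using Lemma~\ref{lem:signbounds}, which costs $k^{\bigoh(k^2)}\cdot|V(G)|$ in total since there are at most $|V(G)|$ components; it then partitions the components by signature, obtaining at most $2^{|\Lambda|}\le 2^{k^{\bigoh(k^2)}}$ groups, and records the size $n_\sigma$ of each group.

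Next I would set up the ILP. For each signature $\sigma$ that occurs among the components and each $\lambda\in\sigma$, introduce a non‑negative integer variable $x_{\sigma,\lambda}$ with intended meaning ``number of components of signature $\sigma$ assigned configuration $\lambda$''. The constraints are (i) $\sum_{\lambda\in\sigma}x_{\sigma,\lambda}=n_\sigma$ for each occurring signature $\sigma$, which forces the variables to encode an actual configuration selector, and (ii) $\text{dem}(ab)\le\text{sup}(ab)$ for every pair $\{a,b\}\subseteq X$. The key point making (ii) expressible is that both sides are \emph{linear} in the $x_{\sigma,\lambda}$: a configuration $\lambda=(\alpha,\beta)$ contributes a fixed coefficient $d_\lambda(ab)$ to $\text{dem}(ab)$ (the number of traces in $\alpha$ in which $a,b$ occur consecutively) and the fixed coefficient $\beta(\{a,b\})$ to $\text{sup}(ab)$, so (ii) becomes $\sum_{\sigma,\lambda} d_\lambda(ab)\,x_{\sigma,\lambda}\le \sum_{\sigma,\lambda}\beta_\lambda(\{a,b\})\,x_{\sigma,\lambda}$. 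By Lemma~\ref{lem:selcorrect} together with the interchangeability observation, this ILP is feasible if and only if there is a valid configuration selector, i.e.\ if and only if $(G,P)$ is a yes-instance; correctness of the reduction thus reduces to checking that grouping by signature is sound and that $\text{dem}$ and $\text{sup}$ are genuinely linear functionals of the selection counts, both of which are immediate from the definitions.

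Finally I would bound the ILP and apply Theorem~\ref{thm:pilp}. The number of variables is $p\le|\{\text{occurring signatures}\}|\cdot|\Lambda|\le 2^{k^{\bigoh(k^2)}}$, the number of constraints is $\bigoh(2^{|\Lambda|}+k^2)=2^{k^{\bigoh(k^2)}}$, every coefficient except the right-hand sides is bounded by $k^2$ (a trace has distinct entries, so $d_\lambda(ab)\le k$, and $\beta(\{a,b\})\le k^2$), and each $n_\sigma\le|V(G)|$; hence the bit-length of the instance is $L=2^{k^{\bigoh(k^2)}}\cdot\log|V(G)|$. Theorem~\ref{thm:pilp} then solves it using $\bigoh(p^{2.5p+o(p)}\cdot L)=2^{\bigoh(p\log p)}\cdot L=2^{2^{k^{\bigoh(k^2)}}}\cdot|V(G)|$ operations, which dominates both the signature-computation and ILP-construction phases. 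The only real obstacle is the doubly‑exponential blow‑up in the variable count: one must be careful to keep the number of variables at $2^{k^{\bigoh(k^2)}}$ (rather than one variable per component, which would not be bounded by $k$) by exploiting the signature grouping, after which the stated running time is a direct consequence of Lenstra's algorithm.
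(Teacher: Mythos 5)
Your proposal is correct and matches the paper's proof essentially verbatim: the same ILP with one variable per (occurring signature, configuration) pair, the same two families of constraints (counts summing to the number of components per signature, and linear demand-versus-supply inequalities for each pair in $X$), and the same invocation of Theorem~\ref{thm:pilp} with the variable count bounded by $2^{k^{\bigoh(k^2)}}$. The only cosmetic difference is that the paper aggregates the demand/supply coefficients via the sets $\Upsilon^{x,y}_i$ and $\Lambda^{x,y}_i$, whereas you write the per-configuration coefficients directly, which is the same constraint.
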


\lv{\begin{proof}
Consider the following instance of Integer Linear Programming. For each signature $\eta$ such that there exists a connected component $C$ of $G^P-X$ where $\text{sign}(C)=\eta$, and for each configuration $(\alpha,\beta)\in \eta$, we create a variable $z^\eta_{(\alpha,\beta)}$, and for each we add a constraint requiring it to be nonnegative. The first set of constraints we create is as follows: for each signature $\eta$, we set $$\sum_{(\alpha,\beta)\in \eta} z^\eta_{(\alpha,\beta)}=d_\eta,$$ where $d_\eta$ is the number of connected components of $G^P-X$ with signature $\eta$.

Next, for each $\{x,y\}\subseteq X$ and for $i\in [k]$, let $\Upsilon^{x,y}_i$ be the set of all configurations $(\alpha,\beta)$ such that the number of traces in $\alpha$ where $x,y$ occur consecutively is $i$.
Similarly, for $i\in [k^2]$, let $\Lambda^{x,y}_i$ be the set of all configurations $(\alpha,\beta)$ such that $\beta(\{x,y\})=i$. These sets are used to aggregate all configurations with a specific ``contribution'' $i$ to the demand and supply.
For example, all configurations whose first component is $\alpha=\{(a,b,c),(a,d,e,b),(e,b,a)\}$ would belong to $\Upsilon^{a,b}_2$, and similarly all configurations whose second component $\beta$ satisfy $\beta(\{a,b\})=3$ would belong to $\Lambda^{x,y}_3$. 
We can now add our second set of constraints: for each $\{x,y\}\subseteq X$, 
$$
\sum_{i\in[k]}\big (i\cdot \sum_{(\alpha,\beta)\in \Upsilon^{x,y}_i}z^\eta_{(\alpha,\beta)} \big )
\leq
\sum_{i\in[k^2]}\big (i\cdot \sum_{(\alpha,\beta)\in \Lambda^{x,y}_i}z^\eta_{(\alpha,\beta)}\big ).
$$

The number of variables used in the ILP instance is upper-bounded by the number of signatures times the cardinality of the largest signature. By Lemma~\ref{lem:signbounds}, the latter is upper-bounded by $k^{\bigoh(k^2)}$, and therefore the former is upper-bounded by $2^{k^{\bigoh(k^2)}}$; in total, the instance thus contains at most $2^{k^{\bigoh(k^2)}}$ variables. Since we can determine sign$(C)$ for each connected component $C$ of $G^P-X$ by Lemma~\ref{lem:signbounds} and the number of connected components is upper-bounded by $|V(G)|$, we can also construct this ILP instance in time at most $2^{k^{\bigoh(k^2)}}\cdot |V(G)|$. Moreover, Theorem~\ref{thm:pilp} allows us to solve the ILP instance in time at most $2^{2^{k^{\bigoh(k^2)}}}\cdot |V(G)|$, and in particular to output an assignment $\zeta$ from variables $z^\eta_{(\alpha,\beta)}$ to integers which satisfies the above constraints. 

To conclude the proof, we show how $\zeta$ can be used to obtain the desired configuration selector $S$. For each set of connected components with signature $\eta$, let $S$ map precisely $z^\eta_{(\alpha,\beta)}$ connected components to configuration $(\alpha,\beta)$. Observe that due to the first set of constraints, in total $S$ will be mapping precisely the correct number of components with signature $\eta$ to individual configurations $(\alpha,\beta)$. Moreover, the second set of constraints ensures that, for every $\{x,y\}\subseteq X$, the total demand does not exceed the total supply. Hence $S$ is valid and the lemma holds.
\end{proof}
}
\setcounter{THE}{3}
\begin{THE}
  \textsc{EDP} is fixed-parameter tractable parameterized by the fracture number of the augmented graph. 
\end{THE}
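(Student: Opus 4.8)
The plan is to assemble the machinery developed in this section into a single fpt-algorithm. Let $(G,P)$ be the input instance and let $k=\text{frac}(G^P)$ be the parameter. Since $k$ is not handed to us explicitly, I would first determine it together with a witnessing fracture modulator: invoke the algorithm of Lemma~\ref{lem:computefracture} on $G^P$ for increasing values $j=1,2,\dots$ until it returns a fracture modulator; the first $j$ for which this succeeds equals $k$, and the total time spent is $\bigoh(k\cdot(k+1)^k|E(G)|)$, which is fpt in $k$. (Alternatively, one may start from the size-$(k+1)k$ approximate modulator produced in polynomial time by the same lemma and feed that into the next step; this only affects the constants in the parameter dependence.)

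Next I would bring the instance into the normal form required by the configuration machinery. Applying the first lemma of this section, I convert the modulator into a fracture modulator $X$ of $G^P$ that contains no terminal vertex and has size at most $2k$ (this takes linear time). Then, since subdividing an edge changes neither the validity of the \textsc{EDP} instance nor the relevant structural parameter, I subdivide every edge of $G^P$ with both endpoints in $X$; this makes $G[X]$ edgeless while keeping $X$ a fracture modulator, as the newly created vertices form singleton components. At this point $(G,P)$ together with $X$ satisfies exactly the assumptions under which \emph{configurations}, \emph{signatures}, and \emph{configuration selectors} were defined.

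Now I would compute, for every connected component $C$ of $G^P-X$, its signature $\text{sign}(C)$ via the algorithm of Lemma~\ref{lem:signbounds}; each such computation costs $k^{\bigoh(k^2)}$ and there are at most $|V(G)|$ components, so this phase runs in time $k^{\bigoh(k^2)}\cdot|V(G)|$. Finally I invoke the algorithm of Lemma~\ref{lem:validsel}, which in time $2^{2^{k^{\bigoh(k^2)}}}\cdot|V(G)|$ decides whether a valid configuration selector exists, and I output \textsc{Yes} exactly when it does. Correctness is immediate from Lemma~\ref{lem:selcorrect}, which states that $(G,P)$ is a yes-instance if and only if a valid configuration selector exists, and the total running time is dominated by the last step, hence of the form $f(k)\cdot|V(G)|$ for a computable $f$.

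There is essentially no new obstacle in this last step: all the difficulty has already been absorbed into Lemmas~\ref{lem:signbounds}, \ref{lem:selcorrect}, and \ref{lem:validsel}. The only point requiring a little care is ensuring that the preprocessing -- removing terminals from the modulator and subdividing the $X$-internal edges -- genuinely produces an instance to which those lemmas apply without inflating the parameter beyond $\bigoh(k)$, and this is precisely what the first lemma of the section and the edge-subdivision observation guarantee.
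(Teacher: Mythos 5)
Your proposal is correct and follows essentially the same route as the paper: compute a fracture modulator via Lemma~\ref{lem:computefracture}, normalize it (terminal-free, $G[X]$ edgeless) as in the paper's setup, then decide the existence of a valid configuration selector via Lemma~\ref{lem:validsel} and conclude by Lemma~\ref{lem:selcorrect}. The extra details you spell out (iterating over $j$ to find $k$, the signature computations via Lemma~\ref{lem:signbounds}) are implicit in the paper's argument and do not change the approach.
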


\begin{proof}
We begin by computing a fracture modulator of the augmented graph by Lemma~\ref{lem:computefracture}. We then use Lemma~\ref{lem:validsel} to determine whether a valid configuration selector $S$ exists, which by Lemma~\ref{lem:selcorrect} allows us to solve EDP.
\end{proof}

\lv{
\section{Hardness}\label{sec:hardness}
In the previous section we have shown that \textsc{EDP}
is fixed-parameter tractable parameterized by the fracture number.
One might be tempted to think that tractability still applies if
instead of bounding the size of each component one only bounds the
number of terminal pairs in each component. In this section we show
that this is not the case, i.e., we show that even if both the
deletion set as well as the number of terminal pairs in each component
are bounded by a
constant, \textsc{EDP} remains NP-complete. Namely, this
section is devoted to a proof of the following theorem.
}
\sv{Having established that \textsc{EDP}
is fixed-parameter tractable parameterized by the fracture number, let us briefly consider
potential extensions of the parameter.
In particular, one might be tempted to think that tractability still applies if
instead of bounding the size of each component one only bounds the
number of terminal pairs in each component. We conclude this section by showing
that this is not the case: even if both the
deletion set and the number of terminal pairs in each component
are bounded by a
constant, \textsc{EDP} remains NP-complete.
}
\setcounter{THE}{4}
\sv{\begin{THE}[$\star$]}
\lv{\begin{THE}}\label{the:EDP-NP-comp}
  \textsc{EDP} is NP-complete even if the augmented
  graph $G^P$ of the instance has a deletion set $D$ of size $6$ such that each
  component of $G^P - D$ contains at most $1$ terminal pair.
\end{THE}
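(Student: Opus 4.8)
The plan is to prove \textbf{NP}-completeness by a polynomial reduction from a suitable \textbf{NP}-complete problem. Membership in \textbf{NP} is immediate, since a witness is a set of at most $|P|$ pairwise edge-disjoint paths (one per terminal pair) whose correctness is checkable in polynomial time. For the hardness part I would reduce from (a convenient normal form of) \textsc{3-Satisfiability}; a viable alternative is to start from one of the numeric problems analysed earlier, e.g.\ \textsc{MSS} with the number of dimensions treated as part of the input and all integers in unary, which is \textbf{NP}-hard because the reduction from \textsc{Multicolored Clique} behind it is polynomial even without the parameterization. The decisive feature the output instance $(G,P)$ must have is a fixed six-vertex hub set $D$ such that $G^P-D$ splits into polynomially many connected components, each carrying at most one terminal pair; thus the hardness can come neither from the number of hubs (constant) nor from the size of any single component (unrestricted), but only from the way the many one-pair components can simultaneously route through $D$.

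First I would fix six designated hub vertices as $D$ and attach to $D$ -- and to nothing else -- one gadget per variable and one gadget per clause, each being a connected subgraph with exactly one terminal pair. A \emph{variable gadget} for $x_i$ is built so that its terminal pair $\{a_i,b_i\}$ can be connected inside the gadget-plus-$D$ subgraph in essentially two ways: one leaving a prescribed ``true-pattern'' of used and free links among the hubs, the other a ``false-pattern''. A \emph{clause gadget} for $C_j$ is built so that its terminal pair is routable exactly when at least one of the literal-patterns satisfying $C_j$ is present on $D$. Since all gadgets share the same six physical hubs, the reduction must multiplex unboundedly many logical wires through a constant number of connections; I would achieve this by a positional encoding along the gadget paths in the spirit of the path gadgets $G(s)$ used in the reduction of Lemma~\ref{lem:hard-mdedp}, so that the ``slot'' on $D$ associated with $x_i$ is distinguished from that of $x_j$ by how many times the relevant route must bounce between hub vertices before reaching the gadget. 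The multiplicities of the parallel connections among the hubs are then set so that the global edge-disjointness requirement is satisfiable precisely when every clause receives a satisfying literal.

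To finish I would check the three required properties -- $|D|=6$, at most one terminal pair per component of $G^P-D$, and polynomial size -- and then establish the equivalence. The forward direction is routine: from a satisfying assignment pick, in each variable gadget, the path realising the corresponding pattern; these patterns leave exactly the hub capacity each clause gadget needs, and the chosen paths are edge-disjoint by construction. The reverse direction is the core: from any solution one must read off a consistent truth assignment, which requires showing that each variable gadget is forced to realise one of its two intended patterns and -- crucially -- that no path can ``cheat'' by borrowing hub capacity earmarked for another variable or by faking a satisfied clause. I expect this cheating-avoidance step to be the main obstacle: because a gadget may be arbitrarily large and its path may cross $D$ many times, the gadgets and the accompanying argument must be engineered so that the only admissible traversals of the hubs are the intended ones. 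Making this robust while keeping the hub set down to exactly six vertices is where essentially all of the technical work will lie.
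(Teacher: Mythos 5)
Your proposal is not yet a proof: the entire load-bearing part --- the variable and clause gadgets, the ``positional encoding'' through the six hubs, and above all the soundness argument that no solution can cheat by borrowing hub capacity intended for another variable or clause --- is left unspecified, and you yourself flag it as the place where ``essentially all of the technical work will lie.'' That worry is well founded. With only six hub vertices shared by an unbounded number of one-pair components, the interaction between components is visible only through aggregate edge-usage on the hubs; a clause gadget cannot, in general, tell \emph{which} variable gadget supplied the capacity it consumes, so consistency of a truth assignment (the same variable read identically by all clauses containing it) is exactly the kind of property that such an aggregate interface fails to enforce. Until gadgets are exhibited and this is proved, the reduction from \textsc{3-SAT} (or from \textsc{MSS}) does not go through, so there is a genuine gap.

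The paper sidesteps this difficulty entirely by choosing a different source problem: \textsc{MEDP} with just three commodities $(s_1,t_1,n_1)$, $(s_2,t_2,n_2)$, $(s_3,t_3,n_3)$, which is \emph{strongly} NP-complete (Vygen). The deletion set consists of six fresh vertices $s_i^D,t_i^D$ ($i=1,2,3$); for each commodity $i$ one adds $n_i$ connector vertices between $s_i^D$ and $s_i$ (and between $t_i^D$ and $t_i$) and $n_i$ fresh terminal pairs $(s_i^j,t_i^j)$ pendant on $s_i^D$ and $t_i^D$. Then every terminal pair forms its own two-vertex component of $G^P-D$, while the whole original graph $G$ lies in one large component containing \emph{no} terminals, so the ``at most one pair per component'' condition is immediate and no logic has to be multiplexed through the hubs at all: each $s_i^j$--$t_i^j$ path is forced through $s_i^D$, into $G$ via some connector, across $G$ from $s_i$ to $t_i$, and out via $t_i^D$, so solutions correspond exactly to $n_i$ edge-disjoint $s_i$--$t_i$ paths for each $i$. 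The lesson is that the hardness should be imported wholesale inside a single terminal-free component rather than re-encoded combinatorially across the six hubs; if you want to salvage your plan, reducing from a multi-commodity flow-type problem with constantly many commodities (as the paper does, and as its Lemma on \textsc{MDEDP}/\textsc{MUEDP} also suggests) is the natural fix.
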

\sv{The proof of Theorem~\ref{the:EDP-NP-comp} is based on a polynomial reduction from the
\textsc{Multiple Edge Disjoint Paths (MEDP)} problem, where given an
undirected graph $G$, three pairs $(s_1,t_1)$, $(s_2,t_2)$, and $(s_3,t_3)$ of terminals
and three integers $n_1$, $n_2$, and $n_3$ one asks whether there is a set
of pairwise edge disjoint paths containing $n_1$ paths between
$s_1$ and $t_1$, $n_2$ paths between $s_2$ and $t_2$, and $n_3$ paths between $s_3$ and $t_3$.
MEDP is known to be strongly NP-complete~\cite[Theorem 4]{vygen1995np}.}
\lv{We will show the theorem by a polynomial-time reduction from the
following problem. 
\pbDef{\textsc{Multiple edge disjoint Paths (MEDP)}}{An undirected
  graph $G$, three pairs $(s_1,t_1)$, $(s_2,t_2)$, and $(s_3,t_3)$ of terminals
  (vertices of $G$) and three integers $n_1$, $n_2$, and $n_3$.}{Is there a set
  of pairwise edge disjoint paths containing $n_1$ paths between
  $s_1$ and $t_1$, $n_2$ paths between $s_2$ and $t_2$, and $n_3$ paths between $s_3$ and $t_3$?}
It is shown in~\cite[Theorem 4]{vygen1995np} that MEDP is strongly NP-complete.
}

\lv{\begin{proof}[Proof of Theorem~\ref{the:EDP-NP-comp}]
  We provide a polynomial-time reduction from the \textsc{MEDP}
  problem.
  Namely, for an instance
  $\III=(G,(s_1,t_1),(s_2,t_2),(s_3,t_3),n_1,n_2,n_3)$ of \textsc{MEDP}, we construct an
  instance $\III'=(H,P)$ in polynomial time such that $\III$ has a
  solution if and only if so does $\III'$ and $H^P$ has a deletion set
  $D \subseteq V(H)$ of size $6$ such that each component of
  $H^P- D$ contains at most one terminal pair.

  The graph $H$ is obtained from $G$ after adding:
  \begin{itemize}
  \item[(C1)] $6$ vertices denoted by $s_1^D$, $t_1^D$, $s_2^D$,
    $t_2^D$, $s_3^D$, and $t_3^D$ making up the deletion set $D$,
  \item[(C2)] $n_1$ new vertices $a_1^1,\dotsc,a_1^{n_1}$ each connected via two edges to $s_1$ and
    $s_1^D$,
  \item[(C3)] $n_1$ new vertices $b_1^1,\dotsc,b_1^{n_1}$ each connected via two edges to $t_1$ and
    $t_1^D$,
  \item[(C4)] $n_2$ new vertices $a_2^1,\dotsc,a_2^{n_2}$ each connected via two edges to $s_2$ and
    $s_2^D$,
  \item[(C5)] $n_2$ new vertices $b_2^2,\dotsc,b_2^{n_2}$ each
    connected via two edges to $t_2$ and $t_2^D$,
  \item[(C6)] $n_3$ new vertices $a_3^1,\dotsc,a_3^{n_2}$ each connected via two edges to $s_3$ and
    $s_3^D$,
  \item[(C7)] $n_3$ new vertices $b_3^2,\dotsc,b_3^{n_2}$ each connected via two edges to $t_3$ and $t_3^D$,
  \item[(C8)] for every $i$ with $1 \leq i \leq n_1$ two new vertices
    $s_1^i$ and $t_1^i$, where $s_1^i$ is connected by an edge to
    $s_1^D$ and $t_1^i$ is connected by an edge to $t_1^D$,
  \item[(C9)] for every $i$ with $1 \leq i \leq n_2$ two new vertices
    $s_2^i$ and $t_2^i$, where $s_2^i$ is connected by an edge to
    $s_2^D$ and $t_2^i$ is connected by an edge to $t_2^D$.
  \item[(C10)] for every $i$ with $1 \leq i \leq n_3$ two new vertices
    $s_3^i$ and $t_3^i$, where $s_3^i$ is connected by an edge to
    $s_3^D$ and $t_3^i$ is connected by an edge to $t_3^D$.
  \end{itemize}
  This completes the description of $H$. The set $P$ is defined as
  $\SB (s_1^i,t_1^i) \SM 1 \leq i \leq n_1 \SE \cup \SB (s_2^i,t_2^i)
  \SM 1 \leq i \leq n_2 \SE\cup \SB (s_3^i,t_3^i)
  \SM 1 \leq i \leq n_3 \SE$. Observe that $H^P- D$ has one
  component consisting of the vertices $s_1^i$ and $t_1^i$ for every
  $i$ with $1 \leq i \leq n_1$, one component consisting of the
  vertices $s_2^i$ and $t_2^i$ for every $i$ with $1 \leq i \leq n_2$,
  one component consisting of the
  vertices $s_3^i$ and $t_3^i$ for every $i$ with $1 \leq i \leq n_3$,
  and one large component consisting of $G$ and the vertices
  introduced in (C2)--(C5). Hence every component of $H^P - D$
  contains at most one terminal pair.

  It remains to show that $\III$ has a solution if and only if so does
  $\III'$. For the forward direction let $S$ be a solution for $\III$,
  i.e., $S$ is a set of pairwise edge disjoint paths in $G$
  containing $n_1$ paths $P_1,\dotsc,P_{n_1}$ between $s_1$ and $t_1$,
  $n_2$ paths $Q_1,\dotsc,Q_{n_2}$ between $s_2$ and $t_2$, and
  $n_3$ paths $R_1,\dotsc,R_{n_2}$ between $s_3$ and $t_3$.
  Then we obtain a solution for $\III$, i.e., a set $S'$
  of pairwise edge disjoint paths in $H$ containing one
  path between every terminal pair in $P$ as follows. For every $i$
  with $1 \leq i \leq n_1$, we add the path between $s_1^i$ and $t_1^i$
  in $H$ defined as $(s_1^i,s_1^D,a_1^i,s_1)\concat P_i \concat
  (t_1,b_1^i,t_1^D,t_1^i)$ to $S'$, where $S_1\concat S_2$ denotes the concatenation
  of the two sequences $S_1$ and $S_2$. Similarly, for every $i$
  with $1 \leq i \leq n_2$, we add the path between $s_2^i$ and $t_2^i$
  in $H$ defined as $(s_2^i,s_2^D,a_2^i,s_2)\concat Q_i \concat
  (t_2,b_2^i,t_2^D,t_2^i)$ to $S'$.
  Finally, for every $i$
  with $1 \leq i \leq n_3$, we add the path between $s_3^i$ and $t_3^i$
  in $H$ defined as $(s_3^i,s_3^D,a_3^i,s_3)\concat R_i \concat
  (t_3,b_3^i,t_3^D,t_3^i)$ to $S'$. Because all the path added to
  $S'$ are pairwise edge disjoint and we added a path for every
  terminal pair in $P$, this shows that $S'$ is a solution for
  $\III'$.

  Towards showing the reverse direction, let $S'$ be a solution for
  $\III'$, i.e., a set of pairwise edge disjoint paths containing one
  path, denoted by $P_p$ for every terminal pair $p \in P$. Observe
  that if $p=(s_1^i,t_1^i)$ for some $i$ with $1 \leq i \leq n_1$,
  then $P_p$ contains a path between $s_1$
  and $t_1$ in $G$. The same applies if $p=(s_2^i,t_2^i)$ for some $i$
  with $1 \leq i \leq n_2$ and $p=(s_3^i,t_3^i)$ for some $i$
  with $1 \leq i \leq n_3$. Hence the set $S$
  containing the restriction of every path $P_p$ in $S'$ to $G$
  contains $n_1$ paths between $s_1$ and $t_1$, $n_2$ paths between
  $s_2$ and $t_2$ in $G$, and $n_3$ paths between $s_3$ and $t_3$,
  which are all pairwise edge disjoint. Hence
  $S$ is a solution for $\III$.
\end{proof}
}

\section{Conclusion}

Our results close a wide gap in the understanding of the complexity
landscape of EDP parameterized by structural parameterizations.
On the positive side we present three novel algorithms for the classical
EDP problem: a polynomial-time algorithm for instances with a
FVS of size one, an fpt-algorithm w.r.t. the treewidth and maximum degree of the input
graph, and an fpt-algorithm for instances that have a small
deletion set into small components in the augmented graph.
On the negative side we solve a long-standing open problem concerning
the complexity of EDP parameterized by the treewidth of the augmented
graph: unlike related multicut problems~\cite{GottlobL07}, EDP is
\W{1}-hard parameterized by the feedback vertex set number of the
augmented graph. 

\paragraph*{Acknowledgments}
Supported by the Austrian Science Fund (FWF), project P26696. Robert Ganian is also affiliated with FI MU, Brno, Czech Republic.

\bibliographystyle{plainurl}
\bibliography{bibliography}

\end{document}